
\documentclass[10pt,journal,onecolumn,final]{IEEEtran}

\usepackage[T1]{fontenc}


\usepackage{booktabs} 

%
\usepackage[a4paper]{geometry}
\usepackage{blkarray}
\usepackage{multirow}

\usepackage{cite}
\usepackage{amsmath,amssymb,amsthm,amsfonts}
\usepackage{mathtools}
\usepackage{caption,subcaption}
\usepackage{graphicx}
\graphicspath{{./}}
\usepackage{textcomp}
\usepackage{xcolor}
\usepackage{bm}

\usepackage[nameinlink]{cleveref}
\def\BibTeX{{\rm B\kern-.05em{\sc i\kern-.025em b}\kern-.08em
		T\kern-.1667em\lower.7ex\hbox{E}\kern-.125emX}}
\newtheorem{thm}{Theorem}
\newtheorem{lem}[thm]{Lemma}
\newtheorem{prop}[thm]{Proposition}

\DeclarePairedDelimiterX{\norm}[1]{\lVert}{\rVert}{#1}
\newtheorem{defn}{Definition}

\newtheorem{rem}{Remark}
\newtheorem{ass}{Assumption}
\usepackage{algorithm}
\usepackage{algpseudocode}

 \newtheoremstyle{problemstyle}
   {\topsep} 
   {\topsep} 
   {\itshape} 
   {}         
   {\bfseries} 
   {.}        
   {.5em}     
   {}         
   
\theoremstyle{problemstyle} 
\newtheorem{problem}{Problem}

%
\interdisplaylinepenalty=2500

\usepackage[cmintegrals]{newtxmath}
	\usepackage[normalem]{ulem}
	\usepackage{placeins}
	\hyphenation{op-tical net-works semi-conduc-tor}
	\usepackage[displaymath, mathlines]{lineno}
	
	%
	\begin{document}

		\title{Local Information-Theoretic Security via Euclidean Geometry\thanks{This work was presented in part at the 32nd European Signal Processing Conference (EUSIPCO), Lyon, France, 2024 \cite{AthanasakosEUSIPCO24}.}}
		%
		%
		\author{\IEEEauthorblockN{
				Emmanouil M.~Athanasakos, Nicholas~Kalouptsidis, and Hariprasad Manjunath}
			
			\thanks{E.M. Athanasakos and N.Kalouptsidis is with the Department of Informatics and Telecommunications, National and Kapodistrian University of Athens, Greece, e-mail: emathan@di.uoa.gr,kalou@di.uoa.gr.}%
			\thanks{H. Manjunath is with the Chanakya University, Bengaluru, India, e-mail: mhariprasadkansur@gmail.com}
		}
		%
		%

	\markboth{PREPRINT}
	{PREPRINT}
	%

	
	\maketitle
\begin{abstract}
	This paper introduces a methodology based on Euclidean information theory to investigate local properties of secure communication over discrete memoryless wiretap channels. We formulate a constrained optimization problem that maximizes a legitimate user's information rate while imposing explicit upper bounds on both the information leakage to an eavesdropper and the informational cost of encoding the secret message. By leveraging local geometric approximations, this inherently non-convex problem is transformed into a tractable quadratic programming structure. It is demonstrated that the optimal Lagrange multipliers governing this approximated problem can be found by solving a linear program. The constraints of this linear program are derived from Karush-Kuhn-Tucker conditions and are expressed in terms of the generalized eigenvalues of channel-derived matrices. This framework facilitates the derivation of an analytical formula for an approximate local secrecy capacity. Furthermore, we define and analyze a new class of secret local contraction coefficients. These coefficients, characterized as the largest generalized eigenvalues of a matrix pencil, quantify the maximum achievable ratio of approximate utility to approximate leakage, thus measuring the intrinsic local leakage efficiency of the channel. We establish bounds connecting these local coefficients to their global counterparts defined over true mutual information measures. The efficacy of the proposed framework is demonstrated through detailed analysis and numerical illustrations for both general multi-mode channels and the canonical binary symmetric wiretap channel.
\end{abstract}
	
	\begin{IEEEkeywords}
		Information-theoretic security, wiretap channel, Euclidean information theory, secrecy capacity, local approximation, secret contraction coefficient, generalized eigenvalue problem, information bottleneck.
	\end{IEEEkeywords}
	

\section{Introduction}\label{sec_I}
	\IEEEPARstart{T}{he} fundamental challenge of ensuring confidential data transmission in the presence of adversarial eavesdroppers has been a cornerstone of information theory since Shannon's foundational work on perfect secrecy \cite{Shannon} and Wyner's subsequent characterization of the wiretap channel \cite{Wyner75}. These seminal contributions by Wyner, and Csiszár and Körner \cite{Csiz_Kor}, established the theoretical limits of secure communication. A significant body of research has since focused on designing practical coding schemes to approach these limits, with notable progress in leveraging capacity-approaching codes such as LDPC \cite{ldpc_1}, polar codes \cite{ath_polar}, and turbo codes \cite{kolok_1}, as well as specialized constructions for specific channel models like the Gaussian wiretap channel \cite{ath_Sparcs,athanasakos2025sparse}. While this line of work is crucial for implementation, classical information-theoretic approaches often focus on asymptotic performance or involve complex optimizations that may not fully capture the operational nuances of contemporary systems, particularly those dealing with finite resources, constraints on encoding complexity, or the need for efficient and secure transmission of relatively small data units. In this work, we propose and develop a framework for what we term \textit{Local Information-Theoretic Security}. In contrast to classical approaches that characterize asymptotic performance, our focus is on the local behavior of secrecy metrics within a neighborhood of a fixed operating point. Specifically, we study the performance trade-offs under small perturbations of a reference input distribution. This local analysis allows us to move beyond numerical black-box solutions and instead derive analytical expressions for performance sensitivities, revealing the underlying structure of secrecy in a regime relevant to finite-resource communication systems.
		
	This paper confronts this challenge by systematically employing Euclidean Information Theory (EIT) \cite{Borade_1, Zheng_1, Zheng_2} to investigate local phenomena in secure communications over wiretap channels. EIT, rooted in the differential geometry of statistical manifolds, provides a rigorous methodology for developing local approximations of information-theoretic measures. By approximating quantities such as Kullback-Leibler (KL) divergence and mutual information with quadratic forms of perturbation vectors defined around a reference operating point, EIT facilitates the transformation of intractable non-linear problems into more structured algebraic domains, often yielding problems amenable to linear algebra or quadratic programming techniques and thereby enabling new analytical perspectives. Furthermore, this approach not only yields computable approximations but, more importantly, reveals the fundamental connection between local secrecy performance and the spectral theory of the underlying channel matrices.
		
	The impetus for exploring such local analytical tools is also driven by the characteristics of emerging communication paradigms. Scenarios involving the transmission of limited information, such as control signals in cyber-physical systems, state updates in IoT networks, critical alert messages, or elements of cryptographic protocols, demand not only security but also high efficiency in terms of encoding resources and latency~\cite{Durisi_20, FengPoor_2022}. For these systems, understanding the achievable secure performance under specific, possibly restrictive, operational constraints is crucial. Furthermore, there is a theoretical need for metrics beyond asymptotic capacity that can quantify the intrinsic efficiency of the secrecy mechanisms themselves, for instance, by relating the utility gained by the legitimate recipient to the information leaked or the resources used in the encoding process.
		
	Our research addresses these needs by developing a comprehensive EIT-based framework centered on a novel constrained optimization problem, termed as the Secret Information Coupling (SIC) problem. The SIC problem is formulated to maximize the legitimate user's information rate while imposing explicit upper bounds on both the information leakage to an eavesdropper and on the encoding rate. This latter constraint is particularly significant as it models the efficient use of channel resources for transmitting limited information and aligns naturally with the local perturbation philosophy inherent in EIT. A preliminary version of this work, which introduced the local approximation of secrecy capacity using some of these concepts, was presented in \cite{AthanasakosEUSIPCO24}. The SIC problem, with its interplay of three distinct mutual information terms under various structural and operational constraints, presents a rich information-theoretic optimization challenge.	

	The SIC problem formulation shares conceptual parallels with the Information Bottleneck (IB) principle \cite{TishbyPereiraBialek00} and the broader class of Privacy-Utility Tradeoff (PUT) problems \cite{MakhdoumiSalamatianFawazMedard14}. These frameworks also contend with balancing information utility against some form of information restriction. For instance, the authors in \cite{SreekumarGunduz19} analyze an optimal PUT problem under an $I(X;U) \le R$ rate constraint using exact mutual information measures, establishing conditions for perfect privacy and characterizing the utility-leakage slope for infinitesimal leakage.  In a related vein, Zamani et al. \cite{Zamani_eit} explore data disclosure mechanisms with non-zero leakage guarantees under a strong $\chi^2$-privacy criterion, which is particularly relevant as the $\chi^2$-divergence is closely connected to the quadratic forms used in EIT. Other works have explored PUT in various practical contexts, such as linear regression under noise \cite{showkatbakhsh2019privacy} or for long-term conversation analysis \cite{pohlhausen2024long}, and have sought tight bounds in differential privacy settings \cite{geng2020tight,kopf11}.  While EIT has been applied to various other privacy contexts \cite{Tan_1, Tan_2, Zhou}, our approach is distinguished by its systematic application of local geometric approximations, which allows us to derive new analytical expressions for local capacity and to introduce novel structural coefficients.
		
	The concept of contraction coefficients, which describe how information is preserved or dissipated through transformations, is central to this work and is well-studied in information theory \cite{Poliyanski,Polyanskiy_notes,Polyanskiy_16, Anantharam_13} and within EIT for standard channels \cite{Zheng_2, Makur_1, Makur_20}. Recent work has also established linear bounds for these coefficients for general $f$-divergences \cite{makur2018linear}. Our work extends this line of inquiry by identifying a relationship between EIT-approximated secrecy and notions akin to hypercontractivity. This direction is related to efforts that use stronger, alternative measures of statistical dependence for privacy and secrecy. For example, Asoodeh et al. employ the Hirschfeld-Gebelein-R\'{e}nyi maximal correlation \cite{Renyi, Anantharam_12, Anathram_13_2} to analyze information extraction under privacy constraints \cite{asoodeh2016information, Asoodeh}. Building on this, Li and El Gamal define and analyze maximal correlation secrecy for discrete memoryless channels \cite{Li_Gamal, li2017maximal}. Further concepts such as maximal leakage \cite{Khisti19} have been introduced as robust privacy metrics, and the asymptotics of covert communication have also been analyzed using maximal correlation \cite{tahmasbi2018information}. A core element of our contribution is the development of a specialized secret local contraction coefficient specifically tailored to and derived from the EIT secrecy framework, providing a new tool in this landscape of advanced secrecy metrics. The utility of this framework is demonstrated on the binary symmetric wiretap channel (BSWC).
		
	The primary theoretical contributions of this paper are therefore as follows. We first establish the EIT-approximated SIC problem, transforming the original non-convex formulation into a quadratic programming structure. A key result of this work is the demonstration that the optimal Lagrange multipliers for this approximated problem, which in turn determine the capacity, can be found by solving a general Linear Program (LP). The constraints of this LP are rigorously derived from the Karush-Kuhn-Tucker (KKT) conditions and are expressed in terms of the generalized eigenvalues of the channel matrices. This LP framework facilitates the derivation of an explicit analytical formula for an \textit{approximate local secrecy capacity}. The LP framework allows us to fully characterize the behavior of the local secrecy capacity across distinct operational regimes. This provides clear design principles by showing how the optimal signaling strategy shifts depending on whether the system is constrained by encoding resources or by the secrecy requirement. We also demonstrate that SIC is structurally equivalent to the IB with side information problem, also called PUT. Consequently the approximations and methods developed for SIC directly apply to PUT and IB.
		
	Secondly, we introduce and analyze a new information-theoretic quantity termed as the \textit{secret local contraction coefficient}. We characterize it as the largest generalized eigenvalue of a channel-derived matrix pencil and show that it quantifies the maximum achievable local efficiency, defined as the ratio of the approximated utility to leakage. Furthermore, we establish upper and lower bounds connecting this EIT-derived local coefficient to its global counterpart defined over exact mutual information measures.
	The behavior of the local secrecy capacity and these coefficients is thoroughly investigated. This includes an analysis of various special conditions. To validate and illustrate our framework, we provide comprehensive numerical results. The BSWC is employed as a canonical example to compare the local secrecy capacity with the true one and demonstrate the different operational regimes. Furthermore, we validate the general LP formulation for a multi-mode channel with a larger alphabet size, confirming the solution structure by comparing the output of a standard LP solver with an exhaustive search of the feasible vertices.
		
	The remainder of this paper is organized as follows. Section \ref{sec_2} formally introduces the problem and its information-theoretic context. Section \ref{sec:eit_preliminaries} reviews the essential EIT concepts and approximations that underpin our analysis. In Section \ref{sec:local_approx_slic}, we apply EIT to transform the SIC problem into its analytically tractable quadratic form. Section \ref{sec:approx_capacity_solution_structure} presents the core theoretical results, where we analyze this approximated problem to derive optimality conditions and the approximate local secrecy capacity. Section \ref{sec:numerical_illustations} provides comprehensive numerical illustrations, including a validation of the EIT framework and a detailed case study on the BSWC. The secret local contraction coefficients are formally defined and analyzed in Section \ref{sec:secret_contraction_coeffs}. Finally, Section \ref{sec:conclusion} concludes the paper and discusses future research. Detailed mathematical proofs are provided in the appendices.

      
        \section{ THE SECURE INFORMATION COUPLING PROBLEM}\label{sec_2}
        This section lays the information-theoretic foundation for our analysis. We first briefly revisit the classical discrete memoryless wiretap channel (DM-WTC) and the definition of secrecy capacity. Subsequently, we introduce the specific constrained optimization problem, which forms the core of our investigation. 
        \subsection{The Discrete Memoryless Wiretap Channel and Secrecy Capacity}
The transmitter (Alice) wishes to send reliably a confidential message to a legitimate receiver (Bob) through a noisy DMC while avoiding information leakage to an eavesdropper (Eve). It was shown in the seminal works of \cite{Csiz_Kor, Wyner75} that rates below the secrecy capacity of the channel are achievable; that is, there exists a sufficiently long code (encoder/decoder pair) that can keep the error probability at the receiver as well as the information leakage at the eavesdropper below any prespecified positive level. Moreover, for a general DM-WTC, the secrecy capacity is given by the single-letter expression \cite{Blo_Bar}:
\begin{equation} \label{eq:cs_general}
	C_{\mathrm{S}} = \max_{\{P_U(u)\}, \{P_{X|U}(x|u)\}} [I(U;Y) - I(U;Z)]
\end{equation}
where the maximization is over all PMFs $\{P_U(u)\}$ and $\{P_{X|U}(x|u)\}$ such that the Markov chain $U \to X \to (Y,Z)$ holds. The auxiliary variable $ U $ allows for sophisticated stochastic encoding strategies essential for achieving capacity in general wiretap channels. A common special case is the physically degraded wiretap channel, where $ X \to Y \to Z $ forms a Markov chain. In this scenario, the auxiliary variable $ U $ can be identified with $ X $, and the secrecy capacity simplifies to $C_{\mathrm{S}} =  \max_{\{P_X(x)\}} [I(X;Y) - I(X;Z)] $.

\subsection{The Secret Information Coupling Problem} \label{subsec:slic_problem}
The above secrecy capacity result is asymptotic with respect to the code blocklength. In this work, we investigate secure communications from a perspective that stresses the efficient transmission of limited information under explicit operational constraints. Assuming an optimal deterministic decoding rule, such as maximum a posteriori decoding, our emphasis focuses on the design of stochastic encoders that achieve reliable transmission subject to rate and leakage constraints. This leads to the formulation of the SIC problem, which is particularly suited for analysis using the local approximation techniques of the EIT framework. A central result of our analysis is that the solution structure of the approximated problem is governed by an LP, revealing the linear nature of the local trade-offs. 
Let $ U $ be a discrete random variable representing the confidential message Alice intends to transmit, with PMF $ P_U(u) $. Alice encodes $ U $ into a channel input $ X \in \mathcal{X} $ according to a conditional PMF $ P_{X|U}(x|u) $, resulting in the Markov chain $ U \to X \to (Y,Z) $. We consider the following optimization problem:

\begin{problem} \label{prob:slic_original}
	Given a reference input distribution $ P_X(x) \in \mathrm{relint}(\mathcal{P(X)}) $ and the channel $P_{Y,Z|X}$, the objective is to find PMFs $ P_U(u) $ and $ P_{X|U}(x|u) $ that:
	\begin{align}
		\underset{ \{P_U(u)\}, \{P_{X|U}(x|u)\} }{\text{maximize}} \quad & I(U;Y) \label{eq:slic_prob_obj} \\
		\text{subject to} \quad & I(U;X) \le R \label{eq:slic_prob_rate} \\
		& I(U;Z) \le \Theta \label{eq:slic_prob_leakage} \\
		& \sum_{u \in \mathcal{U}} P_U(u) P_{X|U}(x|u) = P_X(x), \quad \forall x \in \mathcal{X} \label{eq:slic_prob_consistency} \\
		& \sum_{x \in \mathcal{X}} P_{X|U}(x|u) = 1, \quad \forall u \in \mathcal{U} \label{eq:slic_prob_pxu_norm} \\
		& P_{X|U}(x|u) \ge 0, \quad \forall x \in \mathcal{X}, u \in \mathcal{U} \label{eq:slic_prob_pxu_nonneg} \\
		& \sum_{u \in \mathcal{U}} P_U(u) = 1 \label{eq:slic_prob_pu_norm} \\
		& P_U(u) \ge 0, \quad \forall u \in \mathcal{U} \label{eq:slic_prob_pu_nonneg}
	\end{align}
	where $ R > 0 $ and $ \Theta > 0 $ are predefined thresholds. The leakage constraint \eqref{eq:slic_prob_leakage} on $ I(U;Z) $ should be interpreted in the spirit of a non-asymptotic total leakage bound for the small amount of information represented by $ U $.
\end{problem}
The SIC problem, as formulated above, seeks to optimize the communication strategy defined by the PMFs $ P_U(u) $ and $ P_{X|U}(x|u) $. The primary objective, given by \eqref{eq:slic_prob_obj}, is to maximize the utility $ I(U;Y) $, which quantifies the rate of reliable information successfully conveyed from the secret message $U$ to the legitimate recipient. This maximization is performed under a set of carefully chosen operational constraints. The encoding rate constraint \eqref{eq:slic_prob_rate} limits the mutual information between the message and the channel input, which can be interpreted as a budget on the complexity or bandwidth used to modulate $U$ onto the channel input $X$. In the context of EIT, a small $R$ implies that the conditional distributions $P_{X|U=u}(x)$ remain in a local neighborhood of the reference distribution $P_X(x)$.
Concurrently, the leakage limit constraint \eqref{eq:slic_prob_leakage} directly bounds the total information about the message that the eavesdropper can obtain from her observation $Z$. A smaller threshold $ \Theta $ signifies a more stringent secrecy requirement, approaching the ideal of perfect secrecy and analogous to the total leakage parameter of the strong secrecy \cite{Maurer}. The framework is grounded by the input distribution consistency constraint \eqref{eq:slic_prob_consistency}, which ensures that the input distribution, when averaged over all messages, conforms to a predefined reference $ P_X(x) $. This reference serves as the operating point around which local approximations will be developed. Finally, constraints \eqref{eq:slic_prob_pxu_norm} through \eqref{eq:slic_prob_pu_nonneg} are the standard probability properties, ensuring that $ P_{X|U}(x|u) $ and $ P_U(u) $ are valid conditional and marginal PMFs, respectively. While the general problem of maximizing $ I(U;Y) $ subject to information-theoretic constraints is often non-convex and computationally challenging, the EIT framework allows for the approximation of these mutual information terms by quadratic forms, transforming Problem \ref{prob:slic_original} into a more tractable linear algebraic structure, as will be detailed in Section IV.

\begin{rem} \label{rem:dpi_interplay}
The Markov chain $ U \to X \to Z $ implies the data processing inequality $ I(U;Z) \le I(U;X) $. Consequently, if $ R \le \Theta $, then constraint \eqref{eq:slic_prob_rate} implies $ I(U;Z) \le I(U;X) \le R \le \Theta $, making the explicit leakage constraint \eqref{eq:slic_prob_leakage} redundant. However, in practical scenarios where a non-trivial amount of information is encoded and strong secrecy is desired, it is common to encounter $ \Theta < R $. In such cases, both constraints \eqref{eq:slic_prob_rate} and \eqref{eq:slic_prob_leakage} shape the optimization landscape \cite{AthanasakosEUSIPCO24}.
\end{rem}
\subsection{Fundamental Properties and Bounds for the SIC Problem} \label{subsec:slic_properties}
Before applying the EIT approximation techniques, we establish some fundamental properties and straightforward bounds related to the original Problem \ref{prob:slic_original}. These help in understanding the inherent limitations and its behavior. Let $ I^*(R, \Theta) $ denote the optimal value of the objective function in Problem \ref{prob:slic_original} for given parameters $ R $ and $ \Theta $.
\begin{lem} \label{lem:slic_fundamental_props}
Any feasible solution $ (P_U, P_{X|U}) $ to the Problem \ref{prob:slic_original} satisfy the following properties:
\begin{enumerate}
    \item The achievable utility $ I(U;Y) $ is upper bounded by the encoding rate constraint $ R $:
    \begin{equation} \label{eq:utility_cap_by_R}
    I(U;Y) \le R.
    \end{equation}
    \item The information leakage $ I(U;Z) $ is also upper bounded by the encoding rate constraint $ R $:
    \begin{equation} \label{eq:leakage_cap_by_R}
    I(U;Z) \le R.
    \end{equation}
    \item The optimal utility $ I^*(R, \Theta) $ is:
    \begin{enumerate}
        \item A non-decreasing function of the encoding rate $ R $ (for a fixed $ \Theta $).
        \item A non-decreasing function of the leakage threshold $ \Theta $ (for a fixed $ R $).
    \end{enumerate}
\end{enumerate}
\end{lem}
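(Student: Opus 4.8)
The plan is to treat the three parts independently, since each rests on a different elementary principle. For the first two bounds, the key observation is that the Markov chain $U \to X \to (Y,Z)$ assumed in Problem~\ref{prob:slic_original} induces the sub-chains $U \to X \to Y$ and $U \to X \to Z$. I would invoke the data processing inequality on each: from $U \to X \to Y$ we obtain $I(U;Y) \le I(U;X)$, and from $U \to X \to Z$ we obtain $I(U;Z) \le I(U;X)$. Combining either inequality with the encoding rate constraint \eqref{eq:slic_prob_rate}, namely $I(U;X) \le R$, immediately yields $I(U;Y) \le R$ and $I(U;Z) \le R$, establishing \eqref{eq:utility_cap_by_R} and \eqref{eq:leakage_cap_by_R}. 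It is worth noting explicitly that the second bound follows from the rate constraint alone and does not invoke the leakage constraint \eqref{eq:slic_prob_leakage}; this complements Remark~\ref{rem:dpi_interplay}.

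For the monotonicity claims in part 3, the argument is a standard feasible-set inclusion. Let $\mathcal{F}(R,\Theta)$ denote the feasible set of Problem~\ref{prob:slic_original}, i.e. all pairs $(P_U, P_{X|U})$ satisfying \eqref{eq:slic_prob_rate}--\eqref{eq:slic_prob_pu_nonneg}. I would show that relaxing either threshold enlarges this set. Fixing $\Theta$ and taking $R_1 \le R_2$, any $(P_U, P_{X|U}) \in \mathcal{F}(R_1,\Theta)$ satisfies $I(U;X) \le R_1 \le R_2$ while all remaining constraints are untouched, so $\mathcal{F}(R_1,\Theta) \subseteq \mathcal{F}(R_2,\Theta)$. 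Maximizing the same objective $I(U;Y)$ over a larger set cannot decrease the optimum, hence $I^*(R_1,\Theta) \le I^*(R_2,\Theta)$. The argument for part 3(b) is symmetric: fixing $R$ and taking $\Theta_1 \le \Theta_2$, the leakage constraint $I(U;Z) \le \Theta_1 \le \Theta_2$ gives $\mathcal{F}(R,\Theta_1) \subseteq \mathcal{F}(R,\Theta_2)$, and therefore $I^*(R,\Theta_1) \le I^*(R,\Theta_2)$.

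Since every step reduces to either the data processing inequality or a one-line set inclusion, I do not anticipate any genuine obstacle. The only subtlety is bookkeeping, namely verifying that perturbing a single threshold leaves all other constraints—in particular the consistency constraint \eqref{eq:slic_prob_consistency} and the simplex conditions—entirely unaffected, so that the claimed inclusion of feasible sets is exact. I would also remark in passing that the feasible set is always non-empty, since taking $U$ deterministic with $P_{X|U}(x\mid u_0)=P_X(x)$ trivially satisfies \eqref{eq:slic_prob_rate}--\eqref{eq:slic_prob_pu_nonneg}; this guarantees that $I^*(R,\Theta)$ is well defined and nonnegative, so the monotonicity comparison is meaningful.
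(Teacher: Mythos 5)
Your proof is correct and follows essentially the same route as the paper: the data processing inequality on the sub-chains $U \to X \to Y$ and $U \to X \to Z$ combined with the rate constraint \eqref{eq:slic_prob_rate} for parts (1)--(2), and feasible-set inclusion for the monotonicity claims in part (3). The only minor difference is that the paper additionally invokes the Weierstrass extreme value theorem (continuity of $I(U;Y)$ over the compact feasible set) to guarantee the maximum is attained, whereas you instead exhibit an explicit feasible point to establish non-emptiness; for the monotonicity comparison either suffices, since the optimum over nested sets is monotone whether or not it is attained.
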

\begin{proof}
The upper bounds in (1) and (2) follow directly from the Data Processing Inequality (DPI) and the problem constraints.
(3) To establish the existence and monotonicity of the optimal utility $I^*(R, \Theta)$, we first prove that the maximum is guaranteed to exist. The mutual information objective function, $I(U;Y)$, is a continuous function of its arguments, the PMFs $P_U(u)$ and $P_{X|U}(x|u)$. The feasible set of these PMFs, defined by the problem constraints, is a closed and bounded subset of a finite-dimensional Euclidean space, and is therefore a compact set. By the Weierstrass Extreme Value Theorem, a continuous function on a compact set always attains its maximum. Thus, the optimal value $I^*(R, \Theta)$ is well-defined and always exists, provided the feasible set is non-empty.

\quad (a) Consider two encoding rates $R_1 \le R_2$. Any pair of distributions $(P_U, P_{X|U})$ feasible for $R_1$ is also feasible for $R_2$, since $I(U;X) \le R_1 \le R_2$. Thus, the set of feasible solutions for $(R_2, \Theta)$ is a superset of the feasible solutions for $(R_1, \Theta)$. The maximization of the continuous function $I(U;Y)$ over a larger compact set cannot result in a smaller optimal value, hence $I^*(R_1, \Theta) \le I^*(R_2, \Theta)$.

\quad (b) A similar argument applies for $\Theta$. If $\Theta_1 \le \Theta_2$, any solution feasible under the constraint $I(U;Z) \le \Theta_1$ is also feasible under $I(U;Z) \le \Theta_2$. Thus, the feasible set for $(R, \Theta_2)$ contains the set for $(R, \Theta_1)$, leading to $I^*(R, \Theta_1) \le I^*(R, \Theta_2)$.
\end{proof}

This lemma highlights the fact that the utility is fundamentally limited by how much information about $ U $ is encoded into $ X $, which also indirectly bounds the leakage. Furthermore, allowing a higher encoding rate or more leakage cannot decrease the maximum achievable utility. These basic relationships define the operational boundaries of the SIC problem before any approximation techniques are employed.

\subsection{Analogy to the information bottleneck with side information and the privacy-utility trade-off} \label{subsec:put_analogy}

The SIC Problem is closely related to the PUT framework~\cite{MakhdoumiSalamatianFawazMedard14} and the IB with side information \cite{ChechikTishby_NIPS2002}. This connection provides valuable context and highlights inherent algorithmic challenges.
\begin{defn} \label{def:put_general}
In a canonical PUT problem, an agent wishes to release a representation $ U_{\mathrm{rep}} $ of an observed variable $ X_{\mathrm{obs}} $. $ X_{\mathrm{obs}} $ is correlated with a utility-relevant variable $ Y_{\mathrm{util}} $ and a private (sensitive) variable $ S_{\mathrm{priv}} $. The goal is to design the mapping $ P_{U_{\mathrm{rep}}|X_{\mathrm{obs}}}(u_{\mathrm{rep}}|x_{\mathrm{obs}}) $ (and possibly $ P_{U_{\mathrm{rep}}}(u_{\mathrm{rep}}) $) to maximize a utility metric, typically $ I(U_{\mathrm{rep}}; Y_{\mathrm{util}}) $, subject to a privacy constraint, $ I(U_{\mathrm{rep}}; S_{\mathrm{priv}}) \le \Theta_S $, and often a compression constraint on the representation, $ I(U_{\mathrm{rep}}; X_{\mathrm{obs}}) \le R_X $. The underlying Markov structure is $ (Y_{\mathrm{util}}, S_{\mathrm{priv}}) \to X_{\mathrm{obs}} \to U_{\mathrm{rep}} $.
\end{defn}

The SIC Problem exhibits a strong structural analogy to the general PUT problem outlined in Definition \ref{def:put_general}. This correspondence can be seen by making the following identifications: (1) The message $ U $ in the SIC problem plays a role analogous to the representation $ U_{\mathrm{rep}} $ in the PUT framework. (2) The channel input $ X $ in SIC, generated based on message $U$, corresponds to the \text{observed variable $ X_{\mathrm{obs}} $} in PUT, from which $U_{\mathrm{rep}}$ is derived. The structural similarity is due to the property that if $X \to Y \to Z$ is Markov, $Z \to Y \to X$ is also Markov. the mutual information constraint $I(U;X) \le R$, or $I(U_{\mathrm{rep}}; X_{\mathrm{obs}}) \le R_X$, serves a similar purpose of limiting the informational content linking these two variables. (3) Bob's observation $ Y $ in SIC is the utility-relevant variable $ Y_{\mathrm{util}} $ in PUT. (4) Eve's observation $ Z $ in SIC acts as the private variable $ S_{\mathrm{priv}} $ in PUT.

Furthermore, the consistency constraint \eqref{eq:slic_prob_consistency} in the SIC problem, aligns with common settings in IB/PUT problems where the distribution of the observed variable is assumed to be given or fixed. This analogy connects the SIC problem to a broader class of information-theoretic optimization problems concerned with balancing utility, privacy/secrecy, and complexity/rate. 
The structural similarity to PUT problems implies that SIC problem generally faces similar algorithmic complexities. For a fixed $ P_U(u) $, the objective $ I(U;Y) $ is a convex function of $ P_{X|U}(x|u) $ or $ P_{UX}(u,x) $, while the constraint functions $ I(U;X) $ and $ I(U;Z) $ are also convex in $ P_{X|U}(x|u) $. Maximizing a convex function or minimizing a concave function, over a convex set, defined by convex inequality constraints, is a non-convex optimization problem and generally NP-hard \cite{Boyd}.

The inherent computational complexity of globally solving such multi-term information-theoretic optimization problems motivates the search for alternative analytical approaches and tractable approximations. While iterative techniques analogous to the Blahut-Arimoto algorithm \cite{Blahut72, Arimoto72} can find local optima for certain classes of these problems, such as the IB \cite{TishbyPereiraBialek00}, they typically do not yield closed-form insights into capacity-like terms or local system behavior, and their application to multi-constraint problems involving three or more mutual information terms can be intricate. Specific analyses for perfect privacy, such as the work in \cite{SreekumarGunduz19}, identifies conditions for positive utility under zero leakage, address important special cases but often rely on the specific structure of exact zero leakage.

Our work takes a different path by employing EIT. As will be detailed, EIT provides a framework for developing local approximations of the mutual information terms in SIC problem. This transforms the original problem into a domain characterized by quadratic forms and linear algebra, which is more amenable to analytical investigation and can yield explicit approximate expressions for performance metrics like local secrecy capacity and gain structural insights. The EIT preliminaries essential for this transformation are presented in the following section.

\section{Euclidean Information Theory Preliminaries} \label{sec:eit_preliminaries}

The SIC problem, as formulated in Problem \ref{prob:slic_original}, involves the optimization of mutual information terms which are generally non-linear and can lead to intractable problems. To address this, we employ a framework that provides local approximations for information-theoretic quantities by leveraging the geometric properties of the space of probability distributions. This section reviews the key EIT concepts and results that will be utilized in subsequent sections to analyze the SIC problem. The core idea of EIT is that for probability distributions that are close to a reference distribution, the KL divergence, and consequently mutual information, can be approximated by weighted squared Euclidean norms of perturbation vectors.

\subsection{Local Approximations of Probability Distributions and KL Divergence} \label{subsec:eit_local_approx_dist}

Let $ P_X(x) $ be a reference PMF over a finite alphabet $ \mathcal{X} $, such that $ P_X(x) > 0 $ for all $ x \in \mathcal{X} $ , i.e., $ P_X \in \mathrm{relint}(\mathcal{P(X)}) $. Consider another PMF $ Q_X(x) $ over the same alphabet that is a small perturbation of $ P_X(x) $. We can write $ Q_X(x) $ as:
\begin{equation} 
	\label{eq:eit_perturbation}
    Q_X(x) = P_X(x) + \epsilon J_X(x)
\end{equation}
where $ \epsilon \in (0,1) $ is a small positive scalar controlling the magnitude of the perturbation, and $ J_X(x) $ is a perturbation vector. For $ Q_X(x) $ to be a valid PMF, it must satisfy $ \sum_{x \in \mathcal{X}} Q_X(x) = 1 $ and $ Q_X(x) \ge 0 $ for all $ x $.
The first condition implies that the perturbation vector $ J_X(x) $ must satisfy:
\begin{equation} \label{eq:eit_J_sum_zero}
    \sum_{x \in \mathcal{X}} J_X(x) = 0.
\end{equation}
The non-negativity $ Q_X(x) \ge 0 $ requires $ P_X(x) + \epsilon J_X(x) \ge 0 $. For sufficiently small $ \epsilon $, this condition holds if $ J_X(x) $ is appropriately bounded.

In the context of conditional distributions $ P_{X|U}(x|u) $ that are perturbations of $ P_X(x) $ for different values of $ u \in \mathcal{U} $, we write:
\begin{align}
	\label{eq:conditional_perturbation}
	P_{X|U}(x|u) = P_X(x) + \epsilon J_X(x|u),
\end{align}
where each $ J_X(x|u) $ satisfies $ \sum_x J_X(x|u) = 0 $.
The consistency constraint \eqref{eq:slic_prob_consistency} from Problem \ref{prob:slic_original} implies:
\begin{equation} 
	\label{eq:eit_J_consistency}
    \sum_{u \in \mathcal{U}} P_U(u) J_X(x|u) = 0, \quad \forall x \in \mathcal{X}.
\end{equation}

It is often convenient to work with a scaled and weighted perturbation vector $ L_X(x) $ or $ L_X(x|u) $, for conditional distributions, defined as \cite{Zheng_2}:
\begin{equation} 
	\label{eq:eit_L_definition}
    L_X(x|u) = \frac{J_X(x|u)}{\sqrt{P_X(x)}}.
\end{equation}
Using \eqref{eq:eit_L_definition}, \eqref{eq:conditional_perturbation} is written as, 
\begin{equation} 
	\label{eq:eit_pxu_perturbation_sec3} 
	P_{X|U}(x|u) = P_X(x) + \epsilon \sqrt{P_X(x)} L_X(x|u),
\end{equation}
where $ \epsilon \in (0,1) $.
The condition \eqref{eq:eit_J_sum_zero} for $ J_X(x|u) $ translates to an orthogonality condition for $ L_X(x|u) $:
\begin{equation} 
	\label{eq:eit_L_ortho_sqrtP}
    \sum_{x \in \mathcal{X}} \sqrt{P_X(x)} L_X(x|u) = 0.
\end{equation}
This means $ L_X(\cdot|u) $ is orthogonal to the vector $ (\sqrt{P_X(x_1)}, \dots, \sqrt{P_X(x_{|\mathcal{X}|})})^T $.
Similarly, the consistency condition \eqref{eq:eit_J_consistency} becomes:
\begin{equation} 
	\label{eq:eit_L_consistency}
    \sum_{u \in \mathcal{U}} P_U(u) \sqrt{P_X(x)} L_X(x|u) = 0, \quad \forall x \in \mathcal{X}.
\end{equation}
Condition \eqref{eq:eit_L_ortho_sqrtP} translates to an orthogonality condition for each scaled perturbation vector $ \mathbf{L}_u = (L_X(x_1|u), \dots, L_X(x_{|\mathcal{X}|}|u))^T $, that is,
\begin{equation} 
	\label{eq:eit_L_ortho_sqrtP_final}
	\sum_{x \in \mathcal{X}} \sqrt{P_X(x)} L_X(x|u) = \mathbf{L}_u^T \mathbf{\sqrt{P_X}} = 0, \quad \forall u \in \mathcal{U},
\end{equation}
where $ \mathbf{\sqrt{P_X}} $ is the vector with components $ \sqrt{P_X(x)} $. This signifies that each $ \mathbf{L}_u $ must be orthogonal to $ \mathbf{\sqrt{P_X}} $ to represent a valid zero-sum perturbation $J_X(x|u)$. Furthermore, the consistency constraint~\eqref{eq:slic_prob_consistency} from the SIC problem, must hold. Substituting~\eqref{eq:eit_pxu_perturbation_sec3}:
\begin{align*}
	\sum_{u \in \mathcal{U}} P_U(u) [P_X(x) + \epsilon \sqrt{P_X(x)} L_X(x|u)] &= P_X(x) \\
	P_X(x) \sum_{u \in \mathcal{U}} P_U(u) + \epsilon \sqrt{P_X(x)} \sum_{u \in \mathcal{U}} P_U(u) L_X(x|u) &= P_X(x).
\end{align*}
Since $ \sum_{u \in \mathcal{U}} P_U(u) = 1 $ and $ \epsilon \neq 0 $, and $ P_X(x) > 0 $, this simplifies to:
\begin{equation} \label{eq:eit_L_consistency_componentwise_final}
	\sum_{u \in \mathcal{U}} P_U(u) L_X(x|u) = 0, \quad \forall x \in \mathcal{X}.
\end{equation}
In vector form, this means the expected perturbation vector must be the zero vector:
\begin{equation} 
	\label{eq:eit_L_consistency_vector_final}
	\sum_{u \in \mathcal{U}} P_U(u) \mathbf{L}_u = \mathbf{0}.
\end{equation}
The conditions \eqref{eq:eit_L_ortho_sqrtP_final} and \eqref{eq:eit_L_consistency_vector_final} are the key structural constraints imposed by the EIT framework on the perturbation vectors $ \mathbf{L}_u $ when transforming Problem \ref{prob:slic_original}. These ensure that $ P_{X|U}(x|u) $ are valid conditional PMFs for each $u$ and that they correctly average to the reference $ P_X(x) $, while $ \mathbf{L}_u $ captures the information deviations. These are precisely the constraints adopted in the formulation of the EIT-Approximated problem in Section \ref{sec:local_approx_slic}.

The KL divergence between a perturbed distribution $Q_X(x)$ in \eqref{eq:eit_perturbation} and the reference $P_X(x)$ is locally approximated using a second-order Taylor expansion \cite{CoverThomas06, Borade_1} as:
\begin{equation} 
	\label{eq:eit_kl_approx_J}
	D_{\mathrm{KL}}(Q_X || P_X) = \frac{\epsilon^2}{2} \sum_{x \in \mathcal{X}} \frac{J_X(x)^2}{P_X(x)} + O(\epsilon^3).
\end{equation}
Expressed in terms of the scaled perturbation vector $ \mathbf{L}_X $, this becomes:
\begin{equation} 
	\label{eq:eit_kl_approx_L}
	D_{\mathrm{KL}}(Q_X || P_X) = \frac{\epsilon^2}{2} \sum_{x \in \mathcal{X}} L_X(x)^2 + O(\epsilon^3) = \frac{\epsilon^2}{2} ||\mathbf{L}_X||^2 + O(\epsilon^3),
\end{equation}
where $ L_X(x) = J_X(x)/\sqrt{P_X(x)} $; and $ ||\mathbf{L}_X||^2 $ is the squared Euclidean norm of the vector $ \mathbf{L}_X $. This approximation highlights the local Euclidean geometry: KL divergence behaves locally like a squared Euclidean distance in the space of transformed perturbation vectors $ \mathbf{L}_X $. Furthermore, that \eqref{eq:eit_kl_approx_L} shares the same second-order approximation, rendering the KL divergence locally symmetric. The term $ \epsilon^2 ||\mathbf{L}_X||^2 $ is equivalent to the $ \chi^2 $-divergence in \cite{csiszar1967information}, $ \chi^2(Q_X, P_X) = \sum_x \frac{(Q_X(x)-P_X(x))^2}{P_X(x)} $.

\subsection{The Divergence Transfer Matrix} \label{subsec:eit_dtm_final_direct}

To understand how perturbations in the input distribution $P_X(x)$ propagate through a channel to affect the output distribution within the EIT framework, the concept of the Divergence Transfer Matrix (DTM) is essential \cite{Zheng_2,Huang_2, Ulukus}. Consider a DMC from $X$ to an output $Y$, defined by the conditional PMF $ P_{Y|X}(y|x) $. Let $ P_X(x) $ be the reference input distribution, where $ P_X(x) > 0 $ for all $x$, and $ P_Y(y) = \sum_x P_X(x) P_{Y|X}(y|x) $ be the corresponding reference output distribution, assuming $ P_Y(y) > 0 $ for all $y$. If the input distribution for a specific $U=u$, $ P_{X|U}(x|u) $, is a perturbation of $ P_X(x) $ given in~\eqref{eq:eit_pxu_perturbation_sec3}, with $ \mathbf{L}_u $ satisfies \eqref{eq:eit_L_ortho_sqrtP_final} and \eqref{eq:eit_L_consistency_vector_final}, this induces a perturbation in the corresponding output distribution $ P_{Y|U}(y|u) = \sum_x P_{X|U}(x|u)P_{Y|X}(y|x) $. In the EIT framework, this output perturbation, when scaled appropriately, can also be represented by a vector $ \mathbf{L}_{Y,u} $. The DTM, denoted $ B_{Y|X} $, provides the linear transformation relating these scaled perturbation vectors:
\begin{equation} 
	\label{eq:eit_LYu_equals_BYX_Lu}
	\mathbf{L}_{Y,u} = B_{Y|X} \mathbf{L}_u.
\end{equation}
The DTM $ B_{Y|X} $, which has dimensions $ |\mathcal{Y}| \times |\mathcal{X}| $, is defined as:
\begin{equation} 
	\label{eq:eit_dtm_definition_direct}
	B_{Y|X} = \mathrm{diag}(P_Y)^{-1/2} P_{Y|X} \mathrm{diag}(P_X)^{1/2},
\end{equation}
where $ \mathrm{diag}(P_A) $ denotes a diagonal matrix with the elements of the vector $ P_A $ on its main diagonal, and $ P_{Y|X} $ is the channel transition matrix.

The singular value decomposition (SVD) of the DTM is: $ B_{Y|X} = \sum_k \sigma_k \boldsymbol{\psi}_k^Y (\boldsymbol{\phi}_k^X)^T $, where $ \sigma_k $ are the singular values, and $ \boldsymbol{\psi}_k^Y $ (left singular vectors, basis for output perturbation space) and $ \boldsymbol{\phi}_k^X $ (right singular vectors, basis for input perturbation space) are the corresponding singular vectors.
For any valid DTM derived from a stochastic channel matrix and positive marginals, the largest singular value is $ \sigma_1 = 1 $. The corresponding right singular vector $ \boldsymbol{\phi}_1^X $ is proportional to $ \mathbf{\sqrt{P_X}} $, and the corresponding left singular vector $ \boldsymbol{\psi}_1^Y $ is proportional to $ \mathbf{\sqrt{P_Y}} $.
Input perturbations $ \mathbf{L}_u $ that are collinear with $ \boldsymbol{\phi}_1^X \propto \mathbf{\sqrt{P_X}} $ do not satisfy the EIT orthogonality condition $ \mathbf{L}_u^T \mathbf{\sqrt{P_X}} = 0 $ and thus do not represent new information relative to $P_X$. The information-bearing perturbations $ \mathbf{L}_u $ considered in EIT lie in the subspace orthogonal to $ \mathbf{\sqrt{P_X}} $, which is spanned by the remaining right singular vectors $ \{\boldsymbol{\phi}_k^X\}_{k \ge 2} $. The DTM describes how these specific perturbation components are transformed and scaled by the channel. Analogously, for Eve's channel, a DTM $ B_{Z|X} $ can be defined as:
\begin{equation} \label{eq:eit_dtm_bzx_direct}
	B_{Z|X} = \mathrm{diag}(P_Z)^{-1/2} P_{Z|X} \mathrm{diag}(P_X)^{1/2},
\end{equation}
where $ P_Z(z) = \sum_x P_X(x) P_{Z|X}(z|x) $. An input perturbation $ \mathbf{L}_u $ will induce an output perturbation $ \mathbf{L}_{Z,u} = B_{Z|X} \mathbf{L}_u $ for Eve. These DTMs are crucial for approximating $ I(U;Y) $ and $ I(U;Z) $ in terms of the input perturbations $ \mathbf{L}_u $, as will be detailed in the next subsection.

\subsection{EIT Approximations for Mutual Information Terms} \label{subsec:eit_mi_approximations_formal}

We now state the EIT approximations for the mutual information terms in Problem \ref{prob:slic_original}. 
\begin{lem} \label{lem:approx_IUX_formal}
	Let $ P_{X|U}(x|u) $ be defined as in \eqref{eq:eit_pxu_perturbation_sec3} with corresponding scaled perturbation vectors $ \mathbf{L}_u $. The mutual information $ I(U;X) $ can be approximated as:
	\begin{equation} \label{eq:eit_IUX_approx_lemma}
		I(U;X) = \sum_{u \in \mathcal{U}} P_U(u) D_{\mathrm{KL}}(P_{X|U=u} || P_X) \approx \frac{\epsilon^2}{2} \sum_{u \in \mathcal{U}} P_U(u) ||\mathbf{L}_u||^2 = \frac{\epsilon^2}{2} \mathbb{E}_U[||\mathbf{L}_U||^2].
	\end{equation}
\end{lem}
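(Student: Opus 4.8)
The plan is to combine a single exact algebraic identity for mutual information with the local KL-divergence approximation that was already established in \eqref{eq:eit_kl_approx_L}, and then to sum over $u$. First I would write $I(U;X)$ in its exact ``point-to-set'' form. Using $P_{UX}(u,x) = P_U(u)P_{X|U}(x|u)$ and the definition of mutual information, one obtains
\begin{equation*}
	I(U;X) = \sum_{u\in\mathcal{U}}\sum_{x\in\mathcal{X}} P_U(u) P_{X|U}(x|u) \log\frac{P_{X|U}(x|u)}{P_X(x)} = \sum_{u\in\mathcal{U}} P_U(u)\, D_{\mathrm{KL}}(P_{X|U=u}\,\|\,P_X),
\end{equation*}
which is precisely the first equality asserted in the statement and requires no approximation. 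This reduces the problem to controlling each conditional divergence $D_{\mathrm{KL}}(P_{X|U=u}\,\|\,P_X)$ separately.

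The second step is to invoke the local expansion directly. For each fixed $u$, the conditional distribution $P_{X|U=u}$ is, by \eqref{eq:eit_pxu_perturbation_sec3}, an $\epsilon$-perturbation of the reference $P_X$ with scaled perturbation vector $\mathbf{L}_u$. Applying the second-order approximation \eqref{eq:eit_kl_approx_L} with $Q_X = P_{X|U=u}$ and $\mathbf{L}_X = \mathbf{L}_u$ gives
\begin{equation*}
	D_{\mathrm{KL}}(P_{X|U=u}\,\|\,P_X) = \frac{\epsilon^2}{2}\,\|\mathbf{L}_u\|^2 + O(\epsilon^3),
\end{equation*}
valid for every $u\in\mathcal{U}$. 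Substituting this into the exact decomposition and recognizing the $P_U$-weighted sum as an expectation then yields
\begin{equation*}
	I(U;X) = \frac{\epsilon^2}{2}\sum_{u\in\mathcal{U}} P_U(u)\,\|\mathbf{L}_u\|^2 + \sum_{u\in\mathcal{U}} P_U(u)\,O(\epsilon^3) = \frac{\epsilon^2}{2}\,\mathbb{E}_U[\|\mathbf{L}_U\|^2] + O(\epsilon^3),
\end{equation*}
which is the claimed approximation.

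The only point that requires genuine care — and hence the main obstacle — is the handling of the residual terms when passing from the per-$u$ expansions to the aggregated bound: one must argue that the collection of error terms collapses to a single $O(\epsilon^3)$. Because $\mathcal{U}$ is finite and the weights $P_U(u)$ are nonnegative and sum to one, the $P_U$-weighted sum of the remainders is itself $O(\epsilon^3)$, \emph{provided} the expansion \eqref{eq:eit_kl_approx_L} holds with a remainder bounded \emph{uniformly} in $u$. I would secure this uniformity by noting that the local perturbation regime posits a common small parameter $\epsilon$ together with scaled perturbation vectors $\mathbf{L}_u$ that are bounded (the relevant third-order Taylor remainder depends continuously on $\mathbf{L}_u$ and on $\min_x P_X(x) > 0$, the latter guaranteed since $P_X \in \mathrm{relint}(\mathcal{P(X)})$). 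Consequently the finitely many remainders are uniformly $O(\epsilon^3)$, their convex combination is $O(\epsilon^3)$, and the approximation follows. The remaining identity $\frac{\epsilon^2}{2}\sum_u P_U(u)\|\mathbf{L}_u\|^2 = \frac{\epsilon^2}{2}\,\mathbb{E}_U[\|\mathbf{L}_U\|^2]$ is merely the definition of the expectation over $U$.
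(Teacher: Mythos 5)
Your proposal is correct and follows essentially the same route as the paper's proof: both decompose $I(U;X)$ exactly as $\sum_u P_U(u) D_{\mathrm{KL}}(P_{X|U=u}\,\|\,P_X)$ and then apply the second-order expansion to each conditional divergence, the only difference being that the paper re-derives the Taylor expansion of $\log(1+z)$ inline (explicitly using $\sum_x J_X(x|u)=0$ to kill the first-order term) whereas you invoke the already-stated result \eqref{eq:eit_kl_approx_L}. Your explicit treatment of the uniformity of the $O(\epsilon^3)$ remainders over the finite alphabet $\mathcal{U}$ is a minor refinement that the paper glosses over, not a different method.
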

\begin{proof}
	The proof is presented in Appendix A.
\end{proof}
To approximate $ I(U;Y) $ and $ I(U;Z) $, we utilize the DTMs for Bob's and Eve's channel. Let $ P_Y(y) $ and $ P_Z(z)$ be the marginal output distributions under the reference input $ P_X $. The DTMs are $ B_{Y|X} $ in~\eqref{eq:eit_dtm_definition_direct} and $ B_{Z|X} $ in~\eqref{eq:eit_dtm_bzx_direct}. Given an input perturbation $ \mathbf{L}_u $ for $ P_{X|U=u} $, the EIT states that the scaled perturbation of the output distribution $ P_{Y|U=u} $ from $ P_Y $ is $ \mathbf{L}_{Y,u} = B_{Y|X} \mathbf{L}_u $. Similarly, for Eve, $ \mathbf{L}_{Z,u} = B_{Z|X} \mathbf{L}_u $.
\begin{lem}\label{lem:approx_IUY_formal}
	With $ \mathbf{L}_u $ and $ B_{Y|X} $ defined as above, the mutual information $ I(U;Y) $ can be approximated as:
	\begin{equation} \label{eq:eit_IUY_approx_lemma}
		I(U;Y) = \mathbb{E}_U [ D_{\mathrm{KL}}(P_{Y|U=u} || P_Y) ] \approx \frac{\epsilon^2}{2} \sum_{u \in \mathcal{U}} P_U(u) ||B_{Y|X} \mathbf{L}_u||^2 = \frac{\epsilon^2}{2} \mathbb{E}_U [ ||B_{Y|X} \mathbf{L}_U||^2 ].
	\end{equation}
\end{lem}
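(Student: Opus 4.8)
The plan is to mirror the structure of Lemma~\ref{lem:approx_IUX_formal}, replacing the input perturbation vectors with their images under the DTM $B_{Y|X}$. The starting point is the exact identity $I(U;Y) = \mathbb{E}_U[D_{\mathrm{KL}}(P_{Y|U=u} \| P_Y)]$, which holds because $P_Y$ is the $U$-average of the conditionals $P_{Y|U=u}$. The entire argument then reduces to showing that each output conditional $P_{Y|U=u}$ is itself a small perturbation of the reference output $P_Y$, whose scaled perturbation vector is precisely $\mathbf{L}_{Y,u} = B_{Y|X}\mathbf{L}_u$, and then invoking the local KL approximation~\eqref{eq:eit_kl_approx_L} term by term.

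First I would verify that the output conditional admits the EIT perturbation form. Substituting $P_{X|U}(x|u) = P_X(x) + \epsilon\sqrt{P_X(x)}\,L_X(x|u)$ from~\eqref{eq:eit_pxu_perturbation_sec3} into $P_{Y|U}(y|u) = \sum_x P_{X|U}(x|u)P_{Y|X}(y|x)$ yields $P_{Y|U}(y|u) = P_Y(y) + \epsilon\sum_x \sqrt{P_X(x)}\,L_X(x|u)\,P_{Y|X}(y|x)$. Writing the induced perturbation in scaled coordinates $L_Y(y|u) = J_Y(y|u)/\sqrt{P_Y(y)}$ and comparing with the definition~\eqref{eq:eit_dtm_definition_direct} of $B_{Y|X}$ shows that the vector $\mathbf{L}_{Y,u}$ with components $L_Y(y|u)$ equals $B_{Y|X}\mathbf{L}_u$, which is exactly the relation~\eqref{eq:eit_LYu_equals_BYX_Lu} already established in the DTM preliminaries. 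Since this holds for every $u$, each $P_{Y|U=u}$ is a valid $O(\epsilon)$ perturbation of $P_Y$ with scaled perturbation vector $B_{Y|X}\mathbf{L}_u$.

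Next I would apply the local KL approximation. Because $P_{Y|U=u}$ is a perturbation of $P_Y$ with scaled perturbation vector $B_{Y|X}\mathbf{L}_u$, the approximation~\eqref{eq:eit_kl_approx_L} gives $D_{\mathrm{KL}}(P_{Y|U=u}\|P_Y) = \tfrac{\epsilon^2}{2}\|B_{Y|X}\mathbf{L}_u\|^2 + O(\epsilon^3)$ for each fixed $u$. Taking the expectation over $U$, the linearity of the expectation and the finiteness of $\mathcal{U}$ let me pass the $\mathbb{E}_U[\cdot]$ through, yielding $I(U;Y) = \tfrac{\epsilon^2}{2}\,\mathbb{E}_U[\|B_{Y|X}\mathbf{L}_U\|^2] + O(\epsilon^3)$, which is the claimed approximation~\eqref{eq:eit_IUY_approx_lemma}.

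**The main obstacle** is not the algebra but the uniformity of the error control: the $O(\epsilon^3)$ remainder in the local KL expansion must be uniform over all $u \in \mathcal{U}$ so that the expectation step is legitimate. This requires that every $P_{Y|U=u}$ stays in the relative interior of $\mathcal{P}(\mathcal{Y})$ and that the perturbation magnitudes $\|\mathbf{L}_u\|$ are bounded, which follows from $P_Y(y) > 0$ for all $y$, the boundedness of $B_{Y|X}$, and the fact that $\mathcal{U}$ is finite so only finitely many remainder terms arise. With these standing assumptions, the proof is a direct specialization of Lemma~\ref{lem:approx_IUX_formal} with $\mathbf{L}_u$ replaced by $B_{Y|X}\mathbf{L}_u$, and I would structure the appendix proof to reuse that machinery rather than repeat the Taylor expansion from scratch.
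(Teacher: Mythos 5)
Your proposal is correct and follows essentially the same route as the paper's proof in Appendix A: substitute the input perturbation into the channel law to show $P_{Y|U=u}$ is an $O(\epsilon)$ perturbation of $P_Y$ with scaled perturbation vector $B_{Y|X}\mathbf{L}_u$, invoke the local KL approximation \eqref{eq:eit_kl_approx_L} for each $u$, and average over $U$. Your added remark on uniform control of the $O(\epsilon^3)$ remainder over the finite alphabet $\mathcal{U}$ is a small extra point of rigor that the paper leaves implicit, but it does not change the argument.
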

\begin{proof}
	The proof is presented in Appendix A.
\end{proof}
\begin{lem} \label{lem:approx_IUZ_formal}
	Similarly, with $ \mathbf{L}_u $ and $ B_{Z|X} $ defined as above, the mutual information $ I(U;Z) $ can be approximated as:
	\begin{equation} \label{eq:eit_IUZ_approx_lemma}
		I(U;Z) = \mathbb{E}_U [ D_{\mathrm{KL}}(P_{Z|U=u} || P_Z) ] \approx \frac{\epsilon^2}{2} \sum_{u \in \mathcal{U}} P_U(u) ||B_{Z|X} \mathbf{L}_u||^2 = \frac{\epsilon^2}{2} \mathbb{E}_U [ ||B_{Z|X} \mathbf{L}_U||^2 ].
	\end{equation}
\end{lem}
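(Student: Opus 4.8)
The plan is to follow the same route as the proof of Lemma \ref{lem:approx_IUY_formal}, since Eve's channel $P_{Z|X}$ is structurally just another DMC with positive output marginal $P_Z$, and nothing in that argument privileges Bob's channel over Eve's. First I would write the mutual information in its standard form as an expected divergence,
\begin{equation*}
I(U;Z) = \sum_{u \in \mathcal{U}} P_U(u)\, D_{\mathrm{KL}}(P_{Z|U=u} \| P_Z) = \mathbb{E}_U[D_{\mathrm{KL}}(P_{Z|U=u}\|P_Z)],
\end{equation*}
which holds because $P_Z$ is the $P_U$-average of the conditionals $P_{Z|U=u}$. The task then reduces to approximating each inner divergence by the local Euclidean expression \eqref{eq:eit_kl_approx_L} and showing that the relevant scaled output perturbation is exactly $B_{Z|X}\mathbf{L}_u$.

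The central computation is to transport the input perturbation through Eve's channel. Starting from $P_{Z|U=u}(z) = \sum_x P_{X|U}(x|u) P_{Z|X}(z|x)$ and substituting the perturbation form \eqref{eq:eit_pxu_perturbation_sec3}, the zeroth-order term reproduces $P_Z(z) = \sum_x P_X(x) P_{Z|X}(z|x)$, leaving the first-order deviation $P_{Z|U=u}(z) - P_Z(z) = \epsilon \sum_x \sqrt{P_X(x)}\, L_X(x|u)\, P_{Z|X}(z|x)$. Dividing by $\epsilon\sqrt{P_Z(z)}$ to form the scaled output perturbation $L_Z(z|u)$ and reading off the coefficients, I would recognize the resulting linear map as precisely the DTM $B_{Z|X}$ defined in \eqref{eq:eit_dtm_bzx_direct}, so that $\mathbf{L}_{Z,u} = B_{Z|X}\mathbf{L}_u$. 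One also checks that $\mathbf{L}_{Z,u}$ is orthogonal to $\mathbf{\sqrt{P_Z}}$ and hence represents a valid zero-sum perturbation of $P_Z$; this follows from the fact that $B_{Z|X}$ has largest singular value $1$ with left singular vector proportional to $\mathbf{\sqrt{P_Z}}$, combined with the input condition $\mathbf{L}_u^T \mathbf{\sqrt{P_X}} = 0$ from \eqref{eq:eit_L_ortho_sqrtP_final}.

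With this relation in hand, applying \eqref{eq:eit_kl_approx_L} to each conditional gives $D_{\mathrm{KL}}(P_{Z|U=u}\|P_Z) \approx \frac{\epsilon^2}{2}\|\mathbf{L}_{Z,u}\|^2 = \frac{\epsilon^2}{2}\|B_{Z|X}\mathbf{L}_u\|^2$, and taking the $P_U$-weighted average yields \eqref{eq:eit_IUZ_approx_lemma}. The only point requiring care, which I regard as the main (and mild) obstacle, is controlling the approximation order uniformly in $u$: each $P_{X|U=u}$ is an $O(\epsilon)$ perturbation, and because the channel map is a fixed linear, hence Lipschitz, transformation, $P_{Z|U=u}$ is likewise an $O(\epsilon)$ perturbation of $P_Z$, so the $O(\epsilon^3)$ remainder in \eqref{eq:eit_kl_approx_L} survives the finite $U$-average without degradation. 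I expect the derivation to mirror that of Lemma \ref{lem:approx_IUY_formal} line for line, with $Y$, $P_Y$, and $B_{Y|X}$ replaced by $Z$, $P_Z$, and $B_{Z|X}$, and indeed a clean write-up need only remark on this substitution rather than repeat the algebra.
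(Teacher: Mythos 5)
Your proposal is correct and matches the paper's approach exactly: the paper's own proof of this lemma simply states that it is entirely analogous to the proof of Lemma~\ref{lem:approx_IUY_formal}, which is precisely the substitution argument ($Y \to Z$, $P_Y \to P_Z$, $B_{Y|X} \to B_{Z|X}$) you carry out. Your additional remarks on the orthogonality of $\mathbf{L}_{Z,u}$ to $\mathbf{\sqrt{P_Z}}$ and on the uniformity of the $O(\epsilon^3)$ remainder over the finite alphabet $\mathcal{U}$ are sound and, if anything, slightly more explicit than the paper.
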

\begin{proof}
	The proof is presented in Appendix A.
\end{proof}

These EIT approximations transform the objective function and constraints of the SIC problem into quadratic forms of the perturbation vectors $ \mathbf{L}_u $. This transformation is pivotal for deriving a tractable version of the problem, which will be explored in the subsequent section.

\section{Local Approximation of the SIC Problem} \label{sec:local_approx_slic}

In this section, we apply the EIT framework, developed in Section \ref{sec:eit_preliminaries}, to transform Problem \ref{prob:slic_original} into a quadratic programming formulation. This transformation is the cornerstone of the subsequent analysis.

\subsection{Transformation of SIC Problem} \label{subsec:transform_slic_terms_eit}

Utilizing Lemmas \ref{lem:approx_IUX_formal}--\ref{lem:approx_IUZ_formal} the following approximations hold. Using the approximation from~\eqref{eq:eit_IUY_approx_lemma}, the objective in \eqref{eq:slic_prob_obj} becomes:
	\begin{equation} \label{eq:slic_obj_approx_final}
		I(U;Y) \approx \frac{\epsilon^2}{2} \sum_{u \in \mathcal{U}} P_U(u) ||B_{Y|X} \mathbf{L}_u||^2 = \frac{\epsilon^2}{2} \sum_{u \in \mathcal{U}} P_U(u) \mathbf{L}_u^T V \mathbf{L}_u,
	\end{equation}
	where $ V \triangleq B_{Y|X}^T B_{Y|X} $; and $V$ is a symmetric and positive semidefinite matrix of size $ |\mathcal{X}| \times |\mathcal{X}| $.
	
	 Applying \eqref{eq:eit_IUX_approx_lemma}, constraint in \eqref{eq:slic_prob_rate} transforms to:
	\begin{align} 
		\label{eq:slic_rate_approx_final}
		I(U:X) \approx \frac{\epsilon^2}{2} \sum_{u \in \mathcal{U}} P_U(u) ||\mathbf{L}_u||^2 &\le R \quad  \text{or} \\
		\label{eq:slic_rate_approx_final_with_I}
		\sum_{u \in \mathcal{U}} P_U(u) \mathbf{L}_u^T I \mathbf{L}_u &\le \frac{2R}{\epsilon^2}, 
	\end{align}
	where $ I $ is the identity matrix.
	
	Utilizing~\eqref{eq:eit_IUZ_approx_lemma}, constraint in \eqref{eq:slic_prob_leakage} becomes:
	\begin{align}
		 \label{eq:slic_leakage_approx_final}
		I(U;Z) \approx \frac{\epsilon^2}{2} \sum_{u \in \mathcal{U}} P_U(u) ||B_{Z|X} \mathbf{L}_u||^2 &\le \Theta \quad  \text{or} \\
		 \label{eq:slic_leakage_approx_final_with Lambda}
		 \sum_{u \in \mathcal{U}} P_U(u) \mathbf{L}_u^T \Lambda \mathbf{L}_u &\le \frac{2\Theta}{\epsilon^2},
	\end{align}
	where $ \Lambda \triangleq B_{Z|X}^T B_{Z|X} $; and $ \Lambda $ is also symmetric and positive semidefinite and has size $ |\mathcal{X}| \times |\mathcal{X}| $.

The consistency constraint~\eqref{eq:slic_prob_consistency} from Problem~\ref{prob:slic_original} is already incorporated into the EIT framework through the structural constraint~\eqref{eq:eit_L_ortho_sqrtP_final} and \eqref{eq:eit_L_consistency_vector_final}. The probability normalization and non-negativity constraints on $ P_U(u) $ and $ P_{X|U}(x|u) $ remain.

\subsection{The Approximated SIC Optimization Problem} \label{subsec:eit_approximated_slic_problem_final_corrected}

By substituting the local approximations \eqref{eq:slic_obj_approx_final}, \eqref{eq:slic_rate_approx_final}, and \eqref{eq:slic_leakage_approx_final} into the original SIC Problem 1, and incorporating the structural constraints on $ \mathbf{L}_u $ in \eqref{eq:eit_L_ortho_sqrtP_final} and \eqref{eq:eit_L_consistency_vector_final}, we arrive at the EIT-approximated SIC problem. We define the scaled rate and leakage thresholds as $ R' \triangleq \frac{2R}{\epsilon^2} $ and $ \Theta' \triangleq \frac{2\Theta}{\epsilon^2} $. The common factor $ \epsilon^2/2 $ in the objective function \eqref{eq:slic_obj_approx_final} can be omitted for the purpose of the optimization as it is a positive constant.
\begin{problem} \label{prob:eit_slic_approximated}
	The objective is to find the PMF $ \{P_U(u)\}_{u \in \mathcal{U}} $ and the set of perturbation vectors $ \{\mathbf{L}_u\}_{u \in \mathcal{U}} $, where each $ \mathbf{L}_u \in \mathbb{R}^{|\mathcal{X}|} $, that solve:
	\begin{align}
		\underset{ \{P_U(u)\}, \{\mathbf{L}_u\} }{\mathrm{max}} \quad & \sum_{u \in \mathcal{U}} P_U(u) \mathbf{L}_u^T V \mathbf{L}_u \label{eq:eit_slic_opt_obj_problem_env} \\
		\mathrm{s.t.} \quad & \sum_{u \in \mathcal{U}} P_U(u) \mathbf{L}_u^T I \mathbf{L}_u \le R' \label{eq:eit_slic_opt_rate_problem_env} \\
		& \sum_{u \in \mathcal{U}} P_U(u) \mathbf{L}_u^T \Lambda \mathbf{L}_u \le \Theta' \label{eq:eit_slic_opt_leakage_problem_env} \\
		& \mathbf{L}_u^T \mathbf{\sqrt{P_X}} = 0, \quad \forall u \in \mathcal{U} \label{eq:eit_slic_opt_ortho_problem_env} \\
		& \sum_{u \in \mathcal{U}} P_U(u) \mathbf{L}_u = \mathbf{0} \label{eq:eit_slic_opt_consistency_L_problem_env} \\
		& \sum_{u \in \mathcal{U}} P_U(u) = 1 \label{eq:eit_slic_opt_pu_sum_problem_env} \\
		& P_U(u) \ge 0, \quad \forall u \in \mathcal{U} \label{eq:eit_slic_opt_pu_nonneg_problem_env}
	\end{align}
	where $ V \triangleq B_{Y|X}^T B_{Y|X} $, $ \Lambda \triangleq B_{Z|X}^T B_{Z|X} $, $ I $ is the identity matrix, and $ \mathbf{\sqrt{P_X}} $ is the vector with components $ \sqrt{P_X(x)} $.
\end{problem}
This formulation transforms the original non-linear SIC problem into a structured quadratic program. The primary optimization is over the set of perturbation vectors $\{\mathbf{L}_u\}$, which represent the encoding channels for each message. As we will rigorously prove in Section \ref{sec:approx_capacity_solution_structure}, a key structural property of this problem is that the optimal value of the objective function is independent of the source distribution $P_U(u)$. The distribution $P_U(u)$ acts as a mixing strategy whose role is to ensure the average constraints are met, rather than being a variable to be optimized for performance. This insight simplifies the problem significantly and paves the way for the analytical solution techniques that follow.
\begin{rem} \label{rem:non_negativity_eit}
	Problem \ref{prob:eit_slic_approximated} does not explicitly include constraints to enforce the non-negativity of the underlying perturbed conditional PMFs. That is because the solutions obtained for $ \{\mathbf{L}_u^*\} $ are considered meaningful only within a regime where $ \epsilon > 0 $ can be chosen small enough such that the resulting $ P_{X|U}(x|u) $ are valid PMFs for all $x, u$. Essentially, the non-negativity condition imposes an implicit bound on the magnitude of the components of the optimal $ \mathbf{L}_u^* $, ensuring that the solution remains within the local neighborhood of the reference distribution where the quadratic approximations of mutual information are accurate.
\end{rem}
\section{Approximate Secrecy Capacity and Solution Structure} \label{sec:approx_capacity_solution_structure}

Having transformed the SIC problem into its EIT-approximated quadratic form, we now proceed to analyze its solution structure. This section is dedicated to the core theoretical results of this work. We begin by employing Lagrangian duality and KKT conditions to derive the fundamental optimality conditions for the perturbation vectors. These conditions lead to an explicit expression for an approximate local secrecy capacity, whose value is determined by the optimal Lagrange multipliers. We then establish that these optimal multipliers can be found by solving an LP, a key result that renders the problem analytically tractable. Building on this, we fully characterize its behavior by analyzing the solution of this LP across different operational regimes.

For the subsequent analysis we rely on the following mild assumption.

\begin{ass}\label{ass_1}
	The matrix $\Lambda = B_{Z|X}^T B_{Z|X}$, when restricted to the valid perturbation subspace $S^\perp = \{L | L^T\sqrt{P_X}=0\}$, is positive definite.
\end{ass}
This assumption ensures that the generalized eigenvalues of the pencil $(V, \Lambda)$ are well-defined and finite, which implies that any non-trivial perturbation of the input signal incurs a non-zero amount of information leakage to the eavesdropper. If this were not the case, one could encode information with zero leakage, making the secrecy problem trivial.
	

\subsection{Lagrangian Duality and KKT Conditions} \label{subsec:kkt_conditions_final}

To analyze Problem \ref{prob:eit_slic_approximated}, we formulate its Lagrangian. The objective is to maximize~\eqref{eq:eit_slic_opt_obj_problem_env}. We introduce Lagrange multipliers: $ \rho^* \ge 0 $ for the scaled rate constraint \eqref{eq:eit_slic_opt_rate_problem_env}, $ \nu^* \ge 0 $ for the scaled leakage constraint \eqref{eq:eit_slic_opt_leakage_problem_env}, scalar multipliers $ \xi^*(u) $ for each orthogonality constraint~\eqref{eq:eit_slic_opt_ortho_problem_env}, a vector multiplier $ \boldsymbol{\mu}^* $ for the consistency constraint~\eqref{eq:eit_slic_opt_consistency_L_problem_env}, and a scalar $ \kappa^* $ for the normalization~\eqref{eq:eit_slic_opt_pu_sum_problem_env}. The non-negativity constraints \eqref{eq:eit_slic_opt_pu_nonneg_problem_env} are also part of the KKT conditions.

The Lagrangian $ \mathcal{L}(\{P_U\}, \{\mathbf{L}_u\}, \rho, \nu, \{\xi(u)\}, \boldsymbol{\mu}, \kappa) $ is:
\begin{align} \label{eq:lagrangian_slic_final}
	\mathcal{L} = &\sum_{u \in \mathcal{U}} P_U(u) \mathbf{L}_u^T V \mathbf{L}_u \nonumber \\
	&- \rho \left( \sum_{u \in \mathcal{U}} P_U(u) \mathbf{L}_u^T I \mathbf{L}_u - R' \right) \nonumber \\
	&- \nu \left( \sum_{u \in \mathcal{U}} P_U(u) \mathbf{L}_u^T \Lambda \mathbf{L}_u - \Theta' \right) \nonumber \\
	&- \sum_{u \in \mathcal{U}} \xi(u) (\mathbf{L}_u^T \mathbf{\sqrt{P_X}}) \nonumber \\
	&- \boldsymbol{\mu}^T \left( \sum_{u \in \mathcal{U}} P_U(u) \mathbf{L}_u \right) \nonumber \\
	&- \kappa \left( \sum_{u \in \mathcal{U}} P_U(u) - 1 \right) + \sum_{u \in \mathcal{U}} \zeta(u) P_U(u) \quad \text{, with } \zeta(u) \ge 0 \text{ for } P_U(u) \ge 0.
\end{align}

The KKT conditions provide a set of necessary conditions for a solution to be optimal. A key condition is stationarity, which requires the gradient of the Lagrangian with respect to the primal variables to be zero at the optimal point. By differentiating the full Lagrangian in \eqref{eq:lagrangian_slic_final}, with respect to each perturbation vector $ \mathbf{L}_u $, we can characterize the structure that these optimal vectors must possess.

\begin{thm} \label{thm:kkt_conditions_Lu_final}
	Let $ \{P_U^*(u)\} $, $ \{\mathbf{L}_u^*\} $, $ \rho^* \ge 0 $, $ \nu^* \ge 0 $, $ \{\xi^*(u)\} $, $ \boldsymbol{\mu}^* $, $ \kappa^* $, and $ \{\zeta^*(u) \ge 0\} $ be the optimal primal and dual variables for Problem \ref{prob:eit_slic_approximated}. If a suitable constraint qualification holds, the KKT conditions include stationarity with respect to each $ \mathbf{L}_u^* $, for $P_U^*(u) > 0$. This stationarity, implies that for any $u$ where $P_U^*(u) > 0$ and for non-trivial optimal $ \mathbf{L}_u^* \neq \mathbf{0} $, it holds that:
	\begin{equation} \label{eq:k_Lu_equals_0_final}
		(-V + \rho^*I + \nu^*\Lambda) \mathbf{L}_u^* = \mathbf{0}.
	\end{equation}
	The KKT conditions also include primal feasibility, dual feasibility ($ \rho^* \ge 0, \nu^* \ge 0, \zeta^*(u) \ge 0 $), and complementary slackness for all inequality constraints. Specifically for the rate and leakage constraints:
	\begin{align}
		\rho^* \left( \sum_{u \in \mathcal{U}} P_U^*(u) (\mathbf{L}_u^*)^T I \mathbf{L}_u^* - R' \right) &= 0 \label{eq:cs_rate_final} \\
		\nu^* \left( \sum_{u \in \mathcal{U}} P_U^*(u) (\mathbf{L}_u^*)^T \Lambda \mathbf{L}_u^* - \Theta' \right) &= 0 \label{eq:cs_leakage_final}
	\end{align}
\end{thm}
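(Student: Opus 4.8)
The plan is to obtain the stationarity condition by differentiating the Lagrangian $\mathcal{L}$ in \eqref{eq:lagrangian_slic_final} with respect to each perturbation vector $\mathbf{L}_u$, and then to show that the two auxiliary multiplier terms---the one carrying $\xi^*(u)$ from the orthogonality constraint and the one carrying $\boldsymbol{\mu}^*$ from the consistency constraint---drop out, leaving the claimed matrix-pencil relation. Since $V$, $\Lambda$ and $I$ are symmetric, the gradient of each quadratic term $P_U(u)\mathbf{L}_u^T M \mathbf{L}_u$ with respect to $\mathbf{L}_u$ is $2P_U(u)M\mathbf{L}_u$, so setting $\partial\mathcal{L}/\partial\mathbf{L}_u = \mathbf{0}$ at the optimum gives
\begin{equation*}
2P_U^*(u)\left(V - \rho^* I - \nu^*\Lambda\right)\mathbf{L}_u^* = \xi^*(u)\mathbf{\sqrt{P_X}} + P_U^*(u)\boldsymbol{\mu}^*.
\end{equation*}
The core of the argument is to prove that the right-hand side vanishes whenever $P_U^*(u) > 0$.

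The key structural fact I would establish first is that $\mathbf{\sqrt{P_X}}$ is a common eigenvector of $V$, $\Lambda$ and $I$ with eigenvalue $1$. This follows from the DTM properties recalled in Section \ref{subsec:eit_dtm_final_direct}: since $B_{Y|X}\mathbf{\sqrt{P_X}} = \mathbf{\sqrt{P_Y}}$ and $B_{Y|X}^T\mathbf{\sqrt{P_Y}} = \mathrm{diag}(P_X)^{1/2}P_{Y|X}^T\mathbf{1} = \mathbf{\sqrt{P_X}}$ (using that the rows of $P_{Y|X}$ sum to one), one obtains $V\mathbf{\sqrt{P_X}} = \mathbf{\sqrt{P_X}}$, and identically $\Lambda\mathbf{\sqrt{P_X}} = \mathbf{\sqrt{P_X}}$. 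Because these matrices are symmetric and fix $\mathbf{\sqrt{P_X}}$, they leave the valid perturbation subspace $S^\perp = \{L : L^T\mathbf{\sqrt{P_X}} = 0\}$ invariant; consequently, for any admissible $\mathbf{L}_u^* \in S^\perp$ the left-hand side of the stationarity equation is orthogonal to $\mathbf{\sqrt{P_X}}$.

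With this in hand, the elimination proceeds in two steps. First, taking the inner product of the stationarity equation with $\mathbf{\sqrt{P_X}}$ and using $\|\mathbf{\sqrt{P_X}}\|^2 = \sum_x P_X(x) = 1$ forces $\xi^*(u) = -P_U^*(u)\,\boldsymbol{\mu}^{*T}\mathbf{\sqrt{P_X}}$; substituting this back and dividing by $P_U^*(u)$ replaces the right-hand side by the component $\boldsymbol{\mu}^*_\perp = \boldsymbol{\mu}^* - (\boldsymbol{\mu}^{*T}\mathbf{\sqrt{P_X}})\mathbf{\sqrt{P_X}}$ orthogonal to $\mathbf{\sqrt{P_X}}$, giving $2(V - \rho^* I - \nu^*\Lambda)\mathbf{L}_u^* = \boldsymbol{\mu}^*_\perp$ for every $u$ with $P_U^*(u) > 0$. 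Second, multiplying this identity by $P_U^*(u)$, summing over $u$, and invoking the consistency constraint $\sum_u P_U^*(u)\mathbf{L}_u^* = \mathbf{0}$ together with $\sum_u P_U^*(u) = 1$ collapses the left side to zero and the right side to $\boldsymbol{\mu}^*_\perp$, whence $\boldsymbol{\mu}^*_\perp = \mathbf{0}$. This yields $(-V + \rho^* I + \nu^*\Lambda)\mathbf{L}_u^* = \mathbf{0}$ as claimed. The complementary slackness relations \eqref{eq:cs_rate_final}--\eqref{eq:cs_leakage_final}, together with dual feasibility $\rho^*,\nu^* \ge 0$, are then the standard KKT conditions attached to the inequality constraints \eqref{eq:eit_slic_opt_rate_problem_env}--\eqref{eq:eit_slic_opt_leakage_problem_env} and require no further work once a constraint qualification is assumed.

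The main obstacle, and the step demanding the most care, is precisely the decoupling of the two structural multipliers: a naive reading of the stationarity condition leaves $\mathbf{L}_u^*$ entangled with $\xi^*(u)$ and $\boldsymbol{\mu}^*$, and it is only the combination of the common eigenvector property of $\mathbf{\sqrt{P_X}}$ (which disposes of $\xi^*(u)$ by an orthogonality projection) and the averaged consistency constraint (which disposes of $\boldsymbol{\mu}^*_\perp$) that cleanly reduces the condition to a pure matrix-pencil equation. I would also emphasize that the restriction to indices with $P_U^*(u) > 0$ is essential, since the stationarity gradient is scaled by $P_U^*(u)$ and therefore carries no constraint on $\mathbf{L}_u^*$ for messages of zero probability.
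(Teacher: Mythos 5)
Your proof is correct, and it takes a genuinely different (and in fact tighter) route than the paper's at the crucial step. Both arguments begin identically: differentiate the Lagrangian \eqref{eq:lagrangian_slic_final} with respect to $\mathbf{L}_u$ and face the task of eliminating the nuisance multipliers $\xi^*(u)$ and $\boldsymbol{\mu}^*$. The paper sums the stationarity equation over $u$ first, using \eqref{eq:eit_slic_opt_consistency_L_problem_env} and normalization to conclude that $\boldsymbol{\mu}^*$ is collinear with $\mathbf{\sqrt{P_X}}$, which reduces stationarity to $K(\rho^*,\nu^*)\mathbf{L}_u^* = \gamma_u^*\mathbf{\sqrt{P_X}}$; it then rules out $\gamma_u^* \neq 0$ by a contradiction argument built on the spectral decomposition of $K$ and the orthogonality constraint \eqref{eq:eit_slic_opt_ortho_problem_env} --- an argument the paper itself concedes is only ``non-generic'' when $K$ is indefinite. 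You instead first establish the exact fixed-point identities $V\mathbf{\sqrt{P_X}} = \mathbf{\sqrt{P_X}}$ and $\Lambda\mathbf{\sqrt{P_X}} = \mathbf{\sqrt{P_X}}$ from the DTM structure (this is the $\sigma_1 = 1$ singular-vector property already stated in Section \ref{subsec:eit_dtm_final_direct}, so it is available), which makes $K$ leave $\mathcal{S}^{\perp}$ invariant; projecting the stationarity equation onto $\mathbf{\sqrt{P_X}}$ then determines $\xi^*(u)$ exactly, and the subsequent average over $u$ annihilates the orthogonal component of $\boldsymbol{\mu}^*$, yielding \eqref{eq:k_Lu_equals_0_final} with no case analysis and no genericity appeal. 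What your approach buys is precisely the closure of that gap: the elimination of $\gamma_u^*$ (equivalently, of the residual right-hand side) becomes an exact algebraic consequence of the eigenvector property rather than a claim about robust configurations, so your proof is valid for indefinite $K$ as well. One cosmetic point: with the paper's convention that $P_{Y|X}$ is the $|\mathcal{Y}|\times|\mathcal{X}|$ transition matrix in \eqref{eq:eit_dtm_definition_direct}, it is the columns (not the rows) of $P_{Y|X}$ that sum to one; your displayed computation $P_{Y|X}^T\mathbf{1} = \mathbf{1}$ is nevertheless exactly right, so this is a wording slip, not a mathematical one. Your treatment of dual feasibility and complementary slackness as the standard remaining KKT conditions matches the paper.
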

\begin{proof}
	The proof is presented in Appendix B.
\end{proof}

Condition \eqref{eq:k_Lu_equals_0_final} is pivotal as it indicates that any optimal non-zero perturbation vector $ \mathbf{L}_u^* $ must lie in the null space of the matrix $ K(\rho^*, \nu^*) \triangleq -V + \rho^*I + \nu^*\Lambda $. This connects the optimal perturbations to a generalized eigenvalue problem structure. This also clarifies the role of the source distributions $\{P_U(u)\}$. Although Problem \ref{prob:eit_slic_approximated} is formulated as a joint optimization over both $\{P_U(u)\}$ and $\{\mathbf{L}_u\}$, these variables are not chosen independently. In problems of this nature, the choice of the conditional mapping implicitly determines the output marginal distribution via the consistency relation $P_U(u) = \sum_x P_X(x) P_{U|X}(u|x)$. The following proposition formalizes a more powerful consequence: the optimal value of the problem is itself independent of the specific source distribution that achieves the optimum.
\begin{prop}\label{prop:pu_invariance}
	The optimal value of the objective function of the EIT-Approximated SIC Problem 2, is independent of the choice of the source distributions $\{P_U(u)\}$, provided that a distribution exists that can satisfy the problem's average constraints using the optimal set of perturbation vectors. 
\end{prop}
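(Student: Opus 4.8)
The plan is to show that, once the objective \eqref{eq:eit_slic_opt_obj_problem_env} is evaluated at an optimal point, it reduces to an expression built solely from the Lagrange multipliers and the prescribed thresholds $R'$ and $\Theta'$, so that the mixing weights $P_U(u)$ drop out entirely. First I would take the stationarity relation \eqref{eq:k_Lu_equals_0_final} supplied by Theorem \ref{thm:kkt_conditions_Lu_final}, namely $V\mathbf{L}_u^* = (\rho^* I + \nu^* \Lambda)\mathbf{L}_u^*$ holding for every index $u$ in the support of the optimal source distribution, and contract it on the left with $(\mathbf{L}_u^*)^T$. This turns the matrix identity into the scalar relation $(\mathbf{L}_u^*)^T V \mathbf{L}_u^* = \rho^* ||\mathbf{L}_u^*||^2 + \nu^* (\mathbf{L}_u^*)^T \Lambda \mathbf{L}_u^*$, valid mode by mode with a single common pair $(\rho^*,\nu^*)$ shared across all active modes.

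Next I would multiply this per-mode identity by $P_U^*(u)$ and sum over $u$. The left-hand side is precisely the objective, while the right-hand side is $\rho^*$ times the rate quadratic form plus $\nu^*$ times the leakage quadratic form. Applying the complementary slackness conditions \eqref{eq:cs_rate_final} and \eqref{eq:cs_leakage_final} in the sharpened form $\rho^*(\sum_u P_U^*(u)||\mathbf{L}_u^*||^2) = \rho^* R'$ and $\nu^*(\sum_u P_U^*(u)(\mathbf{L}_u^*)^T \Lambda \mathbf{L}_u^*) = \nu^* \Theta'$ --- each of which holds unconditionally, since a vanishing multiplier annihilates its term while a positive one forces the corresponding constraint to be tight --- collapses the objective to the compact value $\rho^* R' + \nu^* \Theta'$. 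This formula contains no reference to $P_U$.

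The decisive step is then to argue that $(\rho^*,\nu^*)$ is itself a property of the channel geometry rather than of the mixing strategy. Here I would invoke that \eqref{eq:k_Lu_equals_0_final} forces the pencil $K(\rho^*,\nu^*) = -V + \rho^* I + \nu^* \Lambda$ to be singular on the valid subspace $S^\perp$, so the admissible multipliers are determined by the generalized eigenvalues of $(V,\Lambda)$ restricted to $S^\perp$ --- quantities fixed by the DTMs $B_{Y|X}$ and $B_{Z|X}$ and the reference $P_X$ alone, with Assumption \ref{ass_1} guaranteeing they are finite. Consequently, any admissible $P_U$ that realizes the optimal eigenmode structure and meets the averaged constraints inherits the same multipliers and hence the same value $\rho^* R' + \nu^* \Theta'$, establishing the claim. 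I expect the main obstacle to lie exactly here: one must rule out the possibility that different feasible mixing distributions select different branches of the generalized spectrum, i.e., argue that the maximizing branch (the largest efficiency consistent with both active constraints) is pinned down by the spectrum and the ratio-type optimality criterion and is therefore common to every $P_U$ under consideration.
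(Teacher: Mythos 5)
Your first two steps are sound, but they reprove a different result: contracting the stationarity identity $V\mathbf{L}_u^* = (\rho^* I + \nu^*\Lambda)\mathbf{L}_u^*$ with $(\mathbf{L}_u^*)^T$, averaging with the weights $P_U^*(u)$, and invoking complementary slackness is precisely the paper's derivation of the capacity formula $C_{\mathrm{SIC}} = \rho^* R + \nu^*\Theta$ (Theorem \ref{thm:c_lic_final}, Appendix D). That computation shows the optimal value equals $\rho^* R' + \nu^* \Theta'$ \emph{for the multipliers attached to one particular optimum}; it does not yet show that those multipliers --- and hence the value --- are the same for every admissible $P_U$. This is exactly the point you flag as the ``main obstacle,'' and it is a genuine gap: singularity of $K(\rho^*,\nu^*)$ on $\mathcal{S}^{\perp}$ only forces $(\rho^*,\nu^*)$ to lie on the union of the finitely many lines $\rho + \nu\lambda_j = d_j\lambda_j$; it does not select a point on them, so your appeal to ``the spectrum and the ratio-type optimality criterion'' leaves open precisely the branch-selection question you yourself raise. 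Closing it along your route would require carrying out the full dual/LP analysis of Theorem \ref{thm:general_lp_for_multipliers}, whose data $(d_j,\lambda_j,R,\Theta)$ are manifestly $P_U$-free --- but you invoke that only by allusion, not as a proof step.

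The paper's proof (Appendix C) closes the gap by a mechanism you never use: stationarity of the Lagrangian with respect to $P_U(u)$ itself. For every $u$ in the support, $\partial\mathcal{L}/\partial P_U(u) = 0$ combined with complementary slackness $\zeta^*(u)P_U^*(u) = 0$ forces the entire bracketed coefficient multiplying $P_U(u)$ in the rearranged Lagrangian to vanish identically. Hence the $P_U$-weighted sum contributes zero \emph{regardless of how the weights are distributed over the active modes}, and the dual value collapses to $\rho^* R' + \nu^*\Theta'$, where $(\rho^*,\nu^*)$ are optimal for the dual of the \emph{joint} problem --- a problem in which $P_U$ has already been maximized out, so its optimal multipliers cannot carry any $P_U$ dependence. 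In short, the invariance comes from dualizing $P_U$ away (equivalently, from the $P_U$-stationarity condition), not from pinning the multipliers to the spectrum mode by mode. To complete your version, either add the $\partial\mathcal{L}/\partial P_U(u)$ argument, or explicitly establish that the optimal $(\rho^*,\nu^*)$ solve the LP of Theorem \ref{thm:general_lp_for_multipliers} and are therefore determined by $(V,\Lambda,R,\Theta)$ alone.
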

\begin{proof}
	The proof is presented in Appendix C.
\end{proof}

This proposition implies that the optimal performance is fundamentally determined by the channel properties and the rate-leakage budgets, rather than the statistics of the source message $U$. The role of the source distribution $P_U(u)$ is to serve as a mixing distribution that allows a set of optimal perturbation vectors $\{\mathbf{L}_u^*\}$ to satisfy the average constraints of the problem, particularly the rate, leakage, and consistency constraints \eqref{eq:eit_slic_opt_rate_problem_env}, \eqref{eq:eit_slic_opt_leakage_problem_env}, and \eqref{eq:eit_slic_opt_consistency_L_problem_env}. The existence of such a distribution supported on a small number of points is guaranteed by Caratheodory's theorem and its information-theoretic counterpart, the Support Lemma \cite{ElGamalKim11, csiszar2011}. Specifically, a cardinality of $|\mathcal{U}|$ on the order of the number of active constraints is sufficient to construct the optimal average. This theoretical result is further supported by the numerical experiments shown in Table \ref{tab:varying_cardinality_U_appendix} in Appendix N which show that the achievable utility from the primal optimization stabilizes for a small, finite cardinality $|\mathcal{U}|$. This insight justifies our focus on determining the optimal Lagrange multipliers $(\rho^*, \nu^*)$ and the corresponding set of active perturbation directions, which is the subject of the following subsections.

\subsection{Local Secrecy Capacity} \label{subsec:c_lic_final}

The KKT conditions, particularly \eqref{eq:k_Lu_equals_0_final}, allow us to derive an expression for the maximum achievable utility in the EIT-Approximated SIC problem. This maximum utility, when scaled by $ \epsilon^2/2 $ to reflect an actual information rate, is termed the approximate local secrecy capacity, $ C_{\mathrm{SIC}} $.

\begin{thm} \label{thm:c_lic_final}
	Let $ (\rho^*, \nu^*) $ be the optimal Lagrange multipliers for the scaled rate and leakage constraints \eqref{eq:eit_slic_opt_rate_problem_env} and \eqref{eq:eit_slic_opt_leakage_problem_env}, respectively, as determined by the solution to the dual problem in Theorem \ref{thm:general_lp_for_multipliers}. The approximate local secrecy capacity, $ C_{\mathrm{LIC}} $, is given by the linear combination:
	\begin{equation} \label{eq:c_lic_formula_final}
		C_{\mathrm{SIC}} = \rho^* R + \nu^* \Theta
	\end{equation}
	where $ R $ and $ \Theta $ are the original unscaled constraint parameters from Problem \ref{prob:slic_original}. 
\end{thm}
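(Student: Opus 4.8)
The plan is to evaluate the optimal objective value of Problem \ref{prob:eit_slic_approximated} directly from the KKT conditions established in Theorem \ref{thm:kkt_conditions_Lu_final} and then rescale it to a genuine information rate. Write the optimal unscaled objective as $J^* = \sum_{u \in \mathcal{U}} P_U^*(u) (\mathbf{L}_u^*)^T V \mathbf{L}_u^*$, noting that only indices with $P_U^*(u) > 0$ and $\mathbf{L}_u^* \neq \mathbf{0}$ contribute, while all remaining terms vanish identically. For each contributing index the stationarity condition \eqref{eq:k_Lu_equals_0_final} gives $V \mathbf{L}_u^* = (\rho^* I + \nu^* \Lambda)\mathbf{L}_u^*$; left-multiplying by $(\mathbf{L}_u^*)^T$ yields the scalar identity $(\mathbf{L}_u^*)^T V \mathbf{L}_u^* = \rho^* (\mathbf{L}_u^*)^T I \mathbf{L}_u^* + \nu^* (\mathbf{L}_u^*)^T \Lambda \mathbf{L}_u^*$.

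Substituting this into $J^*$ and factoring out the (constant) multipliers gives
\[
	J^* = \rho^* \sum_{u \in \mathcal{U}} P_U^*(u) (\mathbf{L}_u^*)^T I \mathbf{L}_u^* + \nu^* \sum_{u \in \mathcal{U}} P_U^*(u) (\mathbf{L}_u^*)^T \Lambda \mathbf{L}_u^*.
\]
The two sums are precisely the quadratic forms appearing on the left of the rate and leakage constraints \eqref{eq:eit_slic_opt_rate_problem_env} and \eqref{eq:eit_slic_opt_leakage_problem_env}. At this point the complementary-slackness conditions \eqref{eq:cs_rate_final} and \eqref{eq:cs_leakage_final} do the essential work: they read $\rho^*(\text{rate form} - R') = 0$ and $\nu^*(\text{leakage form} - \Theta') = 0$, hence $\rho^* \cdot (\text{rate form}) = \rho^* R'$ and $\nu^* \cdot (\text{leakage form}) = \nu^* \Theta'$ regardless of whether the multiplier is strictly positive (constraint tight) or zero (both sides vanish). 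This delivers $J^* = \rho^* R' + \nu^* \Theta'$ with no case distinction.

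Finally, I would convert back to a true rate. Since $C_{\mathrm{SIC}} = (\epsilon^2/2) J^*$, substituting the scaled thresholds $R' = 2R/\epsilon^2$ and $\Theta' = 2\Theta/\epsilon^2$ cancels the $\epsilon^2/2$ prefactor exactly and produces $C_{\mathrm{SIC}} = \rho^* R + \nu^* \Theta$, as claimed.

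I expect the only delicate point to be the careful handling of complementary slackness, not any heavy computation. The tempting shortcut is to assume both constraints are simultaneously active and replace the quadratic forms by $R'$ and $\Theta'$ outright; this is unnecessary and would weaken the statement. The clean resolution is that \eqref{eq:cs_rate_final} and \eqref{eq:cs_leakage_final} already force $\rho^* \cdot (\text{rate form}) = \rho^* R'$ and $\nu^* \cdot (\text{leakage form}) = \nu^* \Theta'$ termwise in either regime, so the formula holds uniformly. A secondary bookkeeping check is that dropping the zero-weight and trivial ($\mathbf{L}_u^* = \mathbf{0}$) indices is harmless, which is why invoking \eqref{eq:k_Lu_equals_0_final} only on the support of $P_U^*$ suffices.
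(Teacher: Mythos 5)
Your proposal is correct and follows essentially the same route as the paper's proof in Appendix D: left-multiply the stationarity condition \eqref{eq:k_Lu_equals_0_final} by $(\mathbf{L}_u^*)^T$, substitute into the objective, invoke complementary slackness \eqref{eq:cs_rate_final}--\eqref{eq:cs_leakage_final}, and cancel the $\epsilon^2/2$ scaling against $R'$ and $\Theta'$. Your observation that complementary slackness yields $\rho^* \cdot (\text{rate form}) = \rho^* R'$ and $\nu^* \cdot (\text{leakage form}) = \nu^* \Theta'$ uniformly, with no case split, is exactly the point the paper makes (slightly more verbosely, by enumerating the multiplier sign cases).
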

\begin{proof}
	The proof is presented in Appendix D.
\end{proof}
The parameters $ \rho^* $ and $ \nu^* $ in \eqref{eq:c_lic_formula_final} act as prices for the rate and leakage constraints, respectively, indicating the sensitivity of $ C_{\mathrm{SIC}} $ to marginal changes in $R$ and $\Theta$. Furthermore, the KKT stationarity condition $ V \mathbf{L}_u^* = (\rho^*I + \nu^*\Lambda) \mathbf{L}_u^* $ can be rearranged to $ (V - \nu^*\Lambda)\mathbf{L}_u^* = \rho^* \mathbf{L}_u^* $. This reveals that for a given optimal $\nu^*$, the corresponding optimal $\rho^*$ is the generalized eigenvalue of the pencil $ (V - \nu^*\Lambda, I) $ associated with the optimal perturbation direction $\mathbf{L}_u^*$. For maximization, one is typically interested in solutions corresponding to the largest such feasible eigenvalue. This highlights the connection between the optimal multipliers and the spectral properties of the channel matrices.

\subsection{Optimal Lagrange Multipliers $\rho^*$ and $\nu^*$} \label{subsec:optimal_multipliers_lp_general}
The approximate local secrecy capacity $C_{\mathrm{SIC}}$, as given by Theorem \ref{thm:c_lic_final}, is determined by the optimal non-negative Lagrange multipliers $\rho^*$ and $\nu^*$. These multipliers are not arbitrary; they are intrinsically linked to the KKT conditions of Problem \ref{prob:eit_slic_approximated}. A central KKT condition for optimality is that the dual function, obtained from the Lagrangian \eqref{eq:lagrangian_slic_final} by maximizing over the primal variables $\{\mathbf{L}_u\}$, must be well-defined and finite. This requires that the matrix $K(\rho, \nu)$ be positive semidefinite on the valid perturbation subspace, $\mathcal{S}^{\perp} = \{\mathbf{L} | \mathbf{L}^T \mathbf{\sqrt{P_X}} = 0\}$. Remarkably, it can be shown that this condition, which is a linear matrix inequality, can be transformed into a finite set of linear inequalities. This allows the problem of finding the optimal multipliers to be formulated as a general LP.
\begin{thm} \label{thm:general_lp_for_multipliers}
	The optimal Lagrange multipliers $\rho^* \ge 0$ and $\nu^* \ge 0$ that determine $C_{\mathrm{SIC}}$ are the solution to the following LP:
		\begin{align}
			\underset{\rho, \nu}{\mathrm{max}} \quad & \rho R + \nu \Theta \label{eq:lp_general_obj_final_v3} \\
			\mathrm{s.t.} \quad & \rho + \nu \lambda_j \ge d_j \lambda_j, \quad \forall j \in \mathcal{J}_{\perp} \label{eq:lp_general_constr_eig_final_v3} \\
			& \rho R + \nu \Theta \le C_{\mathrm{max}} \label{eq:lp_general_constr_obj_upper_final_v3} \\
			& \rho \ge 0, \quad \nu \ge 0 \label{eq:lp_general_constr_nonneg_final_v3}					
		\end{align}
			where $\mathcal{J}_{\perp}$ indexes the $M = |\mathcal{X}|-1$ modes spanning the perturbation subspace $\mathcal{S}^{\perp}$. For each mode $j \in \mathcal{J}_{\perp}$, $d_j$ is the $j$-th generalized eigenvalue of the pencil $(V, \Lambda)$ restricted to $\mathcal{S}^{\perp}$, and $\lambda_j$ is the corresponding eigenvalue of $\Lambda$ that simultaneously diagonalizes $V$ and $\Lambda$. $R, \Theta > 0$, and $C_{\max} = \lambda_{\max}^{\perp}(V)R$.
\end{thm}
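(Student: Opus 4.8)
The plan is to derive the stated linear program as the Lagrangian dual of the quadratic program in Problem~\ref{prob:eit_slic_approximated}, and then to collapse the resulting semidefinite feasibility condition into the finite family of scalar inequalities~\eqref{eq:lp_general_constr_eig_final_v3} by simultaneously diagonalizing the pencil $(V,\Lambda)$ on the admissible subspace. First I would form the dual function by maximizing the Lagrangian~\eqref{eq:lagrangian_slic_final} over the perturbation vectors $\{\mathbf{L}_u\}$. After the orthogonality multipliers $\xi(u)$ restrict attention to $\mathcal{S}^{\perp}=\{\mathbf{L}:\mathbf{L}^T\mathbf{\sqrt{P_X}}=0\}$, the $\mathbf{L}_u$-dependence of the inner problem is carried entirely by the quadratic form $\mathbf{L}_u^T(V-\rho I-\nu\Lambda)\mathbf{L}_u$. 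The supremum over each $\mathbf{L}_u$ is finite exactly when this form is non-positive on $\mathcal{S}^{\perp}$, i.e.\ when $K(\rho,\nu)=-V+\rho I+\nu\Lambda\succeq 0$ on $\mathcal{S}^{\perp}$. This linear matrix inequality is the dual-feasibility constraint, and it is the natural companion of the null-space stationarity condition~\eqref{eq:k_Lu_equals_0_final} in Theorem~\ref{thm:kkt_conditions_Lu_final}: at optimality $K(\rho^*,\nu^*)$ is singular along the active mode.

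The central step is to transform this matrix inequality into finitely many linear inequalities. Invoking Assumption~\ref{ass_1}, $\Lambda$ is positive definite on $\mathcal{S}^{\perp}$, so the generalized eigenvalue problem $V\mathbf{w}_j=d_j\Lambda\mathbf{w}_j$ yields a complete set of modes $\{\mathbf{w}_j\}_{j\in\mathcal{J}_{\perp}}$, indexed by the $M=|\mathcal{X}|-1$ dimensions of $\mathcal{S}^{\perp}$, that simultaneously diagonalize $V$ and $\Lambda$. Writing $\lambda_j$ for the associated eigenvalue of $\Lambda$ and using $\mathbf{w}_j^T V\mathbf{w}_j=d_j\lambda_j$, evaluation of $\mathbf{w}_j^T K(\rho,\nu)\mathbf{w}_j\ge 0$ produces precisely $\rho+\nu\lambda_j\ge d_j\lambda_j$, which is \emph{necessary} for the inequality. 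I would then prove \emph{sufficiency}: expanding an arbitrary $\mathbf{L}\in\mathcal{S}^{\perp}$ in the diagonalizing basis, the leakage and utility forms decouple as $\sum_j c_j^2\lambda_j$ and $\sum_j c_j^2 d_j\lambda_j$, so that in the common eigenbasis the rate, leakage, and utility matrices are jointly diagonal and $K(\rho,\nu)\succeq 0$ on $\mathcal{S}^{\perp}$ is equivalent to the system $\{\rho+\nu\lambda_j\ge d_j\lambda_j\}_{j\in\mathcal{J}_{\perp}}$.

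Finally, I would establish strong duality for the quadratic program, taking as a Slater point a small perturbation supported on the dominant mode so that the rate and leakage constraints~\eqref{eq:eit_slic_opt_rate_problem_env}--\eqref{eq:eit_slic_opt_leakage_problem_env} hold strictly. Strong duality then identifies the optimal objective with $C_{\mathrm{SIC}}=\rho^*R+\nu^*\Theta$ from Theorem~\ref{thm:c_lic_final} and makes $(\rho^*,\nu^*)$ the optimizers of a linear program in $(\rho,\nu)$ over the polyhedron cut out by~\eqref{eq:lp_general_constr_eig_final_v3} and $\rho,\nu\ge 0$. The cap~\eqref{eq:lp_general_constr_obj_upper_final_v3} with $C_{\max}=\lambda_{\max}^{\perp}(V)R$ encodes the leakage-unconstrained regime $\nu^*=0$, where the dominant generalized eigenvalue $\rho^*=\lambda_{\max}^{\perp}(V)$ is active, and bounds the attainable objective from above.

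I expect the main obstacle to be the sufficiency direction of the reduction. The rate matrix is the identity, which is \emph{not} co-diagonalized by the generalized eigenvectors of $(V,\Lambda)$ in the Euclidean inner product unless the eigenbasis is simultaneously orthonormal; consequently, checking $K(\rho,\nu)$ only along the pencil's eigenvectors a priori leaves uncontrolled cross terms in $\|\mathbf{L}\|^2$. The careful point is therefore to choose the diagonalizing basis (enabled by the positive definiteness of $\Lambda$ on $\mathcal{S}^{\perp}$ in Assumption~\ref{ass_1}) so that all three forms are rendered diagonal together, which is exactly what certifies that the finite eigenvalue constraints imply semidefiniteness on all of $\mathcal{S}^{\perp}$ and closes the equivalence between the matrix inequality and the linear program.
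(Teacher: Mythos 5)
Your overall route is the same as the paper's (Appendix E): dual boundedness of the Lagrangian forces $K(\rho,\nu)=-V+\rho I+\nu\Lambda$ to be positive semidefinite on $\mathcal{S}^{\perp}$, Assumption~\ref{ass_1} lets one pass to the pencil $(V,\Lambda)$, the matrix inequality is replaced by the $M=|\mathcal{X}|-1$ scalar constraints $\rho+\nu\lambda_j\ge d_j\lambda_j$, and the bounding constraint with $C_{\max}=\lambda_{\max}^{\perp}(V)R$ closes the LP. The paper parametrizes via $\tilde V=\Lambda^{-1/2}V\Lambda^{-1/2}$ and its orthonormal eigenvectors, while you work directly with the Euclidean-normalized generalized eigenvectors $\mathbf{w}_j$; these are equivalent descriptions of the same reduction.

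The genuine problem is the step you yourself flag as the crux: your proposed resolution, namely ``choose the diagonalizing basis so that all three forms are rendered diagonal together,'' is not available in general. A congruence that diagonalizes the pencil $(V,\Lambda)$ also diagonalizes $I$ only when the normalized generalized eigenvectors are mutually orthogonal in the Euclidean sense, which is exactly the commuting case treated separately in Theorem~\ref{thm:commuting_matrices_unified}; positive definiteness of $\Lambda$ on $\mathcal{S}^{\perp}$ buys simultaneous diagonalization of two forms, never three. Concretely, writing $\mathbf{L}=\sum_j c_j\mathbf{w}_j$ with $\|\mathbf{w}_j\|=1$, the leakage and utility forms do decouple, but $\|\mathbf{L}\|^2=\mathbf{c}^T G\,\mathbf{c}$ with $G_{ij}=\mathbf{w}_i^T\mathbf{w}_j$ a unit-diagonal Gram matrix; your finite system implies the matrix inequality only if $\mathbf{c}^T G\,\mathbf{c}\ge\|\mathbf{c}\|^2$ for all $\mathbf{c}$, i.e.\ $G\succeq I$, and since $\mathrm{tr}\,G=M$ this forces $G=I$, i.e.\ orthonormal $\mathbf{w}_j$. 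So your constraints are necessary for dual feasibility but the claimed sufficiency does not follow from your argument outside the commuting case. It is worth noting that the paper's own proof shares this soft spot (it verifies the transformed inequality only on the eigenvectors of $\tilde V$, along which $\Lambda^{-1}$ still carries off-diagonal terms), so you have correctly located the real difficulty; but the fix you propose is impossible, and closing the gap requires either the commuting hypothesis or a different certificate for the semidefiniteness. A secondary issue: invoking Slater's condition cannot establish strong duality here, because Problem~\ref{prob:eit_slic_approximated} maximizes a convex quadratic and is therefore non-convex; the paper instead obtains the identity $C_{\mathrm{SIC}}=\rho^*R+\nu^*\Theta$ through the KKT and complementary-slackness argument of Theorem~\ref{thm:c_lic_final}.
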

\begin{proof}
	The proof is presented in Appendix E.
\end{proof}

This theorem is a significant result as it makes the computation of optimal multipliers analytically and numerically tractable for any general wiretap channel within the EIT framework. It establishes that the complex, non-convex problem of local secrecy capacity can be understood through the tractable framework of LP. Moreover, it reveals that the constraints of this LP are determined by the generalized eigenvalues of the channel matrices $V$ and $\Lambda$. This provides us with the insight that the local trade-offs between utility, secrecy, and encoding complexity are governed by the spectral properties of the legitimate and eavesdropper channels. The validity of solving this LP via standard solvers is confirmed numerically in Appendix N and Table \ref{tab:lp_vertex_search_appendix}, where the solver's result is shown to match a direct exhaustive search over the feasible vertices.

For the LP in Theorem \ref{thm:general_lp_for_multipliers} to be meaningful, its feasible region must be non-empty. The following lemma provides a sufficient condition for the existence of a feasible solution.
\begin{lem} \label{thm:lp_feasibility_condition_final_v2}
	A sufficient condition for the LP in Theorem \ref{thm:general_lp_for_multipliers} to have a feasible solution is:
	\begin{equation} \label{eq:lp_feasibility_condition_formula_final_v2}
		\lambda_{\max}^{\perp}(V) > \frac{\Theta}{R} \cdot d_{\max}^{\perp}(V, \Lambda),
	\end{equation}
	where $\lambda_{\max}^{\perp}(V)$ and $d_{\max}^{\perp}(V, \Lambda)$ are the largest eigenvalue of $V$ and generalized eigenvalue of the pencil $(V, \Lambda)$,respectively, restricted to the perturbation subspace $\mathcal{S}^{\perp}$.
\end{lem}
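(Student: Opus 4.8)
The plan is to prove feasibility constructively, by exhibiting a single point $(\rho,\nu)$ that satisfies every constraint of the LP in Theorem~\ref{thm:general_lp_for_multipliers} whenever the stated inequality holds. The natural candidate is the corner point at which the rate constraint is left unpriced and the entire burden is placed on the leakage multiplier, namely
\begin{equation*}
(\rho,\nu) = \bigl(0,\; d_{\max}^{\perp}(V,\Lambda)\bigr).
\end{equation*}
The motivation is that the eigenvalue constraints \eqref{eq:lp_general_constr_eig_final_v3} force $(\rho,\nu)$ to be ``large enough'', while the objective cap \eqref{eq:lp_general_constr_obj_upper_final_v3} limits how large it may be; choosing the smallest admissible $\nu$ with $\rho=0$ minimizes the pressure on the cap.

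First I would verify the eigenvalue constraints \eqref{eq:lp_general_constr_eig_final_v3}. Setting $\rho=0$, each constraint reads $\nu\lambda_j \ge d_j\lambda_j$. Under Assumption~\ref{ass_1}, $\Lambda$ is positive definite on $\mathcal{S}^{\perp}$, so $\lambda_j > 0$ for every $j\in\mathcal{J}_{\perp}$, and the constraint collapses to $\nu \ge d_j$. Since $\nu = d_{\max}^{\perp}(V,\Lambda) = \max_{j}d_j$, all of these hold simultaneously. Next I would check dual feasibility \eqref{eq:lp_general_constr_nonneg_final_v3}: clearly $\rho = 0 \ge 0$, and because $V\succeq 0$ while $\Lambda\succ 0$ on $\mathcal{S}^{\perp}$, every generalized eigenvalue satisfies $d_j = (\boldsymbol{\phi}_j^T V\boldsymbol{\phi}_j)/(\boldsymbol{\phi}_j^T\Lambda\boldsymbol{\phi}_j) \ge 0$, hence $\nu = d_{\max}^{\perp} \ge 0$.

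It then remains to confront the objective cap \eqref{eq:lp_general_constr_obj_upper_final_v3}, which is the only place the hypothesis enters. Evaluating the objective at the candidate gives $\rho R + \nu\Theta = d_{\max}^{\perp}(V,\Lambda)\,\Theta$, and the cap is $C_{\max} = \lambda_{\max}^{\perp}(V)\,R$. The stated condition $\lambda_{\max}^{\perp}(V) > \tfrac{\Theta}{R}\,d_{\max}^{\perp}(V,\Lambda)$ rearranges (using $R>0$) precisely to $d_{\max}^{\perp}(V,\Lambda)\,\Theta < \lambda_{\max}^{\perp}(V)\,R = C_{\max}$, so the candidate satisfies \eqref{eq:lp_general_constr_obj_upper_final_v3} with strict slack. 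Having met all four families of constraints, the point is feasible and the LP's feasible region is non-empty.

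The argument is short, so the only real obstacle is bookkeeping rather than depth: I must ensure that a single $\nu$ dominates the entire family \eqref{eq:lp_general_constr_eig_final_v3}, which hinges on the positivity $\lambda_j>0$ supplied by Assumption~\ref{ass_1} (without it one could not divide through by $\lambda_j$), and on the nonnegativity of the generalized eigenvalues that follows from the semidefiniteness of $V$ and $\Lambda$. I would also remark that the condition is only sufficient, not necessary: fixing $\rho=0$ selects one vertex of the polyhedron defined by \eqref{eq:lp_general_constr_eig_final_v3} and \eqref{eq:lp_general_constr_nonneg_final_v3}, whereas the exact feasibility boundary would be obtained by minimizing $\rho R + \nu\Theta$ over that polyhedron and comparing the minimum to $C_{\max}$; the chosen corner yields a clean, closed-form criterion at the cost of some tightness.
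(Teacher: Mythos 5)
Your proof is correct, but it takes a genuinely different route from the paper's. You argue constructively: you exhibit the explicit witness $(\rho,\nu)=\bigl(0,\,d_{\max}^{\perp}(V,\Lambda)\bigr)$ and verify the three families of constraints directly, using $\lambda_j>0$ (from Assumption~\ref{ass_1}) to reduce the eigenmode constraints to $\nu\ge d_j$, and the hypothesis (with $R>0$) to show $d_{\max}^{\perp}(V,\Lambda)\,\Theta < \lambda_{\max}^{\perp}(V)\,R = C_{\max}$. The paper instead casts the feasibility system as $A\mathbf{x}\le\mathbf{b}$, $\mathbf{x}\ge\mathbf{0}$ and invokes Farkas' Lemma: it shows that any nonzero solution of the alternative system ($\mathbf{y}\ge\mathbf{0}$, $A^T\mathbf{y}\ge\mathbf{0}$, $\mathbf{b}^T\mathbf{y}<0$) must have $y_{M+1}>0$, and then derives the contradiction
\begin{equation*}
\mathbf{b}^T\mathbf{y} \;\ge\; \sum_{j=1}^{M}\lambda_j y_j\left(\frac{R}{\Theta}\,\lambda_{\max}^{\perp}(V) - d_j\right) \;>\; 0 .
\end{equation*}
Both arguments ultimately rest on the same two ingredients, namely $\lambda_j>0$ and the hypothesis rewritten as $\tfrac{R}{\Theta}\lambda_{\max}^{\perp}(V) > d_j$ for every $j\in\mathcal{J}_{\perp}$, so they are logically parallel; but yours is more elementary and more informative, since the witness $(0,\,d_{\max}^{\perp}(V,\Lambda))$ is precisely the leakage-dominant vertex that reappears in the regime analysis of Theorem~\ref{thm_c_lic_regimes}, making the sufficiency condition geometrically transparent. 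What the Farkas route buys is a systematic template that would still apply if no closed-form feasible point suggested itself; what your route buys is brevity, a reusable certificate, and, as you correctly remark, a clear view of why the condition is sufficient but not necessary (necessity would require minimizing $\rho R+\nu\Theta$ over the whole polyhedron rather than evaluating one vertex).
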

\begin{proof}
	The proof is presented in Appendix F.
\end{proof}

Condition \eqref{eq:lp_feasibility_condition_formula_final_v2} can be rewritten as $R \cdot \lambda_{\max}^{\perp}(V) > \Theta \cdot d_{\max}^{\perp}(V, \Lambda)$. This provides an insightful relationship between the channel properties and the SIC problem parameters. It requires that the maximum achievable utility in a purely rate-dominant regime must be strictly greater than the maximum achievable utility in a purely leakage-dominant regime. This ensures that the feasible region for $(\rho, \nu)$ is non-empty and allows for a non-trivial solution where positive utility can be achieved. It essentially requires that Bob's best channel gain is sufficiently high to overcome the cost imposed by the channel's best leakage efficiency for the given rate and leakage thresholds.

\subsection{Local Secrecy Capacity Regimes} 
\label{subsec:clic_regimes_final_v3_bounded}
The LP established in Theorem \ref{thm:general_lp_for_multipliers} provides a systematic method for determining the optimal Lagrange multipliers $(\rho^*, \nu^*)$, and thus the value of $C_{\mathrm{SIC}}$. The solution to this LP is not static; it adapts to the specific problem parameters, namely the rate budget $R$, the leakage allowance $\Theta$, and the channel characteristics encapsulated by the eigenvalues. Specifically, the optimal vertex of the LP's feasible region shifts as these parameters change. This behavior gives rise to distinct operational regimes for the approximate local secrecy capacity. We can characterize these regimes by analyzing which set of constraints becomes active at the optimum, which in turn dictates the optimal signaling strategy. For instance, in a leakage-dominant regime, the solution provides a clear principle: the system should prioritize perturbation directions (i.e., signaling dimensions) that align with the channel's most leakage-efficient eigenmodes. The following theorem formalizes this characterization by describing the solution for $C_{\mathrm{SIC}}$ within each of these fundamental regimes. 
\begin{thm} \label{thm_c_lic_regimes}	
The approximate local secrecy capacity is given by $C_{\mathrm{SIC}} = \rho^*R + \nu^*\Theta$, where $(\rho^*, \nu^*)$ is the solution to the LP in Theorem \ref{thm:general_lp_for_multipliers}. The solution can be expressed as the maximum of candidate capacities arising from the vertices of the LP's feasible region:
\begin{equation} \label{eq:clic_max_over_regimes_v2}
	C_{\mathrm{SIC}} = \max \{C_{R}, C_{\Theta}, C_{\mathrm{inter}}\},	
\end{equation}
where $C_{R} = \min\{\lambda_{\max}^{\perp}(V)R, C_{\mathrm{max}}\}$ is the capacity from the optimal vertex on the $\rho$-axis (Rate-dominant regime); $C_{\Theta} = \min\{d_{\max}^{\perp}(V, \Lambda)\Theta, C_{\mathrm{max}}\}$ is the capacity from the optimal vertex on the $\nu$-axis (Leakage-dominant regime); and $C_{\mathrm{inter}}$ is the capacity achieved if the optimal vertex lies in the interior of the first quadrant ($\rho^*>0, \nu^*>0$), and is given by the maximum objective value over all such feasible vertices:
	\begin{equation} \label{eq:c_inter_definition_v2}
		C_{\mathrm{inter}} = \max_{(\rho_{jk}^*, \nu_{jk}^*) \in \mathcal{V}_{\mathrm{inter}}} \left\{ \rho_{jk}^* R + \nu_{jk}^* \Theta \right\},
	\end{equation}
	where $\mathcal{V}_{\mathrm{inter}}$ is the set of all feasible vertices not on the axes. A point $(\rho_{jk}^*, \nu_{jk}^*)$ belongs to $\mathcal{V}_{\mathrm{inter}}$ if it is the solution to the system of equations for two active constraints, and it satisfies the feasibility conditions:
	\begin{subequations} \label{eq:valid_pair_conditions}
		\begin{align}
			& \rho_{jk}^* > 0, \quad \nu_{jk}^* > 0, \\
			& \rho_{jk}^* + \nu_{jk}^* (d_{\Lambda})_m \ge (d_V)_m, \quad \forall m \in \mathcal{J}_{\perp}, \\
			& \rho_{jk}^* R + \nu_{jk}^* \Theta \le C_{\mathrm{max}}.
		\end{align}
	\end{subequations}

If the set $\mathcal{V}_{\mathrm{inter}}$ is empty, $C_{\mathrm{inter}}$ is taken to be $-\infty$.
\end{thm}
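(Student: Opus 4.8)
The plan is to obtain the result as a direct application of the Fundamental Theorem of Linear Programming to the two-variable LP of Theorem \ref{thm:general_lp_for_multipliers}, followed by a systematic enumeration of its vertices. First I would verify the two hypotheses of that theorem: the feasible region is non-empty, which is guaranteed by the sufficient condition of Lemma \ref{thm:lp_feasibility_condition_final_v2}, and the objective $\rho R + \nu\Theta$ is bounded above, which follows immediately from the cap constraint \eqref{eq:lp_general_constr_obj_upper_final_v3}. Since the feasible set is a non-empty polyhedron in $(\rho,\nu)$ on which a linear objective is bounded, the optimum is attained at an extreme point (vertex) of the polyhedron. Consequently $C_{\mathrm{SIC}}$ equals the extremal objective value over the finite vertex set, and the whole task reduces to identifying those vertices and evaluating the objective on each.

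Next I would enumerate the vertices. In $\mathbb{R}^2$ every vertex is the unique solution of two linearly independent active constraints drawn from the boundary types $\{\rho = 0,\ \nu = 0,\ \rho + \nu\lambda_j = d_j\lambda_j\ (j\in\mathcal{J}_{\perp}),\ \rho R + \nu\Theta = C_{\max}\}$. I would then partition the candidate vertices into the three mutually exhaustive classes used in the statement: those lying on the $\rho$-axis ($\nu=0$), those on the $\nu$-axis ($\rho=0$), and those lying strictly inside the first quadrant ($\rho>0,\ \nu>0$). This trichotomy is exhaustive because any vertex either meets one of the two axes or it does not.

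Then I would evaluate the objective within each class to recover the stated candidate capacities. On the $\rho$-axis ($\nu = 0$) the eigenvalue constraints collapse to $\rho \ge d_j\lambda_j$ for every $j$, whose binding form involves $\lambda_{\max}^{\perp}(V) = \max_j d_j\lambda_j$, while the cap contributes the bound $C_{\max}$; together these produce the candidate $C_R = \min\{\lambda_{\max}^{\perp}(V)R,\ C_{\max}\}$. The $\nu$-axis ($\rho = 0$) is symmetric, with the binding constraint governed by $d_{\max}^{\perp}(V,\Lambda) = \max_j d_j$ and the cap, giving $C_\Theta = \min\{d_{\max}^{\perp}(V,\Lambda)\Theta,\ C_{\max}\}$. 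For the interior class, each candidate is obtained by solving the $2\times 2$ linear system formed by two distinct active eigenvalue constraints; such a point is a genuine vertex of the feasible region only if it also satisfies all remaining eigenvalue inequalities, the cap, and strict positivity, which is precisely the admissibility test \eqref{eq:valid_pair_conditions}. Optimizing the objective over the admissible set $\mathcal{V}_{\mathrm{inter}}$ yields $C_{\mathrm{inter}}$ as in \eqref{eq:c_inter_definition_v2}, with the convention $C_{\mathrm{inter}} = -\infty$ when that set is empty. Since the three classes exhaust the vertices and the LP optimum is attained at a vertex, $C_{\mathrm{SIC}}$ equals the extremum of the three candidate values, which is exactly \eqref{eq:clic_max_over_regimes_v2}; the operational regime is then identified by reading off, through complementary slackness, which constraints are active at the optimizing vertex.

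The step I expect to be the main obstacle is the rigorous handling of feasibility and degeneracy in the interior class. A pair of active constraints always determines a point, but that point is a vertex of the feasible region only when it simultaneously satisfies every other eigenvalue inequality, the cap, and the positivity requirements; establishing that the conditions \eqref{eq:valid_pair_conditions} are both necessary and sufficient for admissibility, and that no optimal vertex is overlooked in degenerate configurations (coincident generalized eigenvalues, an eigenvalue line tangent to the cap, or a point where three or more constraints meet), requires care. I would address this by arguing that degeneracy can only create coincident vertices already captured by the enumeration, so that the finite extremum in \eqref{eq:clic_max_over_regimes_v2} is unaffected.
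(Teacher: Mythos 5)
Your proposal is correct and follows essentially the same route as the paper's proof in Appendix G: invoke the fundamental theorem of linear programming, classify the vertices of the two-variable LP into $\rho$-axis, $\nu$-axis, and interior points, and evaluate the objective on each class to obtain $C_{R}$, $C_{\Theta}$, and $C_{\mathrm{inter}}$ with the same admissibility test \eqref{eq:valid_pair_conditions}. The only minor difference is that your detailed interior enumeration lists only eigenmode--eigenmode intersections while the paper additionally lists intersections of an eigenmode line with the cap $\rho R + \nu\Theta = C_{\mathrm{max}}$; this is immaterial for the final maximum because every point on the cap line has objective exactly $C_{\mathrm{max}} = \lambda_{\max}^{\perp}(V)R$, a value already captured by $C_{R}$ (and your general vertex enumeration does include the cap as a boundary type).
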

\begin{proof}
		The proof is presented in Appendix G.
\end{proof}
Theorem \ref{thm_c_lic_regimes} provides a complete characterization of the EIT-approximated secrecy capacity, partitioning the solution space into distinct operational regimes. The dominant regimes have clear closed-form expressions corresponding to EIT approximations of the IB\cite{TishbyPereiraBialek00} and a privacy funnel\cite{MakhdoumiSalamatianFawazMedard14} like problem, respectively. The intermediate regime reflects a direct trade-off where the optimal strategy is determined by the intersection of multiple performance limits, as captured by the active constraints of the LP in Theorem 8.  For instance, the BSWC, having only one relevant perturbation mode as will be shown, provides a clear example where this structure results in a simple piecewise formula for $C_{\mathrm{SIC}}$. The behavior of these regimes across different channels, including the BSWC and more complex multi-mode channels with larger alphabets, will be numerically demonstrated in Section \ref{sec:numerical_illustations}.

\subsection{The Commuting Matrices Case} \label{subsec:commuting_matrices_final_v5}

The general analysis of the EIT-approximated SIC problem simplifies significantly under the special condition that the channel matrices $V$ and $\Lambda$ commute. While the condition of commutativity implies a specific alignment between the principal perturbation directions of Bob's and Eve's effective channels and is not generally expected to hold for arbitrary wiretap channels, its analysis provides deep, tractable insights into the optimal solution structure. First, it serves as an essential analytical benchmark where the problem becomes fully tractable, allowing for a clear and intuitive understanding of the optimal strategies. Second, it is not purely a mathematical abstraction; it provides the exact EIT solution for important canonical models, such as the BSWC with a uniform input.

When $V$ and $\Lambda$ commute, they are simultaneously diagonalizable by a single orthogonal matrix $Q$ whose columns $\{\mathbf{e}_j\}_{j=1}^{|\mathcal{X}|}$ are common orthonormal eigenvectors \cite{horn2013matrix,cookbook} spanning the subspace orthogonal to $\sqrt{\mathbf{P_X}}$ the following simplifications hold.
\begin{thm} \label{thm:commuting_matrices_unified}
	If the matrices $V$ and $\Lambda$ commute and the eigenvectors $\{\mathbf{e}_j\}$ are ordered such that $\{\mathbf{e}_j\}_{j \in \mathcal{J}_{\perp}}$ then:
	
	 The LP from Theorem \ref{thm:general_lp_for_multipliers} simplifies as its constraints \eqref{eq:lp_general_constr_eig_final_v3} can be expressed directly using the standard eigenvalues of $V$ and $\Lambda$:
		\begin{equation} \label{eq:lp_commuting_constr}
			\rho + \nu (d_{\Lambda})_j \ge (d_V)_j, \quad \forall j \in \mathcal{J}_{\perp},
		\end{equation}
		where $\mathcal{J}_{\perp} = \{2, \dots, |\mathcal{X}|\}$ form an orthonormal basis for the EIT perturbation subspace $\mathcal{S}^{\perp}$; and $(d_V)_j, (d_{\Lambda})_j,$ and $(d_I)_j=1$ are the respective eigenvalues of $V, \Lambda,$ and $I$ for each eigenmode $j \in \mathcal{J}_{\perp}$.
		
	 Moreover, letting $(\rho^*, \nu^*)$ be the optimal solution from this LP, the KKT stationarity condition \eqref{eq:k_Lu_equals_0_final} implies that if an optimal perturbation $ \mathbf{L}_u^* = \sum_{j \in \mathcal{J}_{\perp}} (\tilde{L}_u^*)_j \mathbf{e}_j $ has a non-zero component along a common eigenmode $\mathbf{e}_j$ (i.e., $(\tilde{L}_u^*)_j \neq 0$), then the eigenvalues for that mode must satisfy:
		\begin{equation} \label{eq:commuting_eigenvalue_condition}
			(d_V)_j = \rho^* + \nu^* (d_{\Lambda})_j.
		\end{equation}
\end{thm}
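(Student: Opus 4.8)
The plan is to prove the two assertions of Theorem~\ref{thm:commuting_matrices_unified} by exploiting the simultaneous diagonalizability of $V$ and $\Lambda$. First I would fix the common orthonormal eigenbasis $\{\mathbf{e}_j\}_{j=1}^{|\mathcal{X}|}$ guaranteed by commutativity, chosen so that $\mathbf{e}_1 \propto \sqrt{\mathbf{P_X}}$ (which is automatic since $\sqrt{\mathbf{P_X}}$ is the top right singular vector of both DTMs, hence a common eigenvector of $V$ and $\Lambda$), and so that $\mathcal{J}_{\perp} = \{2,\dots,|\mathcal{X}|\}$ indexes the basis of the valid perturbation subspace $\mathcal{S}^{\perp}$. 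In this basis $V\mathbf{e}_j = (d_V)_j \mathbf{e}_j$ and $\Lambda\mathbf{e}_j = (d_{\Lambda})_j \mathbf{e}_j$, so that the generalized eigenvalues of the pencil $(V,\Lambda)$ restricted to $\mathcal{S}^{\perp}$ are simply the ratios and, more importantly, the eigenvalue $\lambda_j$ of $\Lambda$ appearing in Theorem~\ref{thm:general_lp_for_multipliers} coincides with $(d_{\Lambda})_j$ while $d_j\lambda_j$ reduces to $(d_V)_j$.

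For the first claim, I would substitute these identifications directly into the general LP constraint \eqref{eq:lp_general_constr_eig_final_v3}. The general constraint reads $\rho + \nu\lambda_j \ge d_j\lambda_j$ for each $j\in\mathcal{J}_{\perp}$; since in the commuting case $\lambda_j = (d_{\Lambda})_j$ and $d_j = (d_V)_j/(d_{\Lambda})_j$ (the generalized eigenvalue of the pencil along $\mathbf{e}_j$), the product $d_j\lambda_j$ equals $(d_V)_j$, and the constraint collapses to the stated form $\rho + \nu(d_{\Lambda})_j \ge (d_V)_j$. The key observation is that under commutativity the eigenvectors that simultaneously diagonalize $V$ and $\Lambda$ are the same orthonormal vectors $\mathbf{e}_j$, so no separate congruence transformation is needed and the generalized eigenproblem decouples mode by mode. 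This is essentially a direct verification once the basis alignment is in place.

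For the second claim, I would start from the KKT stationarity condition \eqref{eq:k_Lu_equals_0_final}, namely $(-V + \rho^*I + \nu^*\Lambda)\mathbf{L}_u^* = \mathbf{0}$, and expand the optimal perturbation in the common eigenbasis as $\mathbf{L}_u^* = \sum_{j\in\mathcal{J}_{\perp}} (\tilde{L}_u^*)_j \mathbf{e}_j$. Applying $-V + \rho^*I + \nu^*\Lambda$ termwise and using $V\mathbf{e}_j = (d_V)_j\mathbf{e}_j$, $\Lambda\mathbf{e}_j = (d_{\Lambda})_j\mathbf{e}_j$, and $I\mathbf{e}_j = \mathbf{e}_j$, the condition becomes $\sum_{j\in\mathcal{J}_{\perp}} \big(-(d_V)_j + \rho^* + \nu^*(d_{\Lambda})_j\big)(\tilde{L}_u^*)_j \mathbf{e}_j = \mathbf{0}$. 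Since the $\{\mathbf{e}_j\}$ are linearly independent, each coefficient must vanish individually, so whenever $(\tilde{L}_u^*)_j \neq 0$ the scalar factor is forced to zero, yielding $(d_V)_j = \rho^* + \nu^*(d_{\Lambda})_j$, which is exactly \eqref{eq:commuting_eigenvalue_condition}.

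I do not anticipate a serious obstacle here, since commutativity removes the main technical difficulty of the general case; the only point requiring minor care is the justification that $\sqrt{\mathbf{P_X}}$ is a common eigenvector, so that the perturbation subspace $\mathcal{S}^{\perp}$ is invariant under both $V$ and $\Lambda$ and is spanned by the remaining eigenvectors $\{\mathbf{e}_j\}_{j\in\mathcal{J}_{\perp}}$. This follows from the structure of the DTMs: $B_{Y|X}\sqrt{\mathbf{P_X}} = \sqrt{\mathbf{P_Y}}$ and $B_{Y|X}^T\sqrt{\mathbf{P_Y}} = \sqrt{\mathbf{P_X}}$ give $V\sqrt{\mathbf{P_X}} = \sqrt{\mathbf{P_X}}$, and identically $\Lambda\sqrt{\mathbf{P_X}} = \sqrt{\mathbf{P_X}}$, so $\sqrt{\mathbf{P_X}}$ is an eigenvector of both with eigenvalue $1$. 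The remainder is the termwise expansion and the linear-independence argument, both routine.
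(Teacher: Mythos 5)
Your proposal is correct and follows essentially the same route as the paper's proof: identify the generalized eigenvalues along the common eigenbasis as ratios $d_j = (d_V)_j/(d_\Lambda)_j$ with $\lambda_j = (d_\Lambda)_j$ and substitute into the constraints \eqref{eq:lp_general_constr_eig_final_v3} of Theorem \ref{thm:general_lp_for_multipliers}, then expand $\mathbf{L}_u^*$ in the common eigenbasis and invoke linear independence to force each coefficient of \eqref{eq:k_Lu_equals_0_final} to vanish. Your explicit verification via the DTM identities that $\mathbf{\sqrt{P_X}}$ is a common eigenvector with eigenvalue $1$ (so that $\mathcal{S}^{\perp}$ is invariant and spanned by the remaining $\mathbf{e}_j$) is a detail the paper leaves implicit, while the only detail the paper treats that you omit is the degenerate case $(d_\Lambda)_j = 0$, which is in any event precluded by Assumption \ref{ass_1}.
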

\begin{proof}
			The proof is presented in Appendix H.
\end{proof}

Theorem \ref{thm:commuting_matrices_unified} reveals that in the commuting case, the EIT-approximated SIC problem decouples across the shared eigenmodes. This can be intuitively understood as an optimization over the total expected squared magnitudes of the perturbation coefficients, $E_j \triangleq \sum_u P_U(u)((\tilde{L}_u)_j)^2$, for each available eigenmode $j \in \mathcal{J}_{\perp}$. The problem becomes maximizing the total utility $\sum_j (d_V)_j E_j$ subject to a rate constraint on $\sum_j E_j$ and a leakage constraint on $\sum_j (d_{\Lambda})_j E_j$. This is an LP in terms of the variables $\{E_j\}$. A conceptual illustration for a 2-mode system is provided in Appendix H alongside the proof.

\section{Numerical Illustrations and Comparisons} \label{sec:numerical_illustations}

In this section, we provide numerical examples to validate the proposed EIT framework and illustrate the key theoretical results from Section \ref{sec:approx_capacity_solution_structure}. We begin by benchmarking the EIT approximation methodology against the well-known IB problem. We then demonstrate the validity and utility of the general LP for determining the optimal Lagrange multipliers in a complex, multi-mode channel scenario. Finally, we explore the behavior of the approximate local secrecy capacity and its operational regimes, using both a general multi-mode system and the canonical BSWC as illustrative examples.

\subsection{Validation of the EIT framework} \label{subsubsec:ib_validation}
To assess the efficacy of the EIT approximation methodology we consider its application to the well-known IB problem \cite{TishbyPereiraBialek00}. The IB problem seeks to maximize $I(U;Y)$ subject to $I(U;X) \le R$. An EIT-approximated IB (EIT-IB) problem can be formulated by omitting the leakage-related terms from Problem \ref{prob:eit_slic_approximated}.

\begin{figure}[!htbp]
	\centering
	\includegraphics[width=0.8\textwidth]{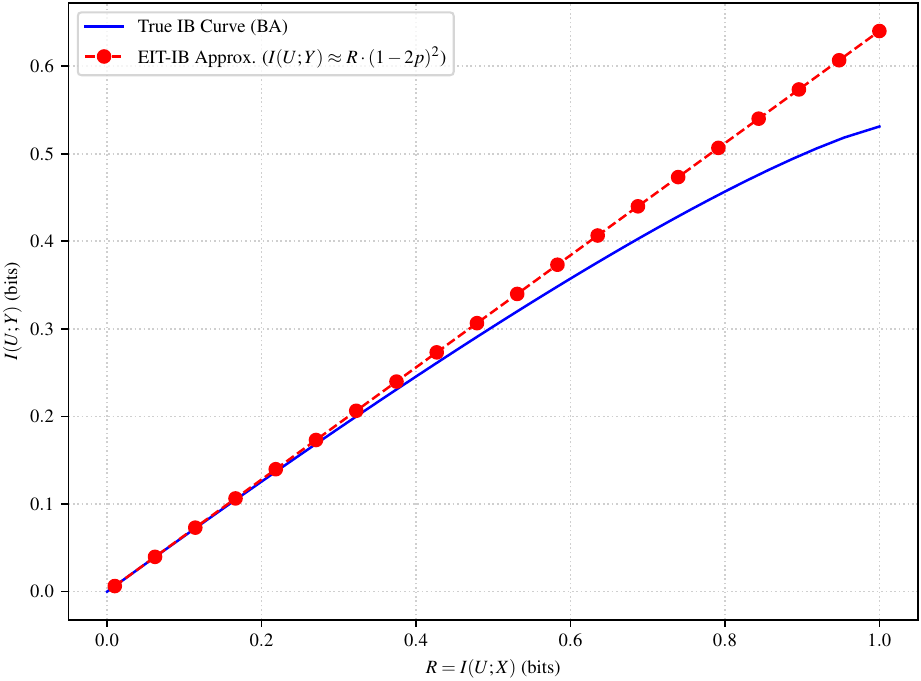} 
	\caption{IB Comparison for a BSC($p_{\mathrm{bob}}=0.1$) with uniform $P_X$ and binary $U$. Solid blue represents the true IB curve (Blahut-Arimoto solution) and red circles dashed the analytical EIT-IB.}
	\label{fig:EIT_IB_vs_BA_final}
\end{figure}

As observed in Figure \ref{fig:EIT_IB_vs_BA_final}, the analytical EIT-IB solution accurately capture the initial slope of the true IB curve, demonstrating EIT's strength as a local approximation for small rate constraints $R$. For larger $R$, the EIT approximation, which can be linear in $R$ for simplified cases like the BSWC, diverges from the true concave IB curve, as the latter accounts for global information-theoretic saturation limits. This comparison underscores the regime of validity for the approximations and confirms the utility of the EIT-IB in finding optima.

To demonstrate the validity of the general LP formulation of Theorem \ref{thm:general_lp_for_multipliers}, beyond simple or commuting cases, we consider a channel with input alphabet size $|\mathcal{X}|=5$, generated by quantizing an AWGN channel. This results in a $4$-dimensional perturbation subspace and non-commuting matrices $V$ and $\Lambda$. Figure \ref{fig:lp_solution_general_channel_final} illustrates the solution to the LP for a specific set of parameters $R$ and $\Theta$.

\begin{figure}[!htbp]
	\centering
	 \includegraphics[width=0.8\textwidth]{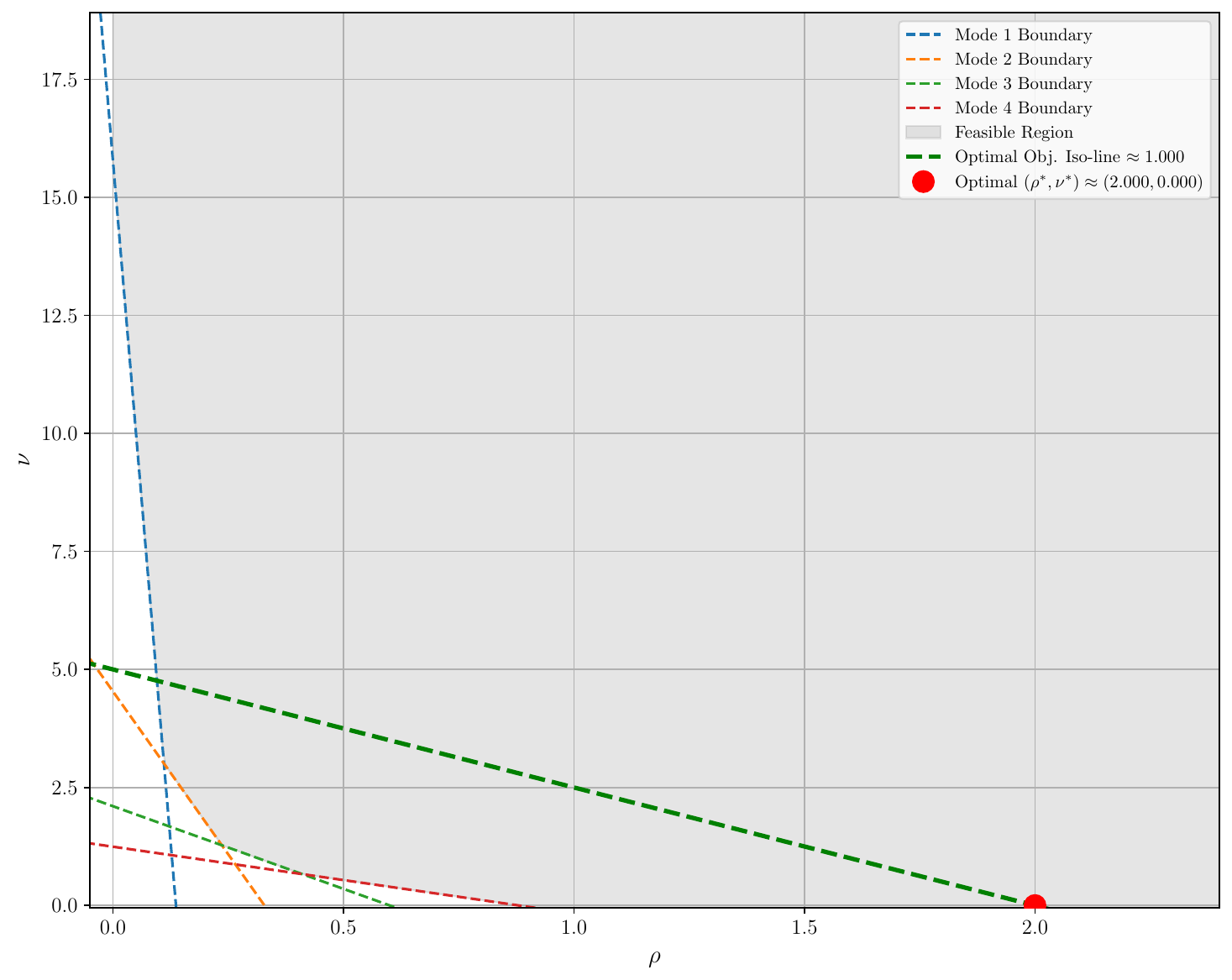}
	\caption{The LP solution for optimal multipliers for a numerically generated channel with $|\mathcal{X}|=5$. The optimal vertex is found at the boundary of the feasible region defined by the four eigenmode constraint lines.}
	\label{fig:lp_solution_general_channel_final}
\end{figure}

The figure depicts the feasible region for the $(\rho, \nu)$ for a general channel with $|\mathcal{X}|=5$. The dashed lines are the linear boundaries imposed by each of the four ($|\mathcal{X}|-1$) eigenmodes of the LP problem and the shaded gray area represents the feasible set of all $(\rho, \nu)$ pairs satisfying all constraints simultaneously. The thick green dashed line is an iso-objective line, representing a set of points where the objective function is constant. In an LP, all iso-objective lines are parallel; this green line is the one that passes through the optimal solution.
The optimal solution $(\rho^*, \nu^*)$, marked by the red circle, is found at a vertex of this feasible region. In this specific example, the solution is $(\rho^* \approx 0.889, \nu^* \approx 0.000)$, indicating a rate-dominant regime where the leakage constraint is not active for the chosen $R$ and $\Theta$. This numerical experiment validates that the general LP formulation correctly identifies the optimal multipliers at an extreme point of the feasible set defined by the KKT-derived constraints. As shown in Table \ref{tab:lp_vertex_search_appendix} in Appendix N, the results from both methods match perfectly, confirming the correctness of the LP framework for general channels.

Theorem \ref{thm_c_lic_regimes} provides a complete analytical characterization of the value of $C_{\mathrm{SIC}}$. To visualize these operational regimes, Figure \ref{fig:clic_normalized_vs_ratio} plots the normalized capacity as a function of the constraint ratio. This curve reveals the intrinsic behavior of the optimal solution which is governed by the solution to the general LP from Theorem \ref{thm:general_lp_for_multipliers}.
\begin{figure}[!htbp]
	\centering
	 \includegraphics[width=0.8\textwidth]{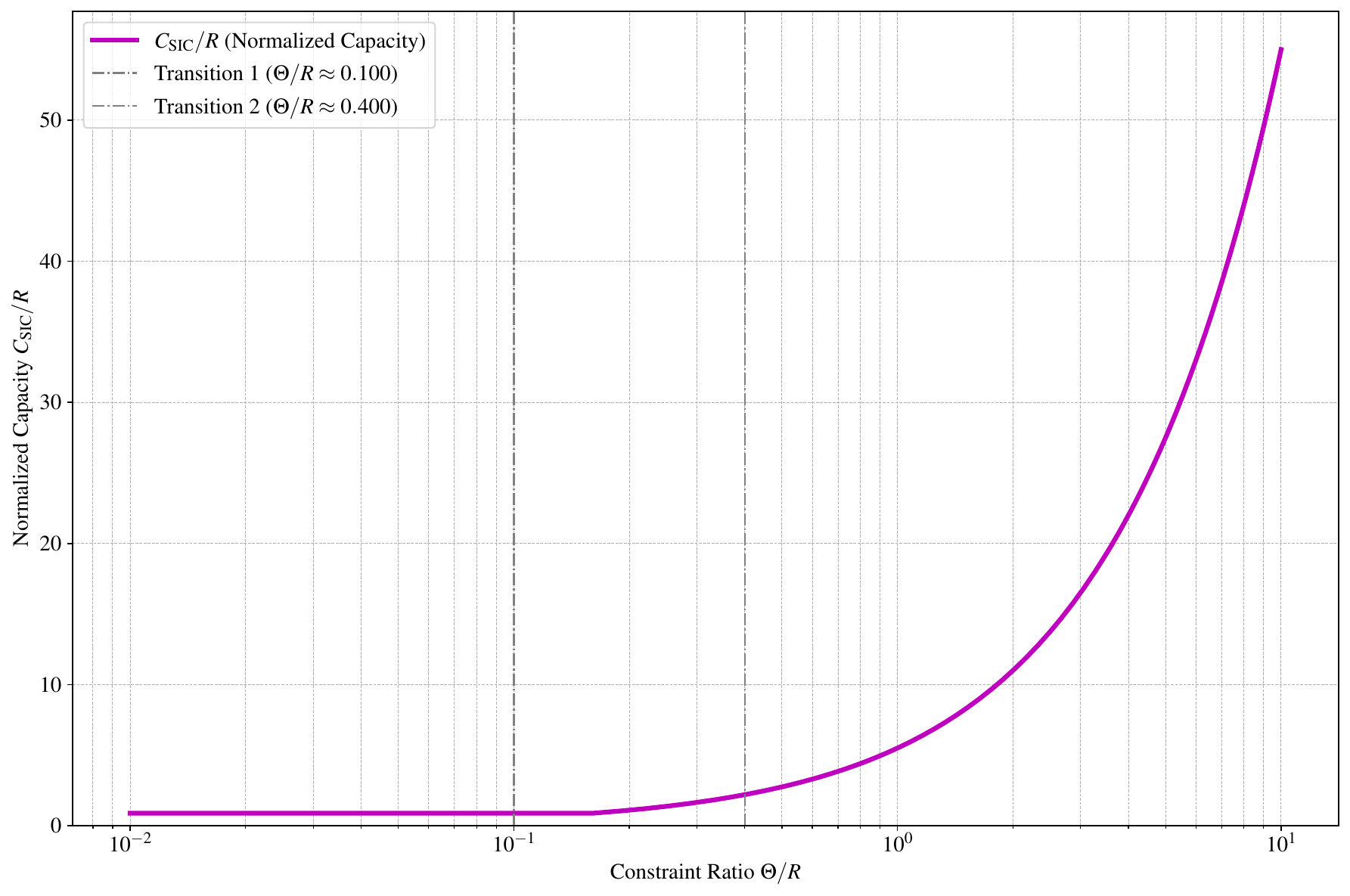} 
	\caption{Normalized approximate local secrecy capacity as a function of the constraint ratio. }
	\label{fig:clic_normalized_vs_ratio}
\end{figure}

As shown in Figure \ref{fig:clic_normalized_vs_ratio}, the behavior of the normalized capacity clearly delineates the three fundamental regimes. For small values of $\Theta/R$, the system is secrecy-constrained and operates in the leakage-dominant regime. Here, $C_{\mathrm{SIC}} \approx d_{\max}^{\perp}(V, \Lambda) \cdot \Theta$, and thus $C_{\mathrm{SIC}}/R$ grows linearly with the ratio $\Theta/R$, with a slope given by $d_{\max}^{\perp}(V, \Lambda)$. Conversely, for large values of $\Theta/R$, the system becomes resource-constrained and enters the rate-dominant regime. In this region, $C_{\mathrm{SIC}}$ saturates at $ \lambda_{\max}^{\perp}(V) R$, causing the normalized capacity to become constant at the value $\lambda_{\max}^{\perp}(V)$. Between these two extremes lies the intermediate regime, where the normalized capacity provides a smooth transition, reflecting a balanced trade-off where the optimal solution to the LP is determined by multiple active constraints. This plot demonstrates how the balance between secrecy and encoding resources dictates the ultimate efficiency of the secure communication scheme. 

\begin{figure}[H]
	\centering
	 \includegraphics[width=0.75\textwidth]{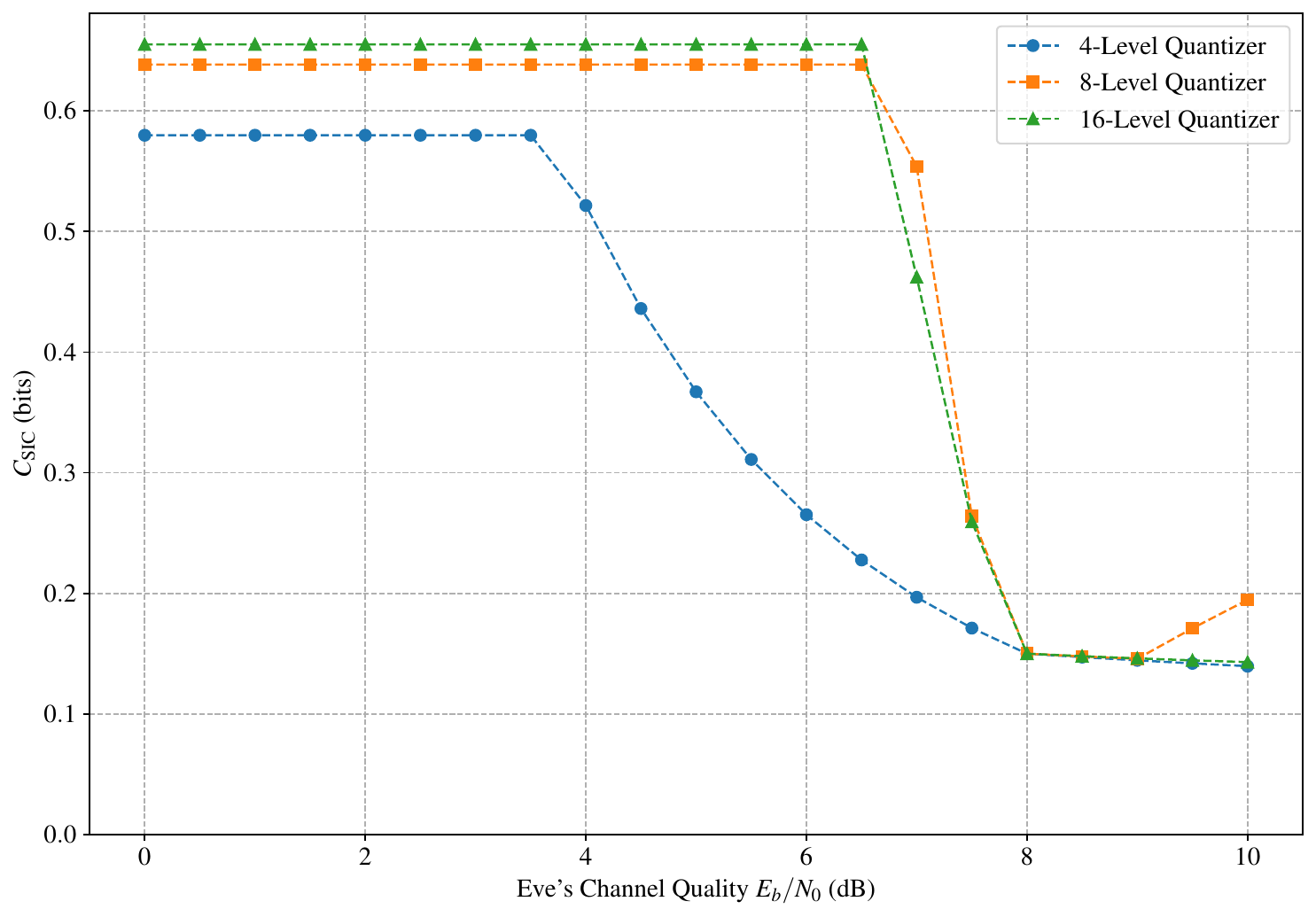} 
	\caption{$C_{\mathrm{SIC}}$ as a function of Eve's channel quality for different output quantization levels $|\mathcal{Z}|$, with $|X|=8$, Bob's $E_{b}/N_{0}=8.0 \text{dB}, R=0.5, \Theta=0.1$}
	\label{fig:clic_vs_eve_snr_quantization}
\end{figure}
The practical impact of system parameters is further explored in Figure \ref{fig:clic_vs_eve_snr_quantization}, which illustrates how $C_{\mathrm{SIC}}$ varies with Eve's channel quality for different output quantization levels of $|\mathcal{Z}|$. As expected, $C_{\mathrm{SIC}}$ decreases as Eve's channel improves. More interestingly, the figure shows that a coarser quantizer at the eavesdropper significantly limits Eve's ability to resolve information, resulting in a higher secrecy capacity for Alice. This demonstrates the utility of the framework for analyzing the impact of practical receiver characteristics on secure performance.

\subsection{Binary-Input Channels and the BSWC}\label{subsec:binary_channels}

A key insight of the EIT analysis is that the framework simplifies greatly for the broad and practical class of channels with a binary input alphabet, such as those modeling BPSK signaling. For any such channel, regardless of the output alphabet sizes $|\mathcal{Y}|$ and $|\mathcal{Z}|$, the matrices $V$ and $\Lambda$ are always $2 \times 2$.
Consequently, the subspace of valid perturbations $\mathcal{S}^{\perp}$ is always one-dimensional. This means that for any binary-input channel, there is only a single effective eigenmode, $|\mathcal{J}_{\perp}|=1$, for perturbations to consider. This simplifies the general LP for finding optimal multipliers, as it now contains only a single eigenmode constraint $\rho + \nu \lambda_j \ge d_j \lambda_j$, for $j=2$. The solution to this simplified one constraint LP always leads to a clean, piece-wise closed-form expression for $C_{\mathrm{SIC}}$, as characterized by Theorem \ref{thm_c_lic_regimes}, where the intermediate regime collapses to a single transition point. We use the BSWC as a canonical example of such a binary-input channel to demonstrate these results.

To solve the EIT-Approximated SIC problem numerically for this channel, we fix the cardinality of $|\mathcal{U}|$ following the result from Proposition \ref{prop:pu_invariance}. For problems involving variables in an $(|\mathcal{X}|-1)$-dimensional space and a set of linear and quadratic constraints, a cardinality of $|\mathcal{U}|$ on the order of $|\mathcal{X}|$ is typically sufficient.

We now provide a comprehensive case study using the BSWC to concretely illustrate the theoretical results of Section \ref{sec:approx_capacity_solution_structure}. The BSWC is an ideal illustrative platform for two primary reasons. First, as an instance of a binary-input channel, its EIT analysis simplifies significantly due to its one-dimensional perturbation subspace, as discussed above. Second, with a uniform input distribution, its EIT matrices $V$ and $\Lambda$ commute, making it a perfect example for the commuting matrices case analysis. The detailed derivation of the BSWC's specific EIT parameters and its piecewise closed-form formula for $C_{\mathrm{SIC}}$ are provided in Appendix I. For all subsequent BSWC illustrations, a uniform input distribution $P_X = [0.5, 0.5]^T$ is assumed.

\subsubsection{Approximate vs. True Secrecy Capacity for BSWC}
Figure \ref{fig:CLIC_vs_Cs_BSWC_final_sec6_full} compares the EIT-derived $ C_{\mathrm{SIC}} $ with the true secrecy capacity $ C_s = \max(0, H_b(q_{\mathrm{eve}}) - H_b(p_{\mathrm{bob}})) $.

\begin{figure}[!htbp]
	\centering
	 \includegraphics[width=0.85\textwidth]{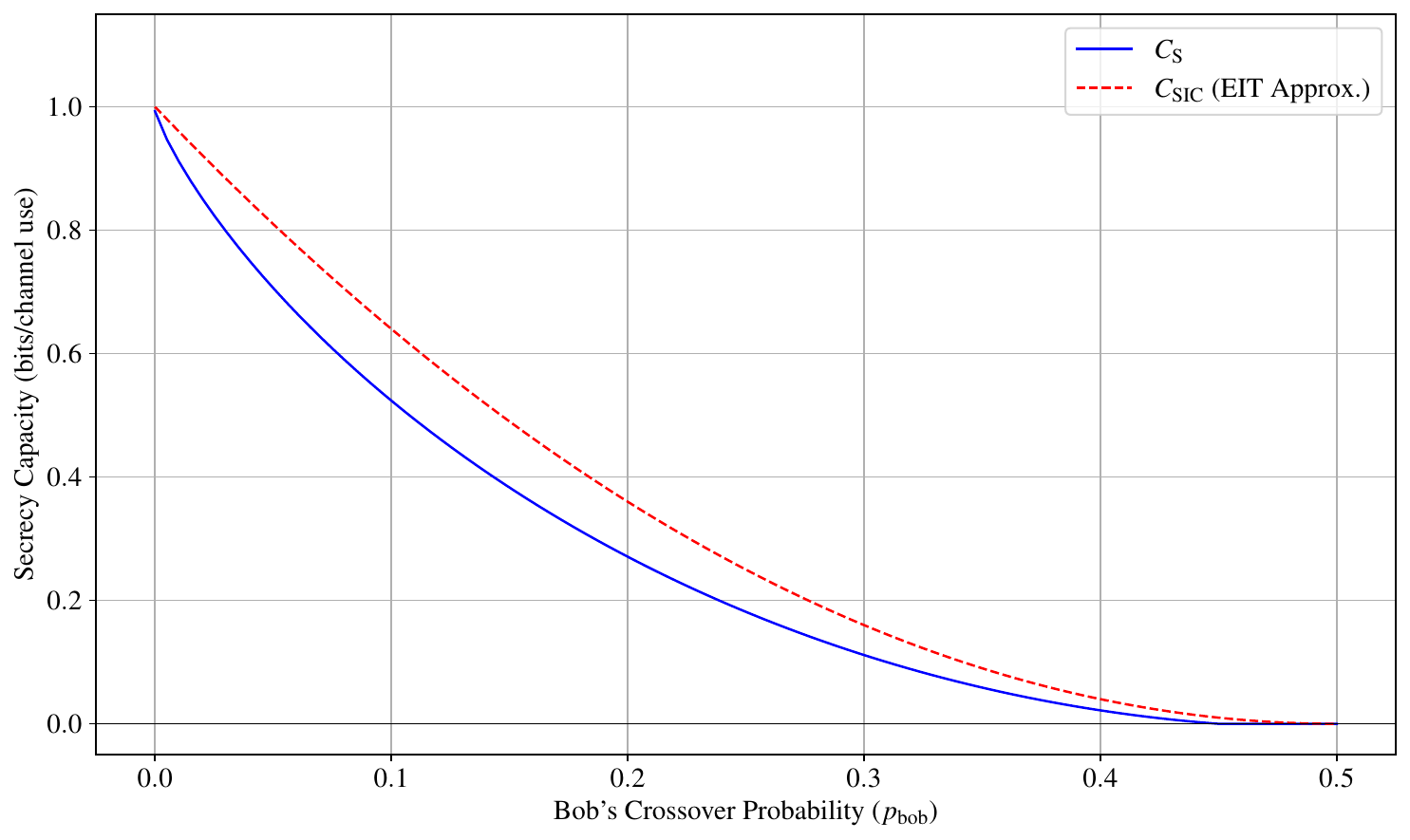} 
	\caption{Comparison of the EIT-approximated local secrecy capacity (dashed red line) for $q_{\mathrm{eve}}= 0.45, R=1.0, \Theta/R=0.085$, with the true secrecy capacity (solid blue line) for the BSWC. }
	\label{fig:CLIC_vs_Cs_BSWC_final_sec6_full}
\end{figure}

The comparison in Figure \ref{fig:CLIC_vs_Cs_BSWC_final_sec6_full} illustrates the performance of $ C_{\mathrm{SIC}} $ against the true secrecy capacity $ C_s $ as $ p_{\mathrm{bob}} $ varies. Both capacities exhibit the expected qualitative behavior: they are highest when Bob's channel is perfect ($ p_{\mathrm{bob}}=0 $) and degrade as $ p_{\mathrm{bob}} $ increases, eventually reaching zero. The $ C_{\mathrm{SIC}} $ curve provides an approximation to $ C_s $, whose accuracy is influenced by the chosen $R$ and $\Theta$ values and how well the operating point aligns with the EIT local perturbation regime. 

\subsubsection{Operational Regimes}

Figure \ref{fig:clic_bswc_regimes_final} illustrates the operational regimes of $C_{\mathrm{SIC}}$ for the BSWC, providing a concrete example of the general theory from Theorem \ref{thm_c_lic_regimes}. Due to the BSWC's one-dimensional perturbation subspace, the intermediate regime collapses to a single transition point, resulting in a clean piecewise structure. The solid black line, representing the true $C_{\mathrm{SIC}}$ for the BSWC, is precisely the lower envelope of the capacities from the two dominant regimes: $C_{\mathrm{SIC}} = \min(C_{\mathrm{R-dom}}, C_{\mathrm{\Theta-dom}})$. For small $\Theta$ ($\Theta \le R \lambda_{\Lambda}$), the system is leakage-dominant, and $C_{\mathrm{SIC}}$ grows linearly with $\Theta$. For larger $\Theta$ ($\Theta > R \lambda_{\Lambda}$), the system becomes rate-dominant, and $C_{\mathrm{SIC}}$ saturates at a constant value determined by the rate budget $R$.

\begin{figure}[!htbp]
	\centering
	 \includegraphics[width=0.8\textwidth]{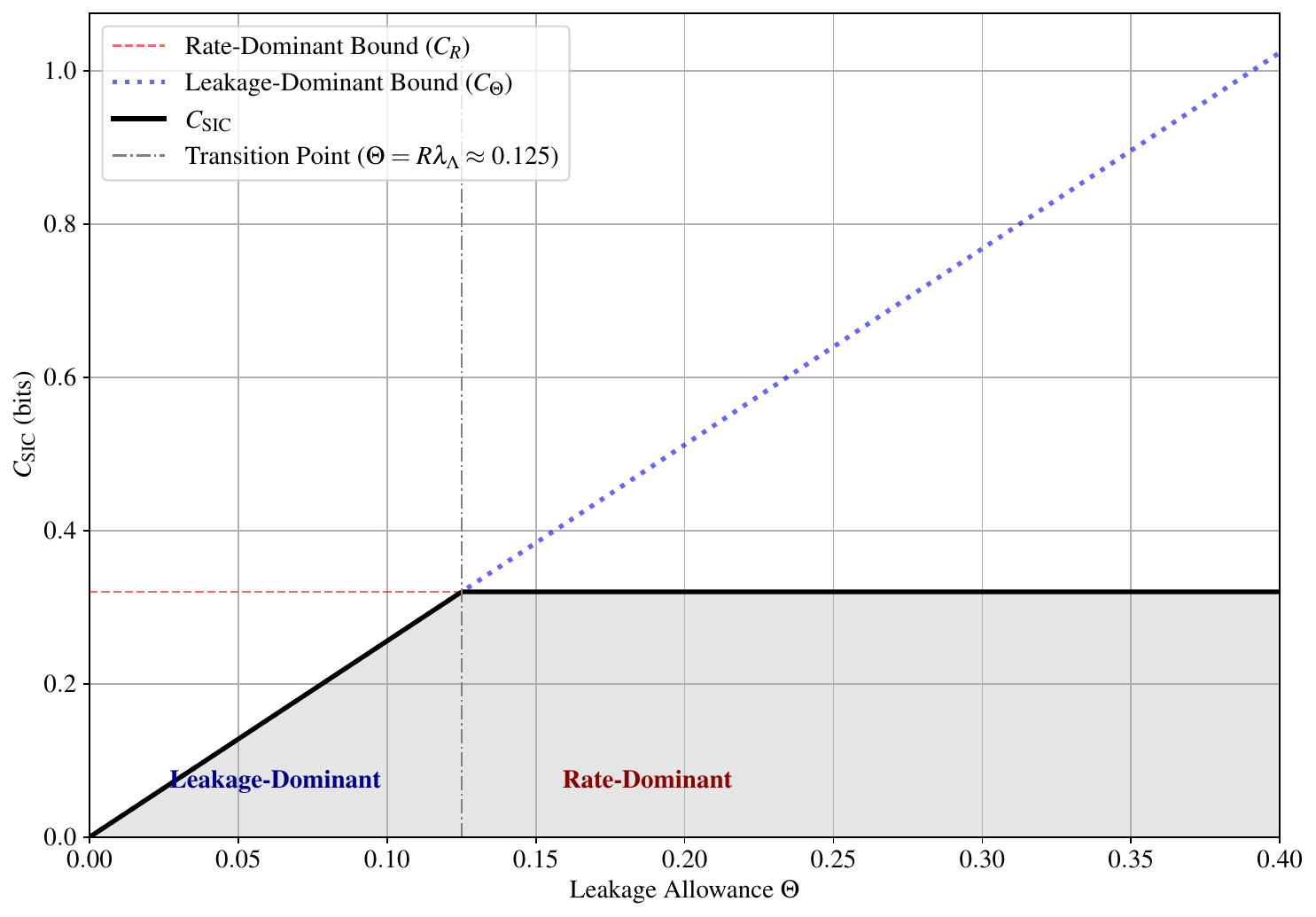} 
	\caption{Operational regimes of $C_{\mathrm{SIC}}$ for the BSWC, as characterized by Theorem \ref{thm_c_lic_regimes}. The solid black line shows $C_{\mathrm{SIC}}$ as a function of the leakage allowance $\Theta$, for fixed $p_{\mathrm{bob}}=0.1, q_{\mathrm{eve}}=0.25,$ and $R=0.5$. The curve shows the clear transition from the leakage-dominant regime to the rate-dominant regime.}
	\label{fig:clic_bswc_regimes_final}
\end{figure}

\subsubsection{The Perfect Secrecy Case} \label{subsubsec:bswc_perfect_secrecy_final_v2}

An important limiting case for evaluating any secrecy framework is that of perfect secrecy, corresponding to $\Theta \to 0$ in the SIC problem. For the EIT approximation, this requires the EIT-approximated leakage, $(\epsilon^2/2)\mathbb{E}_U[||\boldsymbol{B}_{Z|X} \mathbf{L}_U||^2]$, to be zero. For this to hold with $\mathbf{L}_u \neq \mathbf{0}$, the term $||\boldsymbol{B}_{Z|X} \mathbf{L}_u||^2 = \mathbf{L}_u^T \Lambda \mathbf{L}_u$ must be zero for all active perturbations. For the BSWC, where all perturbations are proportional to a single vector $\boldsymbol{\tau}$(Appendix H), this condition reduces to requiring the single relevant eigenvalue of $\Lambda$, $\lambda_{\Lambda} = (1-2q_{\mathrm{eve}})^2$, to be zero. The outcome therefore depends critically on the quality of Eve's channel. 
First, if Eve's channel is not pure noise ($q_{\mathrm{eve}} \neq 0.5$), then its leakage factor $\lambda_{\Lambda}$ is strictly positive. Enforcing perfect local secrecy then requires all perturbations to be zero, leading to zero utility, and thus $C_{\mathrm{SIC}}(\Theta=0) = 0$.

Second, in the special case where Eve's channel is pure noise ($q_{\mathrm{eve}} = 0.5$), we have $\lambda_{\Lambda} = 0$. The leakage constraint is trivially satisfied for any perturbation. The SIC problem with $\Theta=0$ thus reduces to maximizing utility subject only to the rate constraint, which is an EIT-IB problem. This aligns with \cite{SreekumarGunduz19}, whose PUT framework also simplifies under these conditions. While their general approach uses exact mutual information, they perform a local analysis to characterize the utility-leakage trade-off for infinitesimal leakage, defining a slope metric as the supremum of a ratio of KL divergences which is conceptually parallel to the EIT approach. However, for the specific $q_{\mathrm{eve}}=0.5$ case, their perfect privacy constraint $I(S;U)=0$ is automatically satisfied globally, causing their problem to become identical to the standard, globally optimal IB problem.

The comparison in this specific scenario is therefore between the local EIT approximation and the true global optimum of the IB problem in Figure \ref{fig:EIT_IB_vs_BA_final}, which illustrates this comparison. That confirms that the local EIT result correctly captures the initial rate-utility trade-off of the global optimum, with the divergence at higher rates quantifying the inherent difference between a local approximation and a global solution. In the case where $q_{\mathrm{eve}} \neq 0.5$, both our framework and the methodology in \cite{SreekumarGunduz19} correctly predict zero utility, underscoring the fundamental difficulty of achieving perfect secrecy against a competent eavesdropper.

\subsubsection{BSWC KKT Condition Verification}
The BSWC with uniform $P_X$ directly illustrates Theorem \ref{thm:commuting_matrices_unified}, as its EIT matrices commute and the relevant perturbation subspace is one-dimensional. Figure \ref{fig:BSWC_KKT_Verification_final_sec6_full} verifies the KKT condition $\lambda_V = \rho^* + \nu^*\lambda_{\Lambda}$.

\begin{figure}[H]
	\centering
	 \includegraphics[width=0.8\textwidth]{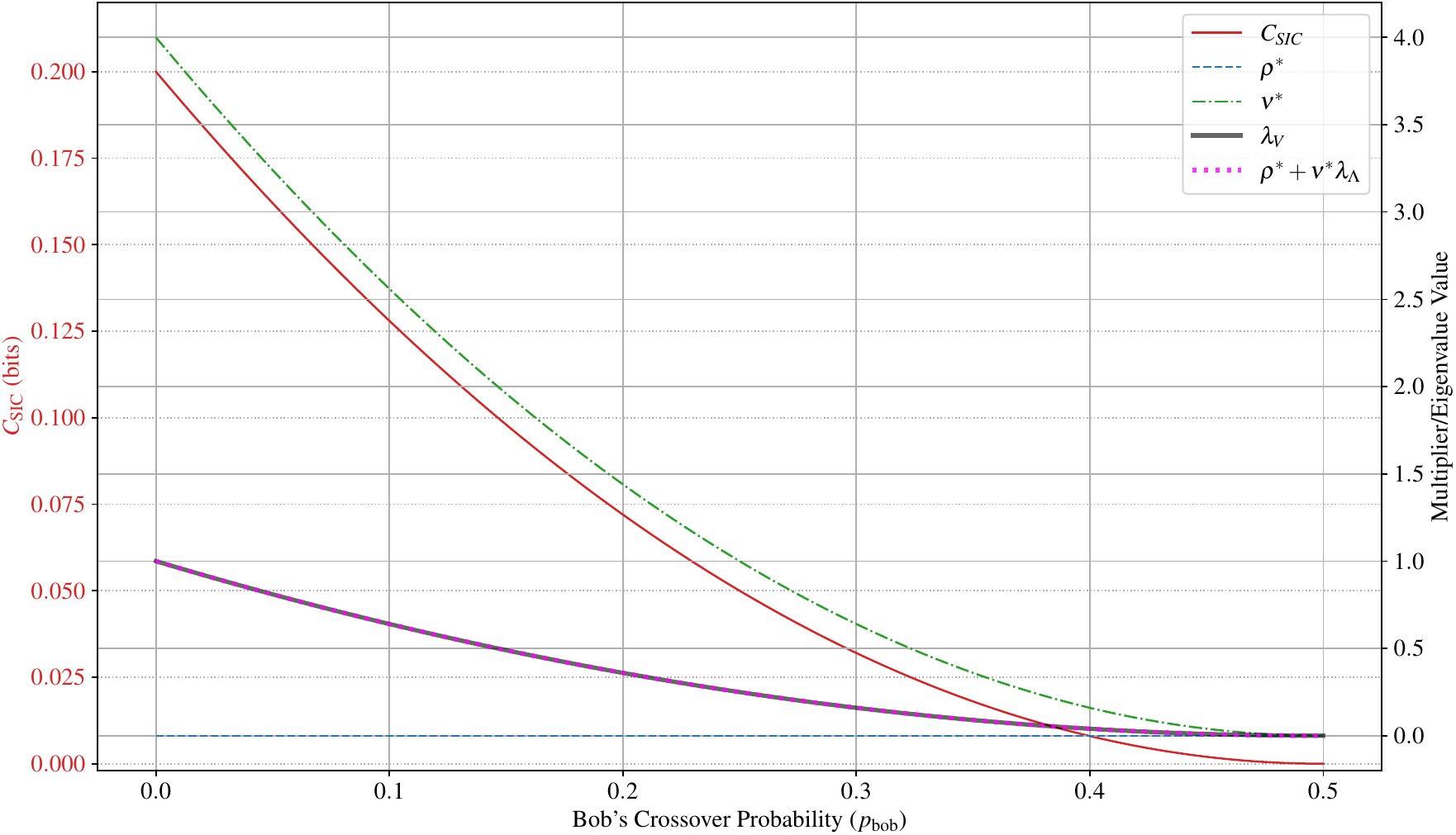}
	\caption{Verification of KKT condition for BSWC, with $q_{{eve}}={0.25}, R={0.5}, \Theta={0.05}$. The left y-axis (red solid) is $C_{\mathrm{SIC}}$; the right y-axis shows $\rho^*$ (blue dashed); $\nu^*$ (green dash-dot); $\lambda_V = (1-2p_{\mathrm{bob}})^2$ (black solid); and $\rho^* + \nu^*\lambda_{\Lambda}$ (magenta dotted).}
	\label{fig:BSWC_KKT_Verification_final_sec6_full}
\end{figure}

Figure \ref{fig:BSWC_KKT_Verification_final_sec6_full} shows $ C_{\mathrm{SIC}} $ (left y-axis) and the components of the KKT condition (right y-axis) as $p_{\mathrm{bob}}$ varies. The effective Lagrange multipliers $\rho^*$ and $\nu^*$ transition based on the active constraints. For $p_{\mathrm{bob}} < 0.18$, the rate constraint is active, so $\rho^* \approx \lambda_V$ and $\nu^* \approx 0$. For $p_{\mathrm{bob}} > 0.18$, the leakage constraint is active, so $\rho^* \approx 0$ and $\nu^* \approx \lambda_V / \lambda_{\Lambda}$. Critically, the plot demonstrates a perfect overlap between $\lambda_V$ (solid black line) and the sum $\rho^* + \nu^*\lambda_{\Lambda}$ (dotted magenta line) across all $p_{\mathrm{bob}}$, numerically confirming that the optimal multipliers adapt to satisfy the condition from Theorem \ref{thm:commuting_matrices_unified}.

The comprehensive set of numerical results presented in this section serves to validate and illustrate the EIT-based framework for local secrecy analysis. The detailed case study on the BSWC provided concrete illustrations for all key aspects of the theory. We showed that the approximate capacity, provides a meaningful estimate for the true secrecy capacity and correctly transitions between rate-dominant and leakage-dominant operational regimes. Furthermore, we explored the impact of system parameters, such as Eve's channel quality, as a function of her SNR and output quantization, and validated the underlying KKT conditions for the commuting case. The analysis of the perfect secrecy limit for the BSWC further clarified the conditions under which non-trivial local utility is achievable. Collectively, these numerical experiments provide strong support for the EIT-SIC framework as a robust and insightful tool for local secrecy analysis.

\section{Secret Local Contraction Coefficients} \label{sec:secret_contraction_coeffs}

Beyond approximating the secrecy capacity, the EIT framework allows for the definition of new coefficients that characterize the efficiency of secure information transfer in the local domain. This section introduces the \textit{secret local contraction coefficient}. We motivate its definition by analogy to standard contraction coefficients \cite{Anathram_13_2, Anantharam_12, Anantharam_13, Polyanskiy_notes, makur2018linear}, provide its mathematical characterization using the EIT-derived matrices, discuss its operational meaning for bounding local utility versus leakage, relate it to the parameters of the SIC problem solution, and finally, establish bounds connecting this local coefficient to its global counterpart defined over exact mutual information measures.

\subsection{Quantifying Secure Information Transfer Efficiency} \label{subsec:contraction_motivation}

In communication theory, contraction coefficients often quantify how much information about a source $ U $, present in an intermediate variable $ X $, is preserved or contracted when $ X $ is processed to produce an output $ Y $. For example, the square of the second largest singular value of the channel's DTM acts as such a coefficient for $ I(U;Y)/I(U;X) $ in EIT \cite{Zheng_2,Makur_1,Makur_20}. 

In the context of secrecy, a pertinent question is: for a given amount of information leaked to an eavesdropper, $ I(U;Z) $, what is the maximum utility, $ I(U;Y) $, that can be reliably conveyed to the legitimate receiver? The ratio $ I(U;Y)/I(U;Z) $ serves as a measure of this leakage efficiency, that is the utility achieved per unit of leakage. This concept is central to PUT problems \cite{geng2020tight} and is related to alternative secrecy metrics like maximal leakage \cite{Khisti19} and maximal correlation secrecy \cite{li2017maximal, tahmasbi2018information}. Our goal is to define and analyze a local version of this efficiency using EIT approximations.

For a standard channel $ U \to X \to Y $, where $ P_{X|U}(x|u) = P_X(x) + \epsilon \sqrt{P_X(x)} L_X(x|u) $, the EIT approximations are $ I(U;X) \approx (\epsilon^2/2) \mathbb{E}_U[||\mathbf{L}_U||^2] $ and $ I(U;Y) \approx (\epsilon^2/2) \mathbb{E}_U [ ||B_{Y|X} \mathbf{L}_U||^2 ] $. The standard local contraction coefficient $ \eta_{\mathrm{loc}} $ is defined as \cite{Zheng_2, Polyanskiy_notes}:
\begin{equation} \label{eq:eta_loc_standard}
	\eta_{\mathrm{loc}} = \sup_{\mathbf{L}: \mathbf{L}^T \mathbf{\sqrt{P_X}}=0, \mathbf{L}\neq\mathbf{0}} \frac{||\boldsymbol{B}_{Y|X} \mathbf{L}||^2}{||\mathbf{L}||^2}.
\end{equation}
This $ \eta_{\mathrm{loc}} $ is equal to $ \sigma_{\max, \perp}^2(B_{Y|X}) $, which denotes the squared largest singular value of $ B_{Y|X} $ when restricted to the subspace orthogonal to $ \mathbf{\sqrt{P_X}} $, often the second largest singular value of the full $ B_{Y|X} $ matrix, as $ \sigma_1=1 $ typically corresponds to $ \mathbf{\sqrt{P_X}} $.

\subsection{Definition and Characterization of the Secret Local Contraction Coefficient} \label{subsec:define_eta_loc_sec_final}

We extend this concept to the wiretap channel $ U \to X \to (Y,Z) $. Using the EIT-approximated utility $ I(U;Y)$ and leakage $ I(U;Z)$, given in \eqref{eq:slic_obj_approx_final} and \eqref{eq:slic_leakage_approx_final_with Lambda}, respectively.
To quantify the intrinsic efficiency of secure transmission in this local regime, we define a coefficient based on the ratio of these quadratic forms for an effective single perturbation vector $ \mathbf{L} $. The term $ \mathbf{L}^T V \mathbf{L} $ can be interpreted as the signal power for Bob due to the input perturbation $ \mathbf{L} $, while $ \mathbf{L}^T \Lambda \mathbf{L} $ represents the corresponding leakage power or information acquired by Eve from the same perturbation. The ratio $ (\mathbf{L}^T V \mathbf{L}) / (\mathbf{L}^T \Lambda \mathbf{L}) $ thus measures a form of signal-to-leakage power ratio for the specific perturbation direction. We seek the perturbation direction that maximizes this ratio, representing the most leakage-efficient way to convey information.
\begin{defn} \label{def:secret_local_contraction_coefficient_final}
	The secret local contraction coefficient is defined as the maximum ratio of the EIT-approximated utility to the EIT-approximated leakage, over all valid non-trivial perturbation vectors $ \mathbf{L} $:
	\begin{equation} \label{eq:eta_loc_sec_definition_final}
		\eta_{\mathrm{loc}}^{\mathrm{sec}} = \sup_{\mathbf{L}} \frac{\mathbf{L}^T V \mathbf{L}}{\mathbf{L}^T \Lambda \mathbf{L}}
	\end{equation}
	subject to $ \mathbf{L}^T \mathbf{\sqrt{P_X}} = 0 $, $ \mathbf{L} \neq \mathbf{0} $, and $ \mathbf{L}^T \Lambda \mathbf{L} > 0 $. An implicit constraint $ \mathbf{L}^T \Lambda \mathbf{L} \le \delta' $, for some small $ \delta' > 0 $, can be considered to ensure the perturbation remains within the local validity regime, though the ratio itself is scale-invariant.
\end{defn}
\begin{thm} \label{thm:char_eta_loc_sec_final}
	The secret local contraction coefficient is equal to the largest generalized eigenvalue, of the matrix pencil $ (V, \Lambda) $, when restricted to the subspace of perturbation vectors $ \mathbf{L} $ satisfying $ \mathbf{L}^T \mathbf{\sqrt{P_X}} = 0 $, that is
	\begin{equation} \label{eq:eta_loc_sec_is_lambda_max_final}
		\eta_{\mathrm{loc}}^{\mathrm{sec}} = d_{\max}^{\perp}(V, \Lambda)
	\end{equation}
	where $ d_{\max}^{\perp}(V, \Lambda) $ denotes the largest generalized eigenvalue.
\end{thm}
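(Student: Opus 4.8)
The plan is to recognize the quantity in Definition~\ref{def:secret_local_contraction_coefficient_final} as a generalized Rayleigh quotient and to evaluate its supremum by a whitening reduction to ordinary Rayleigh--Ritz theory. First I would exploit the scale invariance of the ratio: since both $\mathbf{L}^T V \mathbf{L}$ and $\mathbf{L}^T \Lambda \mathbf{L}$ are homogeneous of degree two in $\mathbf{L}$, the supremum over $\mathcal{S}^\perp$ is unchanged if I fix the normalization $\mathbf{L}^T \Lambda \mathbf{L} = 1$ and instead maximize the numerator $\mathbf{L}^T V \mathbf{L}$ subject to $\mathbf{L} \in \mathcal{S}^\perp$. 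This recasts the scale-free optimization as a genuine constrained quadratic maximization with a compact effective feasible set, which is what makes an extremum argument available.

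Next I would eliminate the linear constraint $\mathbf{L}^T \mathbf{\sqrt{P_X}} = 0$ explicitly. Let $Q_\perp \in \mathbb{R}^{|\mathcal{X}| \times M}$, with $M = |\mathcal{X}| - 1$, have columns forming an orthonormal basis of $\mathcal{S}^\perp$, so that every admissible $\mathbf{L}$ is written uniquely as $\mathbf{L} = Q_\perp \mathbf{y}$ with $\mathbf{y} \in \mathbb{R}^M$. The quotient then becomes $\mathbf{y}^T \tilde{V} \mathbf{y} / \mathbf{y}^T \tilde{\Lambda} \mathbf{y}$, where $\tilde{V} \triangleq Q_\perp^T V Q_\perp$ and $\tilde{\Lambda} \triangleq Q_\perp^T \Lambda Q_\perp$ are the compressions of $V$ and $\Lambda$ onto $\mathcal{S}^\perp$. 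The crucial role of Assumption~\ref{ass_1} enters here: it guarantees $\tilde{\Lambda} \succ 0$, so $\tilde{\Lambda}$ admits a symmetric positive-definite square root $\tilde{\Lambda}^{1/2}$. Performing the change of variables $\mathbf{z} = \tilde{\Lambda}^{1/2} \mathbf{y}$ converts the generalized quotient into the ordinary Rayleigh quotient $\mathbf{z}^T W \mathbf{z} / \mathbf{z}^T \mathbf{z}$ for the symmetric matrix $W \triangleq \tilde{\Lambda}^{-1/2} \tilde{V} \tilde{\Lambda}^{-1/2}$. By the Rayleigh--Ritz theorem its supremum equals the largest eigenvalue $\lambda_{\max}(W)$, attained at the associated unit eigenvector $\mathbf{z}^*$; pulling back through $\mathbf{y}^* = \tilde{\Lambda}^{-1/2} \mathbf{z}^*$ and $\mathbf{L}^* = Q_\perp \mathbf{y}^*$ produces an admissible maximizer, so the supremum is attained, finite, and realized by a valid perturbation in $\mathcal{S}^\perp$.

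Finally I would identify $\lambda_{\max}(W)$ with the claimed generalized eigenvalue. The eigenvalue equation $W \mathbf{z} = d \mathbf{z}$ is, after substituting $\mathbf{z} = \tilde{\Lambda}^{1/2} \mathbf{y}$ and multiplying by $\tilde{\Lambda}^{1/2}$, equivalent to $\tilde{V} \mathbf{y} = d \tilde{\Lambda} \mathbf{y}$; hence the spectrum of $W$ coincides exactly with the set of generalized eigenvalues of the pencil $(\tilde{V}, \tilde{\Lambda})$, which are by definition the generalized eigenvalues of $(V, \Lambda)$ restricted to $\mathcal{S}^\perp$. Taking the maximum over this common set yields $\eta_{\mathrm{loc}}^{\mathrm{sec}} = \lambda_{\max}(W) = d_{\max}^\perp(V, \Lambda)$, as asserted. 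As a consistency check, the stationarity of the normalized problem reproduces the same pencil relation $V \mathbf{L}^* = d \Lambda \mathbf{L}^*$ on $\mathcal{S}^\perp$ that appears in the KKT analysis of Theorem~\ref{thm:kkt_conditions_Lu_final}.

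I expect the main subtlety to lie not in any deep inequality but in the careful treatment of the subspace restriction. One must verify that compressing to $(\tilde{V}, \tilde{\Lambda})$ loses no maximizers and that Assumption~\ref{ass_1} is precisely the hypothesis rendering $\tilde{\Lambda}$ invertible; without it $\Lambda$ could annihilate a nonzero admissible perturbation, the denominator could vanish, and the supremum would be $+\infty$ or otherwise ill-posed. Once positive-definiteness on $\mathcal{S}^\perp$ is secured, the remaining manipulations with $Q_\perp$ and $\tilde{\Lambda}^{1/2}$ are routine congruence transformations, and the equivalence between the whitened spectrum and the restricted generalized spectrum follows by direct substitution.
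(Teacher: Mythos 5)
Your proposal is correct and follows essentially the same route as the paper's proof in Appendix J: both identify $\eta_{\mathrm{loc}}^{\mathrm{sec}}$ as a generalized Rayleigh quotient of the pencil $(V,\Lambda)$ on $\mathcal{S}^{\perp}$ and conclude via its variational characterization. The only difference is presentational: the paper directly cites the Courant--Fischer theorem for generalized eigenvalue problems (and separately handles the degenerate case $\mathbf{L}^T\Lambda\mathbf{L}=0$ without invoking Assumption~\ref{ass_1}), whereas you prove that step from scratch through the orthonormal-basis compression and $\tilde{\Lambda}^{1/2}$-whitening reduction to ordinary Rayleigh--Ritz, which is a sound and fully rigorous substitute.
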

\begin{proof}
	The proof is presented in Appendix J.
\end{proof}
Intuitively, $ \eta_{\mathrm{loc}}^{\mathrm{sec}} $ identifies the optimal EIT perturbation direction $ \mathbf{L} $ that maximizes Bob's information gain relative to Eve's. Thus, is an intrinsic characteristic of the EIT representations of channels $V$ and $\Lambda$, reflecting the optimal local trade-off achievable.
\begin{rem} \label{rem:clic_eta_connection}
	 Given \eqref{eq:eta_loc_sec_is_lambda_max_final} the leakage-dominant capacity in \eqref{eq:clic_max_over_regimes_v2} can be written as:
	\begin{equation} \label{eq:clic_with_eta_loc_sec}
		C_{\mathrm{SIC}} = \max \left( \lambda_{\max}^{\perp}(V) R, \quad \eta_{\mathrm{loc}}^{\mathrm{sec}} \Theta, \quad C_{\mathrm{inter}} \right).
	\end{equation}
\end{rem}

\subsection{Operational Interpretation and Relation to the SIC Problem Solution} \label{subsec:eta_loc_sec_operational_final}
Having defined and characterized the $ \eta_{\mathrm{loc}}^{\mathrm{sec}} $, we now explore its operational meaning and its connection to the solution of the EIT-approximated SIC problem. While $ \eta_{\mathrm{loc}}^{\mathrm{sec}} $ is defined as the supremum of a ratio of quadratic forms, its true utility lies in its ability to provide a concrete bound on the achievable performance and to shed light on the structure of the optimal trade-off. The following lemma establishes a direct operational interpretation of $ \eta_{\mathrm{loc}}^{\mathrm{sec}} $ as a bounding constant that links the achievable local utility to the incurred local leakage for any valid EIT perturbation strategy.
\begin{lem} \label{lem:eta_loc_sec_bound_final}
	For any perturbation vector $ \mathbf{L} $ satisfying \eqref{eq:eit_slic_opt_ortho_problem_env} and resulting in $ I(U;Z) = (\epsilon^2/2)\mathbf{L}^T\Lambda\mathbf{L} > 0 $, the approximated utility for Bob is bounded:
	\begin{equation} \label{eq:utility_leakage_bound_eta_loc_sec_final}
		I(U;Y) \le \eta_{\mathrm{loc}}^{\mathrm{sec}} \cdot I(U;Z).
	\end{equation}
	Equality holds if $ \mathbf{L} $ aligns with the principal generalized eigenvector of $ (V, \Lambda) $ within the valid subspace.
\end{lem}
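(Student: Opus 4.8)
The plan is to obtain the bound as an immediate consequence of the variational (supremum) definition of $\eta_{\mathrm{loc}}^{\mathrm{sec}}$ together with the linearity of the EIT approximations. First I would recall from \eqref{eq:slic_obj_approx_final} and Lemma \ref{lem:approx_IUZ_formal} that, for a single effective perturbation vector $\mathbf{L}$, the approximated utility and leakage are $I(U;Y) \approx \frac{\epsilon^2}{2}\mathbf{L}^T V \mathbf{L}$ and $I(U;Z) \approx \frac{\epsilon^2}{2}\mathbf{L}^T \Lambda \mathbf{L}$. The essential structural observation is that both quantities carry the \emph{same} positive scaling factor $\epsilon^2/2$; since this factor is a fixed positive constant common to both sides, it plays no role in any ratio and the inequality will transfer verbatim from the normalized quadratic forms to the approximate mutual information quantities.

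Second, I would verify that the given $\mathbf{L}$ is a feasible point in the supremum defining $\eta_{\mathrm{loc}}^{\mathrm{sec}}$ in \eqref{eq:eta_loc_sec_definition_final}. By hypothesis $\mathbf{L}$ satisfies the orthogonality condition $\mathbf{L}^T \mathbf{\sqrt{P_X}} = 0$, is nonzero, and yields $\mathbf{L}^T \Lambda \mathbf{L} > 0$ (equivalently $I(U;Z) > 0$), which are exactly the feasibility requirements of Definition \ref{def:secret_local_contraction_coefficient_final}. By the definition of the supremum, the ratio evaluated at this particular $\mathbf{L}$ cannot exceed the supremal value, so
\[
\frac{\mathbf{L}^T V \mathbf{L}}{\mathbf{L}^T \Lambda \mathbf{L}} \le \eta_{\mathrm{loc}}^{\mathrm{sec}}.
\]
Multiplying through by $\mathbf{L}^T \Lambda \mathbf{L} > 0$ and then by $\epsilon^2/2 > 0$ gives $\frac{\epsilon^2}{2}\mathbf{L}^T V \mathbf{L} \le \eta_{\mathrm{loc}}^{\mathrm{sec}} \cdot \frac{\epsilon^2}{2}\mathbf{L}^T \Lambda \mathbf{L}$, which is precisely the claimed bound $I(U;Y) \le \eta_{\mathrm{loc}}^{\mathrm{sec}} \cdot I(U;Z)$.

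For the equality condition, I would invoke Theorem \ref{thm:char_eta_loc_sec_final}, which identifies $\eta_{\mathrm{loc}}^{\mathrm{sec}} = d_{\max}^{\perp}(V,\Lambda)$ as the largest generalized eigenvalue of the pencil $(V,\Lambda)$ restricted to $\mathcal{S}^{\perp}$. The Rayleigh-quotient characterization of the generalized eigenvalue problem guarantees that the supremum is attained exactly when $\mathbf{L}$ is a generalized eigenvector associated with $d_{\max}^{\perp}(V,\Lambda)$, i.e.\ when $V\mathbf{L} = d_{\max}^{\perp}(V,\Lambda)\,\Lambda\mathbf{L}$ holds within the valid subspace. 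Left-multiplying this eigen-relation by $\mathbf{L}^T$ yields $\mathbf{L}^T V \mathbf{L} = d_{\max}^{\perp}(V,\Lambda)\,\mathbf{L}^T \Lambda \mathbf{L}$, which turns the inequality above into an equality. I do not anticipate any genuine obstacle in this argument: it is a direct corollary of the supremum definition and the scale-invariance of the ratio. The only point requiring mild care is the bookkeeping of the shared factor $\epsilon^2/2$ and the confirmation that the perturbation remains within the local regime where the quadratic approximations are valid, both of which are routine.
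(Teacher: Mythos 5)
Your proposal is correct and follows essentially the same route as the paper's proof in Appendix K: both rest on the supremum definition of $\eta_{\mathrm{loc}}^{\mathrm{sec}}$ (equivalently its generalized-eigenvalue characterization in Theorem \ref{thm:char_eta_loc_sec_final}) to bound the ratio of quadratic forms, then multiply through by the positive leakage term and the common factor $\epsilon^2/2$, with equality tied to the principal generalized eigenvector. The only cosmetic difference is that the paper states the argument for a mixture $\{\mathbf{L}_u\}$ weighted by $P_U(u)$ — applying the per-vector inequality inside the expectation and noting the degenerate case $\mathbf{L}_u^T \Lambda \mathbf{L}_u = 0$ — whereas you work with the single effective vector exactly as the lemma is phrased, which is equivalent here.
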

\begin{proof}
	The proof is presented in Appendix K.
\end{proof}
The bound in \eqref{eq:utility_leakage_bound_eta_loc_sec_final} is tight, as it is achieved by the perturbation direction corresponding to the principal generalized eigenvector. Figures \ref{fig:lemma71_scatter_general_channel} and \ref{fig:thm71_histogram_general_channel} illustrate Lemma \ref{lem:eta_loc_sec_bound_final} and Theorem \ref{thm:char_eta_loc_sec_final} respectively.
\begin{figure}[H]
	\centering
	\includegraphics[width=0.8\textwidth]{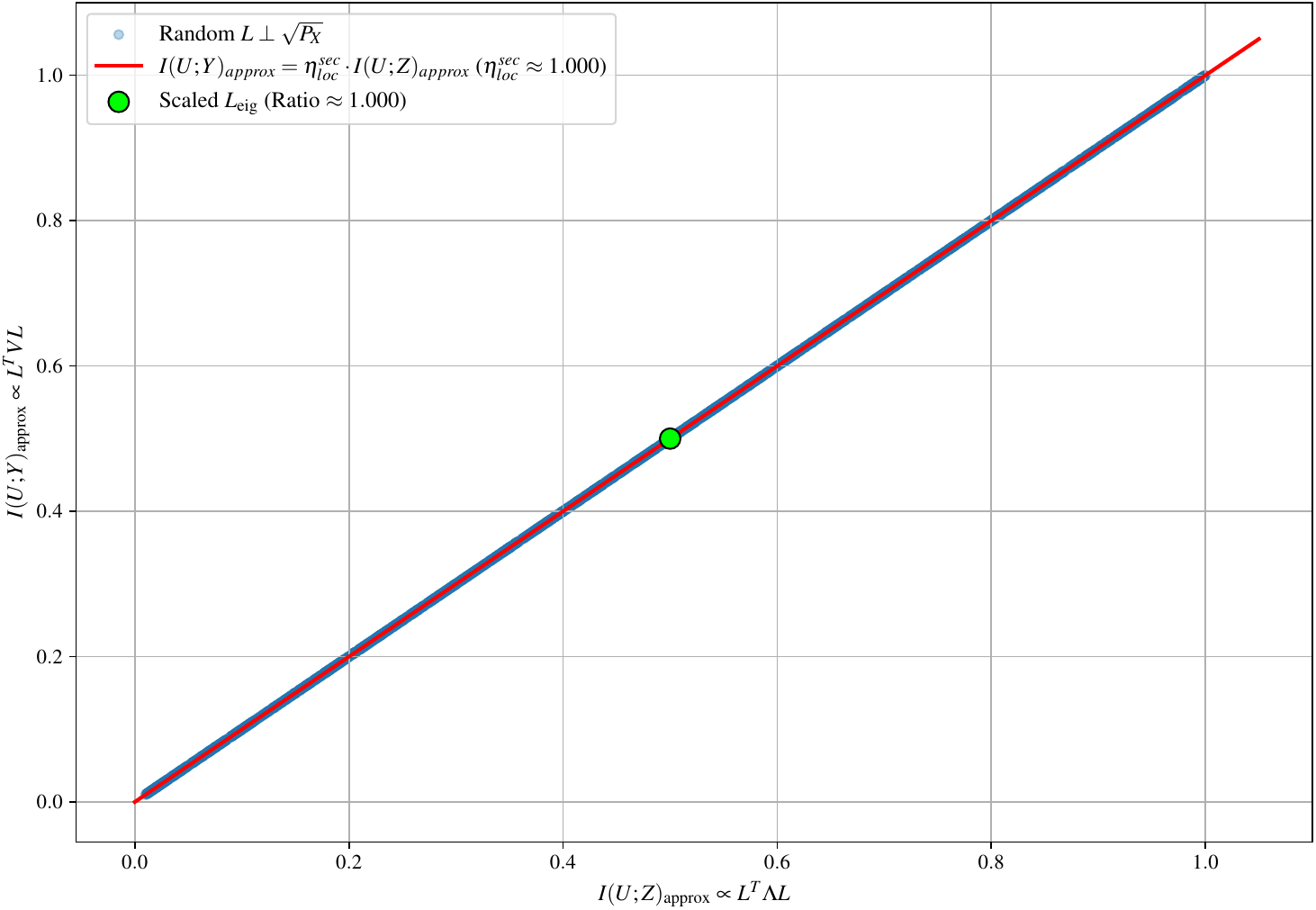} 
	\caption{Utility vs. Leakage for general DM-WTC with $|\mathcal{X}|=3, P_X = [0.4, 0.3, 0.3]$. The solid red line represents the bound $I(U;Y) = \eta_{\mathrm{loc}}^{\mathrm{sec}} \cdot I(U;Z)$, with the numerically calculated $\eta_{\mathrm{loc}}^{\mathrm{sec}} \approx 1.000$ for this example. The green circle indicates the point achieved by a scaled version of the principal generalized eigenvector $L_{\mathrm{eig}}$.}
	\label{fig:lemma71_scatter_general_channel}
\end{figure}
\begin{figure}[H]
	\centering
	 \includegraphics[width=0.8\textwidth]{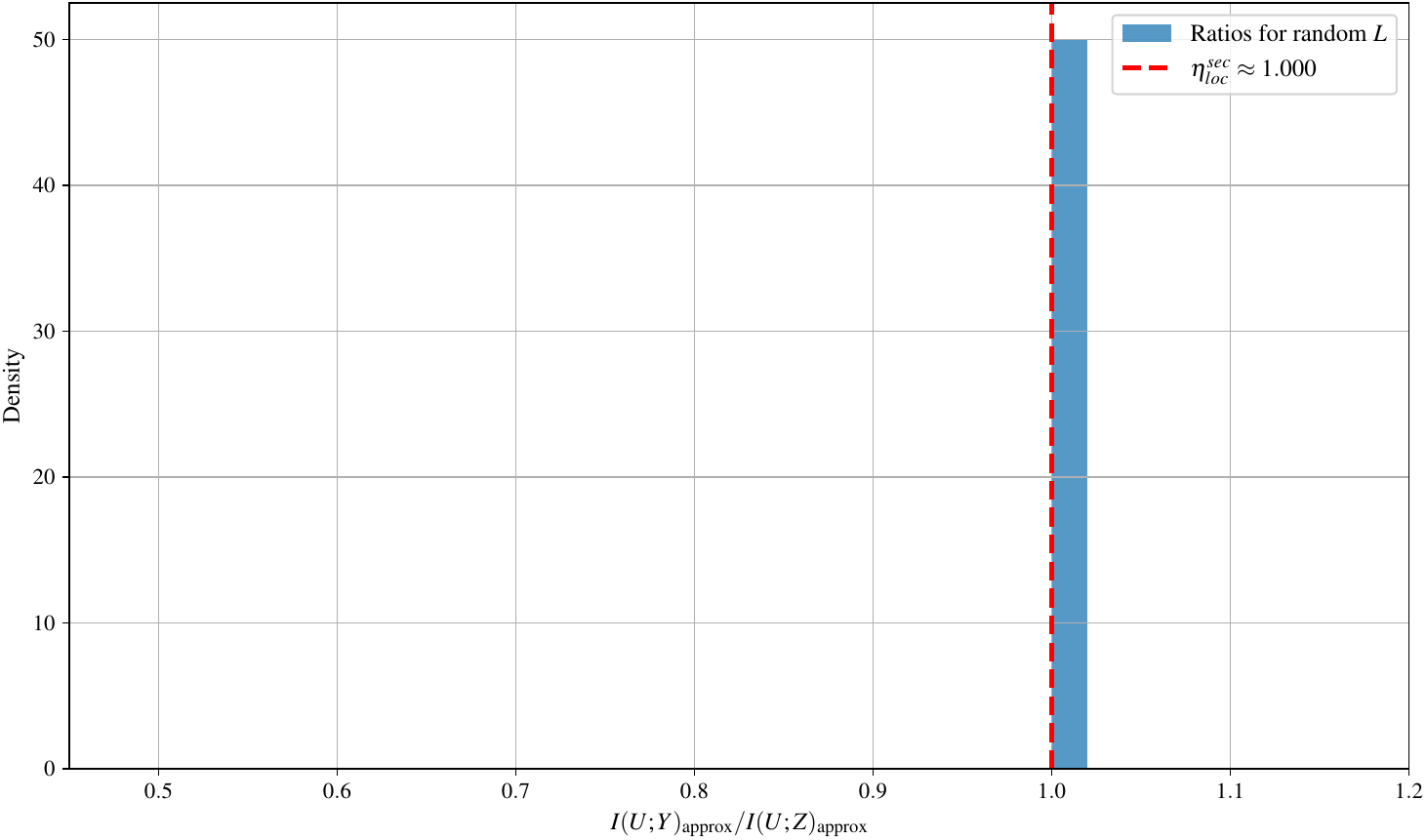}
	\caption{Distribution of Utility/Leakage Ratios with $|\mathcal{X}|=3, P_X = [0.4, 0.3, 0.3]$. The dashed red line indicates the calculated $ \eta_{\mathrm{loc}}^{\mathrm{sec}} \approx 1.000 $, which is the supremum of these ratios.}
	\label{fig:thm71_histogram_general_channel}
\end{figure}
The scatter plot in Figure \ref{fig:lemma71_scatter_general_channel} shows that for numerous random perturbation vectors $ \mathbf{L} $, the resulting pairs $ (I(U;Z), I(U;Y)) $ all lie on or below the line defined by $ \eta_{\mathrm{loc}}^{\mathrm{sec}} $. In this particular example, the channel parameters were chosen such that $V_{\mathrm{proj}} \approx \Lambda_{\mathrm{proj}}$ in the EIT perturbation subspace, leading to $ \eta_{\mathrm{loc}}^{\mathrm{sec}} \approx 1.0 $. Consequently, all valid perturbations yield approximately the same utility-to-leakage ratio, hence the points closely follow the bound line. The histogram in Figure \ref{fig:thm71_histogram_general_channel} confirms this, with the distribution of ratios sharply peaked at $ \eta_{\mathrm{loc}}^{\mathrm{sec}} \approx 1.0 $. These figures effectively demonstrate the operational meaning of $ \eta_{\mathrm{loc}}^{\mathrm{sec}} $ as the maximum local leakage efficiency.

While $ \eta_{\mathrm{loc}}^{\mathrm{sec}} $ represents the intrinsic leakage efficiency of the channel, the actual utility-to-leakage ratio achieved by the solution to the full SIC problem is also influenced by the rate constraint $R$. The following theorem connects the parameters of the SIC problem's optimal solution, $ \rho^* $ and $ \nu^* $, to the ratio achieved by the optimal perturbation vector $ \mathbf{L}^* $.
\begin{thm} \label{thm:slic_achieved_ratio_final}
	Let $ \mathbf{L}^* $ be an optimal perturbation vector representative of the solution to Problem \ref{prob:eit_slic_approximated}, associated with optimal $ \nu^* $ and $ \rho^* $. The ratio of approximated utility to leakage by $ \mathbf{L}^* $ is:
	\begin{equation} \label{eq:slic_ratio_formula_final}
		\frac{(\mathbf{L}^*)^T V \mathbf{L}^*}{(\mathbf{L}^*)^T \Lambda \mathbf{L}^*} = \nu^* + \rho^* \frac{(\mathbf{L}^*)^T I \mathbf{L}^*}{(\mathbf{L}^*)^T \Lambda \mathbf{L}^*},
	\end{equation}
	where $(\mathbf{L}^*)^T \Lambda \mathbf{L}^*> 0$. This achieved ratio is generally $ \le \eta_{\mathrm{loc}}^{\mathrm{sec}} $ due to the additional rate constraint $R'$ in the SIC problem.
\end{thm}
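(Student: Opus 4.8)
The plan is to obtain the equality \eqref{eq:slic_ratio_formula_final} directly from the KKT stationarity condition and then to invoke the variational characterization of $\eta_{\mathrm{loc}}^{\mathrm{sec}}$ for the concluding inequality. Because $\mathbf{L}^*$ is a representative optimal perturbation vector associated with some $u$ for which $P_U^*(u) > 0$, Theorem \ref{thm:kkt_conditions_Lu_final} guarantees that it obeys the stationarity relation \eqref{eq:k_Lu_equals_0_final}, i.e. $(-V + \rho^* I + \nu^* \Lambda)\mathbf{L}^* = \mathbf{0}$, equivalently $V \mathbf{L}^* = \rho^* \mathbf{L}^* + \nu^* \Lambda \mathbf{L}^*$. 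This null-space relation is the only input needed.

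First I would contract this vector identity against $\mathbf{L}^*$, i.e. left-multiply by $(\mathbf{L}^*)^T$, reducing it to a scalar relation among the three quadratic forms:
\begin{equation*}
	(\mathbf{L}^*)^T V \mathbf{L}^* = \rho^* (\mathbf{L}^*)^T \mathbf{L}^* + \nu^* (\mathbf{L}^*)^T \Lambda \mathbf{L}^*.
\end{equation*}
Using $(\mathbf{L}^*)^T \mathbf{L}^* = (\mathbf{L}^*)^T I \mathbf{L}^*$ and dividing both sides by $(\mathbf{L}^*)^T \Lambda \mathbf{L}^*$, which is strictly positive by the hypothesis $(\mathbf{L}^*)^T \Lambda \mathbf{L}^* > 0$ (guaranteed in the valid subspace by Assumption \ref{ass_1}), yields exactly
\begin{equation*}
	\frac{(\mathbf{L}^*)^T V \mathbf{L}^*}{(\mathbf{L}^*)^T \Lambda \mathbf{L}^*} = \nu^* + \rho^* \frac{(\mathbf{L}^*)^T I \mathbf{L}^*}{(\mathbf{L}^*)^T \Lambda \mathbf{L}^*},
\end{equation*}
which is \eqref{eq:slic_ratio_formula_final}. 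The essential observation is simply that the generalized-eigenvalue/null-space relation collapses into a Rayleigh-quotient identity once it is tested against $\mathbf{L}^*$ itself.

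For the bound, I would note that the optimal $\mathbf{L}^*$ satisfies the orthogonality constraint $(\mathbf{L}^*)^T \mathbf{\sqrt{P_X}} = 0$ from \eqref{eq:eit_slic_opt_ortho_problem_env} and has $(\mathbf{L}^*)^T \Lambda \mathbf{L}^* > 0$, so it is an admissible point of the supremum defining $\eta_{\mathrm{loc}}^{\mathrm{sec}}$ in Definition \ref{def:secret_local_contraction_coefficient_final} and characterized in Theorem \ref{thm:char_eta_loc_sec_final}. Hence the value of the Rayleigh quotient at $\mathbf{L}^*$ cannot exceed its supremum, giving $\frac{(\mathbf{L}^*)^T V \mathbf{L}^*}{(\mathbf{L}^*)^T \Lambda \mathbf{L}^*} \le \eta_{\mathrm{loc}}^{\mathrm{sec}} = d_{\max}^{\perp}(V,\Lambda)$, which is precisely Lemma \ref{lem:eta_loc_sec_bound_final} applied to $\mathbf{L}^*$. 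For the interpretive clause, equality requires $\mathbf{L}^*$ to align with the principal generalized eigenvector of $(V,\Lambda)$; this is attainable only when the rate constraint is inactive ($\rho^*=0$), whereas when $\rho^*>0$ the rate budget $R'$ in \eqref{eq:eit_slic_opt_rate_problem_env} forces $\mathbf{L}^*$ toward a direction that is not the maximizing eigenvector, generically pulling the achieved ratio strictly below $\eta_{\mathrm{loc}}^{\mathrm{sec}}$.

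Since both displayed steps are elementary linear algebra, I do not anticipate a genuine obstacle; the proof is essentially a one-line contraction of the KKT condition followed by a supremum-feasibility argument. The only points demanding care are the bookkeeping of hypotheses, namely verifying strict positivity of $(\mathbf{L}^*)^T\Lambda\mathbf{L}^*$ so that the division is legitimate and confirming that $\mathbf{L}^*$ lies in the feasible set of the supremum, and articulating the role of $\rho^*$ in loosening the bound, which is an interpretive remark rather than a computation.
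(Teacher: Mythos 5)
Your proof is correct and follows essentially the same route as the paper's: contract the KKT stationarity relation $V\mathbf{L}^* = \rho^* I\mathbf{L}^* + \nu^*\Lambda\mathbf{L}^*$ against $(\mathbf{L}^*)^T$ and divide by the strictly positive leakage quadratic form to obtain \eqref{eq:slic_ratio_formula_final}. Your supremum-feasibility argument for the clause "achieved ratio $\le \eta_{\mathrm{loc}}^{\mathrm{sec}}$" is a welcome addition — the paper's appendix leaves that part as interpretive commentary rather than proving it — and it is consistent with Definition \ref{def:secret_local_contraction_coefficient_final} and Theorem \ref{thm:char_eta_loc_sec_final}.
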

\begin{proof}
	The proof is presented in Appendix L.
\end{proof}
Theorem \ref{thm:slic_achieved_ratio_final} provides an insight into the performance of the EIT-approximated SIC problem by distinguishing between the intrinsic leakage efficiency of the channel and the actual efficiency achieved by the optimal solution. The achieved ratio, given by \eqref{eq:slic_ratio_formula_final}, is a direct consequence of the KKT conditions and is determined by the optimal Lagrange multipliers $ \rho^* $ and $ \nu^* $. The values of these multipliers are, in turn, determined by the solution to the LP in Theorem \ref{thm:general_lp_for_multipliers}, which leads to the operational regimes characterized in Theorem \ref{thm_c_lic_regimes}.

\begin{figure}[H]
	\centering
	\includegraphics[width=0.8\columnwidth]{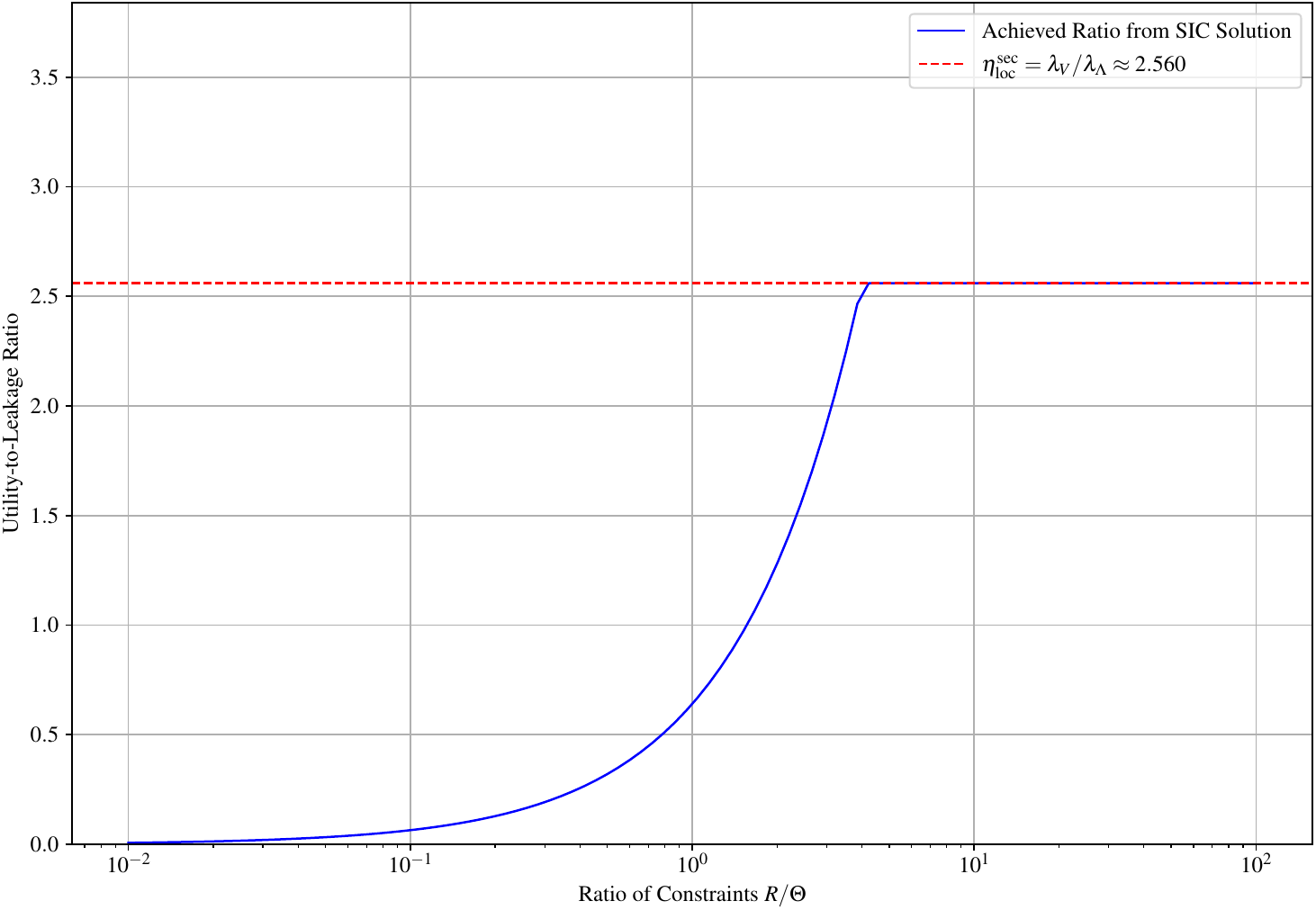} 
	\caption{Illustration of the achieved utility-to-leakage ratio from the SIC solution (solid blue) versus the intrinsic channel efficiency, $\eta_{\mathrm{loc}}^{\mathrm{sec}}$ (dashed red), for a BSWC. The x-axis represents the ratio of the rate budget to the leakage budget. For large $R/\Theta$, the rate constraint is inactive, and the achieved ratio converges to the channel's intrinsic limit. For small $R/\Theta$, the achieved efficiency is strictly less than the intrinsic limit, visualizing the trade-off described in Theorem \ref{thm:slic_achieved_ratio_final}.}
	\label{fig:achieved_ratio}
\end{figure}

In the leakage-dominant regime, the LP solution yields $ \rho^* = 0 $ and $ \nu^* = \eta_{\mathrm{loc}}^{\mathrm{sec}} $. Substituting these into \eqref{eq:slic_ratio_formula_final}, the achieved ratio becomes exactly $ \eta_{\mathrm{loc}}^{\mathrm{sec}} $. This means that when the rate budget $R$ is effectively infinite, the SIC problem's only goal is to maximize utility per unit of leakage, and it does so by choosing a perturbation strategy that achieves the intrinsic maximum channel efficiency. In the rate-dominant regime, the LP solution yields $ \nu^* = 0 $ and $ \rho^* = \lambda_{\max}^{\perp}(V) $. The achieved ratio from \eqref{eq:slic_ratio_formula_final} becomes $ \lambda_{\max}^{\perp}(V) \frac{||\mathbf{L}^*||^2}{(\mathbf{L}^*)^T \Lambda \mathbf{L}^*} $. The optimal perturbation $ \mathbf{L}^* $ in this case is the eigenvector corresponding to $ \lambda_{\max}^{\perp}(V) $, so the achieved ratio is determined by the properties of this specific eigenvector. Finally, in the intermediate regime, both $ \rho^* > 0 $ and $ \nu^* > 0 $. The achieved ratio is given by the full expression in \eqref{eq:slic_ratio_formula_final}. It is this regime that most clearly shows the trade-off: the achieved efficiency is a blend of the direct leakage efficiency priced by $ \nu^* $ and the rate-driven efficiency priced by $ \rho^* $.

In all cases where $ \rho^* > 0 $, the achieved ratio is generally less than or equal to $ \eta_{\mathrm{loc}}^{\mathrm{sec}} $. This inequality arises precisely because the additional rate constraint $R'$ forces the SIC solution to potentially choose a perturbation direction $ \mathbf{L}^* $ that is not perfectly aligned with the principal generalized eigenvector of $(V, \Lambda)$ if that optimal direction is too costly in terms of its squared Euclidean norm, $ ||\mathbf{L}^*||^2 $. This is illustrated in Figure \ref{fig:achieved_ratio}.

\subsection{Bounds Connecting Global and Local Secret Contraction Coefficients} \label{subsec:global_local_bounds_contraction_final}
While the secret local contraction coefficient in \eqref{eq:eta_loc_sec_definition_final} provides a tractable, EIT-derived measure of leakage efficiency, it is crucial to understand its relationship to the true efficiency defined by exact mutual information terms. To this end, we define the global secret contraction coefficient which is the supremum of the ratio of true mutual information terms:
\begin{equation} \label{eq:eta_glo_sec_definition_sec7_v2}
	\eta_{\mathrm{glo}}^{\mathrm{sec}} \triangleq \sup_{P_U, P_{X|U}} \frac{I(U;Y)}{I(U;Z)},
\end{equation}
subject to the constraints of Problem 1, including $ 0 < I(U;Z) \le \Theta $ and $ I(U;X) \le R $.
The following lemma establishes a fundamental two-sided bound connecting this global, often intractable, quantity to the analytically derived local coefficient. This connection is analogous in spirit to bounds established between mutual information and other statistical measures of dependence, such as maximal correlation \cite{Asoodeh, asoodeh2016information}.
\begin{lem} \label{lem:glo_ge_loc_sec_final}
	The global and local secret contraction coefficients are related by the following inequalities:
	\begin{equation}		
	\eta_{\mathrm{loc}}^{\mathrm{sec}} \le \eta_{\mathrm{glo}}^{\mathrm{sec}} \le \frac{2}{P_{\mathrm{min}}} \eta_{\mathrm{loc}}^{\mathrm{sec}},
\end{equation}
where $ P_{\mathrm{min}} = \min_{x \in \mathrm{supp}(P_X)} P_X(x) > 0 $.
\end{lem}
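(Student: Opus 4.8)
The plan is to prove the two inequalities separately, with the upper bound being substantially the harder direction. The single structural fact that drives both is that the Divergence Transfer Matrix relation is \emph{exact} for finite perturbations, not merely infinitesimal ones. Concretely, for any feasible $(P_U, P_{X|U})$ and each $u$, I would define the exact input perturbation $\Delta_{X,u}(x)=P_{X|U=u}(x)-P_X(x)$ and its scaled version $\tilde{L}_{X,u}(x)=\Delta_{X,u}(x)/\sqrt{P_X(x)}$. Since $P_{X|U=u}$ and $P_X$ are both PMFs, $\tilde{\mathbf{L}}_{X,u}^T \mathbf{\sqrt{P_X}} = 0$, so $\tilde{\mathbf{L}}_{X,u}$ lies in the valid subspace $\mathcal{S}^{\perp}$. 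Because the input-to-output map $\Delta_{Y,u}(y)=\sum_x P_{Y|X}(y|x)\Delta_{X,u}(x)$ is linear, one checks directly from the definition \eqref{eq:eit_dtm_definition_direct} of $B_{Y|X}$ that the scaled output perturbation equals $B_{Y|X}\tilde{\mathbf{L}}_{X,u}$ \emph{exactly}. Hence $\chi^2(P_{Y|U=u}\|P_Y) = \tilde{\mathbf{L}}_{X,u}^T V \tilde{\mathbf{L}}_{X,u}$ and $\chi^2(P_{Z|U=u}\|P_Z) = \tilde{\mathbf{L}}_{X,u}^T \Lambda \tilde{\mathbf{L}}_{X,u}$ for arbitrary, non-local perturbations. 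Applying the variational characterization \eqref{eq:eta_loc_sec_definition_final} and Theorem \ref{thm:char_eta_loc_sec_final} then yields the pointwise inequality $\chi^2(P_{Y|U=u}\|P_Y) \le \eta_{\mathrm{loc}}^{\mathrm{sec}}\,\chi^2(P_{Z|U=u}\|P_Z)$ for every $u$.

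For the lower bound $\eta_{\mathrm{loc}}^{\mathrm{sec}} \le \eta_{\mathrm{glo}}^{\mathrm{sec}}$, I would exhibit a family of feasible global strategies whose true KL ratio converges to the local value. Taking $\mathbf{L}^*$ to be the principal generalized eigenvector of $(V,\Lambda)$ on $\mathcal{S}^{\perp}$, I set $|\mathcal{U}|=2$ with $P_U(0)=P_U(1)=1/2$ and $P_{X|U=u}(x)=P_X(x) \pm \epsilon\sqrt{P_X(x)}L^*(x)$. Orthogonality of $\mathbf{L}^*$ to $\mathbf{\sqrt{P_X}}$ guarantees these are valid PMFs averaging to $P_X$, and for small $\epsilon$ non-negativity holds while both $I(U;X)$ and $I(U;Z)$ fall below the thresholds $R,\Theta>0$ with $I(U;Z)>0$ by Assumption \ref{ass_1}. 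Invoking Lemmas \ref{lem:approx_IUY_formal} and \ref{lem:approx_IUZ_formal}, the ratio $I(U;Y)/I(U;Z)$ tends to $(\mathbf{L}^*)^T V \mathbf{L}^*/(\mathbf{L}^*)^T\Lambda\mathbf{L}^* = \eta_{\mathrm{loc}}^{\mathrm{sec}}$ as $\epsilon\to 0$, so the supremum defining $\eta_{\mathrm{glo}}^{\mathrm{sec}}$ is at least $\eta_{\mathrm{loc}}^{\mathrm{sec}}$.

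For the upper bound I would sandwich the exact mutual information terms between their $\chi^2$ counterparts. For the numerator I use the always-valid inequality $D_{\mathrm{KL}}(Q\|P)\le\chi^2(Q\|P)$ (from $\log t \le t-1$), giving $I(U;Y)\le \sum_u P_U(u)\chi^2(P_{Y|U=u}\|P_Y)$. For the denominator I need the reverse comparison on Eve's channel, $\chi^2(P_{Z|U=u}\|P_Z)\le \frac{2}{P_{\mathrm{min}}}D_{\mathrm{KL}}(P_{Z|U=u}\|P_Z)$, which is where $P_{\mathrm{min}}$ enters. The key is a bounded density-ratio estimate: since $P_{X|U=u}(x)\le 1 \le P_X(x)/P_{\mathrm{min}}$, averaging through the channel gives $P_{Z|U=u}(z)\le P_Z(z)/P_{\mathrm{min}}$, so $t_z = P_{Z|U=u}(z)/P_Z(z) \in [0,\,1/P_{\mathrm{min}}]$. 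A second-order Taylor expansion of $\phi(t)=t\log t - t +1$ about $t=1$ gives $\phi(t)=(t-1)^2/(2\xi)$ for some $\xi$ between $1$ and $t$; since $\xi \le 1/P_{\mathrm{min}}$, I obtain $\phi(t)\ge \frac{P_{\mathrm{min}}}{2}(t-1)^2$ on this range, which is exactly the reverse bound after summing against $P_Z$. Chaining the numerator bound, the pointwise $\chi^2$ inequality, and the denominator bound yields $I(U;Y)\le \frac{2}{P_{\mathrm{min}}}\eta_{\mathrm{loc}}^{\mathrm{sec}}\, I(U;Z)$ for every feasible point, hence the claimed supremum bound.

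I expect the main obstacle to be establishing the reverse divergence comparison with the correct constant. The inequality $\phi(t)\ge c(t-1)^2$ fails for any fixed $c>0$ as $t\to\infty$, so the argument genuinely requires the bounded density-ratio estimate to confine $t$ to $[0,\,1/P_{\mathrm{min}}]$; landing the constant exactly at $P_{\mathrm{min}}/2$ rather than a looser value hinges on the Taylor-remainder form $\phi(t)=(t-1)^2/(2\xi)$ with $\xi\le 1/P_{\mathrm{min}}$. A secondary point to verify carefully is that the density-ratio bound transfers from the input to Eve's output through the channel average, which relies only on the nonnegativity of $P_{Z|X}$ and the positivity of the reference marginals $P_Z(z)>0$.
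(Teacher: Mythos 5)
Your proof is correct, and while it follows the same skeleton as the paper's argument in Appendix M (achievability via the principal generalized eigenvector for the lower bound; a KL--$\chi^2$ sandwich with constants $1$ and $P_{\mathrm{min}}/2$ for the upper bound), your execution differs in three substantive ways, each of which closes a gap the paper leaves open. First, your observation that the DTM relation and hence the $\chi^2$ identities $\chi^2(P_{Y|U=u}\|P_Y)=\tilde{\mathbf{L}}_{X,u}^T V \tilde{\mathbf{L}}_{X,u}$ and $\chi^2(P_{Z|U=u}\|P_Z)=\tilde{\mathbf{L}}_{X,u}^T \Lambda \tilde{\mathbf{L}}_{X,u}$ hold \emph{exactly} for arbitrary finite perturbations is what makes your upper bound a genuine bound on the supremum over \emph{all} feasible strategies; the paper instead writes these as approximations and restricts attention to strategies ``in the EIT regime,'' which strictly speaking does not control the global supremum defining $\eta_{\mathrm{glo}}^{\mathrm{sec}}$. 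Second, your derivation of the reverse comparison $D_{\mathrm{KL}}(P_{Z|U=u}\|P_Z)\ge \tfrac{P_{\mathrm{min}}}{2}\chi^2(P_{Z|U=u},P_Z)$ — via the density-ratio bound $P_{Z|U=u}(z)\le P_Z(z)/P_{\mathrm{min}}$ obtained by pushing $P_{X|U=u}(x)\le P_X(x)/P_{\mathrm{min}}$ through the channel, followed by the Taylor-remainder estimate on $\phi(t)=t\log t-t+1$ — is exactly the ``more detailed analysis relating the output $\chi^2$ divergence to the input'' that the paper invokes but never supplies; without it, the paper's cited inequality would naturally carry the constant $\min_z P_Z(z)$ rather than the $\min_x P_X(x)$ appearing in the lemma statement. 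Third, your two-point construction $P_{X|U=u}=P_X \pm \epsilon\sqrt{P_X}\,L^*$ with uniform $P_U$ satisfies the consistency constraint $\sum_u P_U(u)P_{X|U}(\cdot|u)=P_X$ of Problem 1 exactly, whereas the paper's lower-bound strategy concentrates $P_U$ on a single $u_0$ with $\mathbf{L}_{u_0}=\mathbf{L}_{\mathrm{eig}}\neq\mathbf{0}$, which violates that constraint and is therefore not actually feasible for the supremum in \eqref{eq:eta_glo_sec_definition_sec7_v2}. In short, your route buys a fully rigorous global statement at the cost of a slightly longer argument, and it would be a strict improvement over the proof given in the paper.
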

\begin{proof}
The proof is presented in Appendix M. 
\end{proof}
These bounds connect the EIT-derived local coefficient to its global counterpart. Figure \ref{fig:BSWC_contraction_bounds_final} illustrates these bounds for the BSWC. This figure effectively demonstrates the utility of $ \eta_{\mathrm{loc}}^{\mathrm{sec}} $ not only as a standalone measure of local leakage efficiency but also as an analytical tool to constrain the range of its more general, but harder to compute, global counterpart $ \eta_{\mathrm{glo}}^{\mathrm{sec}} $. The observed behaviors across different channel conditions for Bob and Eve align with information-theoretic intuition regarding the principles of secure communication.
Figure \ref{fig:BSWC_contraction_bounds_final} illustrates the bounds on $ \eta_{\mathrm{glo}}^{\mathrm{sec}} $ for the BSWC, as established in Lemma \ref{lem:glo_ge_loc_sec_final}. The simulations assume a uniform input distribution, for which $ P_{\mathrm{min}}=0.5 $, yielding an upper bound of $ 4 \eta_{\mathrm{loc}}^{\mathrm{sec}} $. The solid lines represent the lower bound, $ \eta_{\mathrm{loc}}^{\mathrm{sec}} $, which is calculated as $ (1-2p_{\mathrm{bob}})^2 / (1-2q_{\mathrm{eve}})^2 $ for the BSWC. The behavior of these bounds as Bob's and Eve's channel qualities vary reveals several key insights.
\begin{figure}[H]
	\centering
	\includegraphics[width=0.8\textwidth]{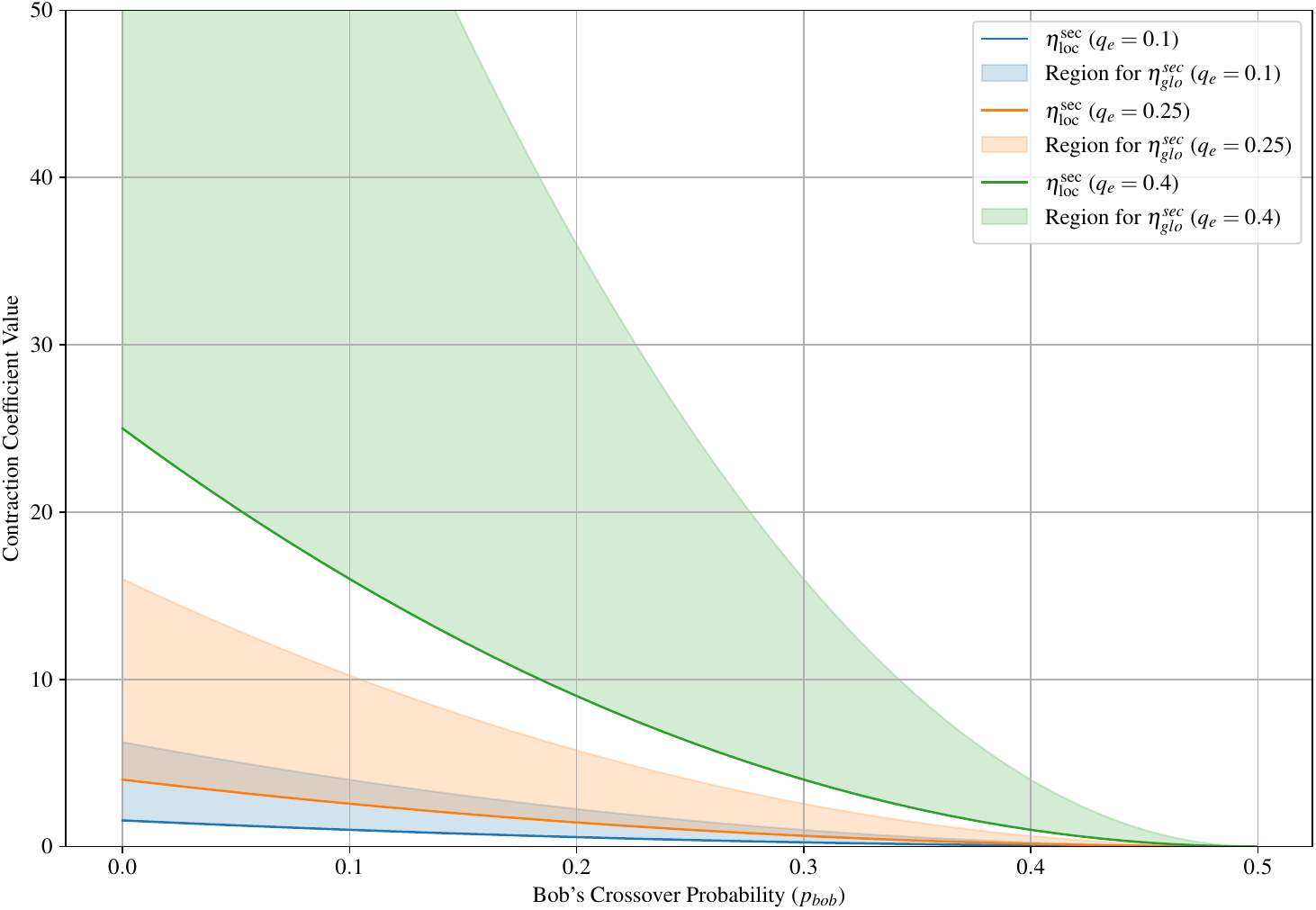} 
	\caption{Bounds on the global secret contraction coefficient ($\eta_{\mathrm{glo}}^{\mathrm{sec}}$) for the BSWC as a function of Bob's crossover probability ($p_{\mathrm{bob}}$). Input $P_X$ is uniform ($P_{\mathrm{min}} = 0.5$). Solid lines are $\eta_{\mathrm{loc}}^{\mathrm{sec}}$ (lower bound) for $q_e = [0.1, 0.25, 0.4]$. Shaded regions extend to the upper bound $4 \eta_{\mathrm{loc}}^{\mathrm{sec}}$.}
	\label{fig:BSWC_contraction_bounds_final}
\end{figure}
First, the lower bound $ \eta_{\mathrm{loc}}^{\mathrm{sec}} $ behaves intuitively: for a fixed Eve, it is maximized when Bob's channel is perfect ($ p_{\mathrm{bob}}=0 $) and decreases as Bob's channel degrades. Conversely, for a fixed Bob, $ \eta_{\mathrm{loc}}^{\mathrm{sec}} $ increases as Eve's channel becomes noisier, indicating greater potential for leakage-efficient communication. Second, the shaded regions depict the guaranteed interval where the true global coefficient $ \eta_{\mathrm{glo}}^{\mathrm{sec}} $ must lie. This visually represents the established connection between the analytically tractable local coefficient and the often intractable global one. The multiplicative gap between the bounds, a factor of $ (2/P_{\mathrm{min}}) - 1 = 3 $ in this example, indicates the potential difference between the EIT local approximation's efficiency measure and the true global efficiency. Furthermore, the figure demonstrates the impact of the eavesdropper's channel quality. When Eve's channel is strong (blue curve), the entire possible range for $ \eta_{\mathrm{glo}}^{\mathrm{sec}} $ is constrained to lower values. As Eve's channel degrades (green curve), the potential for high utility-to-leakage ratios, both locally and globally, increases significantly. These bounds collectively provide valuable, albeit not always tight, constraints on the performance of any possible secure communication scheme as measured by $ \eta_{\mathrm{glo}}^{\mathrm{sec}} $. 

\section{Conclusion and Future Work} \label{sec:conclusion}
In this paper, we introduced a framework for Local Information-Theoretic Security, moving beyond the classical asymptotic analysis of the wiretap channel to a local regime characterized by small perturbations around a fixed operating point. By applying the framework of EIT, we transformed the non-convex SIC problem into an analytically tractable domain. This approach yielded several contributions that constitute a new perspective on the structure of secure communication.

Our central discovery is that the local secrecy capacity is governed by an LP, whose constraints are determined by the generalized eigenvalues of the channel matrices. This LP formulation provides a new geometric perspective on secrecy, revealing that local performance trade-offs are dictated by the spectral theory of the channel matrices. This framework is constructive. It directly led to an explicit analytical formula for the local secrecy capacity, expressing it as a weighted sum of the rate and leakage budgets, where the weights are the optimal dual variables of the LP.

Building on this, we fully characterized the behavior of the capacity across distinct operational regimes by analyzing the vertex solutions of the LP. This analysis provides clear design principles by showing how the optimal signaling strategy shifts depending on whether the system is constrained by encoding resources or by the secrecy requirement. A key consequence of this analysis is the finding that the local capacity is a universal characteristic of the channel, independent of the secret message's source statistics.

Furthermore, our method enabled the introduction of the secret local contraction coefficient, a new computable metric defined as the largest generalized eigenvalue of a channel-derived matrix pencil. This coefficient quantifies the channel's intrinsic leakage efficiency and was shown to be linked to the solution in the leakage-dominant regime. The validity of all our theoretical results was demonstrated through numerical experiments, confirming the accuracy of the EIT approximation and the correctness of the LP formulation for general channels.

The primary advantage of this work is its provision of analytical tractability for a problem that is often opaque. The connection between local secrecy performance and spectral theory offers a new paradigm for the design and evaluation of secure systems. A compelling direction for future research is to formally bridge the gap between our local, geometric analysis and the non-asymptotic guarantees of finite-blocklength information theory.

		
		%
		%
	\appendices
	
\section*{Appendix A: Proofs for Mutual Information Approximation Lemmas}		
\label{app:proofs_mi_lemmas}

\subsection{Proof of Lemma \ref{lem:approx_IUX_formal} (Approximation of $I(U;X)$)}
\begin{proof}
	The mutual information $ I(U;X) $ can be expressed as:
	\begin{equation} \label{eq:proof_iux_def_app}
		I(U;X) = \sum_{u \in \mathcal{U}} P_U(u) D_{\mathrm{KL}}(P_{X|U=u} || P_X).
	\end{equation}
	We are given the EIT perturbation for the conditional distribution $ P_{X|U=u}(x) $ (denoted as $ Q_X^{(u)}(x) $ for clarity in this proof step) around the reference marginal distribution $ P_X(x) $:
	$ Q_X^{(u)}(x) \triangleq P_{X|U=u}(x) = P_X(x) + \epsilon \sqrt{P_X(x)} L_X(x|u) $.
	Let $ J_X(x|u) = \sqrt{P_X(x)} L_X(x|u) $. Then $ Q_X^{(u)}(x) = P_X(x) + \epsilon J_X(x|u) $.
	
	From \eqref{eq:eit_kl_approx_J}, the KL divergence $ D_{\mathrm{KL}}(Q_X^{(u)} || P_X) $ is approximated by a second-order Taylor expansion around $ \epsilon = 0 $.
	Let $ f(\epsilon) = D_{\mathrm{KL}}(P_X + \epsilon J_X(\cdot|u) || P_X) = \sum_x (P_X(x) + \epsilon J_X(x|u)) \log \frac{P_X(x) + \epsilon J_X(x|u)}{P_X(x)} $.
	Then,
	\begin{equation*}
		f(\epsilon) = \sum_x (P_X(x) + \epsilon J_X(x|u)) \log \left(1 + \epsilon \frac{J_X(x|u)}{P_X(x)}\right).
	\end{equation*}
	We use the Taylor expansion $ \log(1+z) = z - \frac{z^2}{2} + O(z^3) $. Let $ z_x = \epsilon \frac{J_X(x|u)}{P_X(x)} $.
	\begin{align*}
		f(\epsilon) &= \sum_x (P_X(x) + \epsilon J_X(x|u)) \left( \epsilon \frac{J_X(x|u)}{P_X(x)} - \frac{\epsilon^2}{2} \left(\frac{J_X(x|u)}{P_X(x)}\right)^2 + O(\epsilon^3) \right) \\
		&= \sum_x \left[ P_X(x) \left( \epsilon \frac{J_X(x|u)}{P_X(x)} - \frac{\epsilon^2}{2} \frac{J_X(x|u)^2}{P_X(x)^2} \right) + \epsilon J_X(x|u) \left( \epsilon \frac{J_X(x|u)}{P_X(x)} \right) \right] + O(\epsilon^3) \\
		&= \sum_x \left[ \epsilon J_X(x|u) - \frac{\epsilon^2}{2} \frac{J_X(x|u)^2}{P_X(x)} + \epsilon^2 \frac{J_X(x|u)^2}{P_X(x)} \right] + O(\epsilon^3) \\
		&= \epsilon \sum_x J_X(x|u) + \frac{\epsilon^2}{2} \sum_x \frac{J_X(x|u)^2}{P_X(x)} + O(\epsilon^3).
	\end{align*}
	Since $ P_{X|U=u}(x) $ is a valid PMF for each $u$, $ \sum_x P_{X|U=u}(x) = 1 $. Given $ \sum_x P_X(x) = 1 $, it implies $ \sum_x \epsilon J_X(x|u) = 0 $, so $ \sum_x J_X(x|u) = 0 $.
	Therefore, the first term $ \epsilon \sum_x J_X(x|u) $ vanishes.
	So, for each $u$:
	\begin{equation} \label{eq:proof_dkl_pxu_px_app}
		D_{\mathrm{KL}}(P_{X|U=u} || P_X) = \frac{\epsilon^2}{2} \sum_{x \in \mathcal{X}} \frac{J_X(x|u)^2}{P_X(x)} + O(\epsilon^3).
	\end{equation}
	Substituting $ J_X(x|u) = \sqrt{P_X(x)} L_X(x|u) $:
	\begin{equation*}
		\frac{J_X(x|u)^2}{P_X(x)} = \frac{(\sqrt{P_X(x)} L_X(x|u))^2}{P_X(x)} = \frac{P_X(x) L_X(x|u)^2}{P_X(x)} = L_X(x|u)^2.
	\end{equation*}
	Thus,
	\begin{equation*}
		D_{\mathrm{KL}}(P_{X|U=u} || P_X) = \frac{\epsilon^2}{2} \sum_{x \in \mathcal{X}} L_X(x|u)^2 + O(\epsilon^3) = \frac{\epsilon^2}{2} ||\mathbf{L}_u||^2 + O(\epsilon^3).
	\end{equation*}
	Substituting this back into the definition of $ I(U;X) $ from \eqref{eq:proof_iux_def_app}:
	\begin{align*}
		I(U;X) &= \sum_{u \in \mathcal{U}} P_U(u) \left( \frac{\epsilon^2}{2} ||\mathbf{L}_u||^2 + O(\epsilon^3) \right) \\
		&\approx \frac{\epsilon^2}{2} \sum_{u \in \mathcal{U}} P_U(u) ||\mathbf{L}_u||^2 \\
		&= \frac{\epsilon^2}{2} \mathbb{E}_U[||\mathbf{L}_U||^2].
	\end{align*}
	This completes the proof, neglecting terms of order $ O(\epsilon^3) $ and higher.
\end{proof}

\subsection{Proof of Lemma \ref{lem:approx_IUY_formal} (Approximation of $I(U;Y)$)}
\begin{proof}
	The mutual information $ I(U;Y) $ can be expressed using the conditional output distributions $ P_{Y|U=u}(y) $ and the marginal output distribution $ P_Y(y) $:
	\begin{equation} \label{eq:proof_iuy_def_app}
		I(U;Y) = \sum_{u \in \mathcal{U}} P_U(u) D_{\mathrm{KL}}(P_{Y|U=u} || P_Y).
	\end{equation}
	The reference marginal output distribution is $ P_Y(y) = \sum_x P_X(x) P_{Y|X}(y|x) $.
	The output distribution is $ P_{Y|U=u}(y) = \sum_x P_{X|U}(x|u) P_{Y|X}(y|x) $.
	We are given the EIT perturbation for the input: $ P_{X|U}(x|u) = P_X(x) + \epsilon \sqrt{P_X(x)} L_X(x|u) $.
	Let $ J_X(x|u) = \sqrt{P_X(x)} L_X(x|u) $. So $ P_{X|U}(x|u) = P_X(x) + \epsilon J_X(x|u) $.
	Then, the output distribution can be written as:
	\begin{align*}
		P_{Y|U=u}(y) &= \sum_x (P_X(x) + \epsilon J_X(x|u)) P_{Y|X}(y|x) \\
		&= \sum_x P_X(x) P_{Y|X}(y|x) + \epsilon \sum_x J_X(x|u) P_{Y|X}(y|x) \\
		&= P_Y(y) + \epsilon J_Y(y|u),
	\end{align*}
	where $ J_Y(y|u) \triangleq \sum_x J_X(x|u) P_{Y|X}(y|x) $ is the unscaled perturbation of the output distribution $P_Y(y)$ when $U=u$.
	
	Let $ \mathbf{L}_u $ be the vector form of $ L_X(x|u) $. From the definition of the DTM $ B_{Y|X} $, the scaled perturbation vector $ \mathbf{L}_{Y,u} $ for the output distribution $ P_{Y|U=u} $ (relative to $ P_Y $) is given by $ \mathbf{L}_{Y,u} = B_{Y|X} \mathbf{L}_u $.
	This means $ P_{Y|U=u}(y) $ can be written in the EIT form as $ P_{Y|U=u}(y) = P_Y(y) + \epsilon \sqrt{P_Y(y)} L_Y(y|u) $, where $ L_Y(y|u) $ are the components of $ \mathbf{L}_{Y,u} $.
	
	Now, we apply the KL divergence approximation in \eqref{eq:eit_kl_approx_L} to $ D_{\mathrm{KL}}(P_{Y|U=u} || P_Y) $:
	\begin{equation*}
		D_{\mathrm{KL}}(P_{Y|U=u} || P_Y) \approx \frac{\epsilon^2}{2} ||\mathbf{L}_{Y,u}||^2.
	\end{equation*}
	Substituting $ \mathbf{L}_{Y,u} = B_{Y|X} \mathbf{L}_u $:
	\begin{equation*}
		D_{\mathrm{KL}}(P_{Y|U=u} || P_Y) \approx \frac{\epsilon^2}{2} ||B_{Y|X} \mathbf{L}_u||^2.
	\end{equation*}
	Substituting this into the definition of $ I(U;Y) $ from \eqref{eq:proof_iuy_def_app}:
	\begin{align*}
		I(U;Y) &= \sum_{u \in \mathcal{U}} P_U(u) \left( \frac{\epsilon^2}{2} ||B_{Y|X} \mathbf{L}_u||^2 + O(\epsilon^3) \right) \\
		&\approx \frac{\epsilon^2}{2} \sum_{u \in \mathcal{U}} P_U(u) ||B_{Y|X} \mathbf{L}_u||^2 \\
		&= \frac{\epsilon^2}{2} \mathbb{E}_U [||B_{Y|X} \mathbf{L}_U||^2].
	\end{align*}
	This completes the proof, neglecting terms of order $ O(\epsilon^3) $ and higher.
\end{proof}

\subsection{Proof of Lemma \ref{lem:approx_IUZ_formal} (Approximation of $I(U;Z)$)}
\begin{proof}
	The proof is entirely analogous to the proof of Lemma \ref{lem:approx_IUY_formal}.
\end{proof}

	\section*{Appendix B: Proof of Theorem \ref{thm:kkt_conditions_Lu_final}}\label{app:proof_of_theorem_6}
	\begin{proof}
	We analyze the EIT-Approximated SIC Problem (Problem \ref{prob:eit_slic_approximated}) by considering its Lagrangian $ \mathcal{L} $. For a fixed optimal distribution $ \{P_U^*(u)\} $ with support $ \mathcal{U}^* $, the KKT stationarity condition, $ \nabla_{\mathbf{L}_u} \mathcal{L} = \mathbf{0} $, must hold for each $ u \in \mathcal{U}^* $. The full Lagrangian is given in \eqref{eq:lagrangian_slic_final}. Differentiating with respect to $ \mathbf{L}_u $ for a specific $ u \in \mathcal{U}^* $ yields:
	\begin{equation} \label{eq:app_full_stationarity_final}
		2 P_U^*(u) V \mathbf{L}_u^* - 2 \rho^* P_U^*(u) I \mathbf{L}_u^* - 2 \nu^* P_U^*(u) \Lambda \mathbf{L}_u^* - \xi^*(u) \mathbf{\sqrt{P_X}} - P_U^*(u) \boldsymbol{\mu}^* = \mathbf{0},
	\end{equation}
	where $ \xi^*(u) $ is the scalar Lagrange multiplier for the constraint $ \mathbf{L}_u^T \mathbf{\sqrt{P_X}} = 0 $ in \eqref{eq:eit_slic_opt_ortho_problem_env} and $ \boldsymbol{\mu}^* $ is the vector multiplier for $ \sum_{u} P_U(u) \mathbf{L}_u = \mathbf{0} $ in \eqref{eq:eit_slic_opt_consistency_L_problem_env}. Let $ K(\rho^*, \nu^*) \triangleq -V + \rho^*I + \nu^*\Lambda $. Dividing \eqref{eq:app_full_stationarity_final} by $ -2P_U^*(u) $ (since $P_U^*(u) > 0$):
	\begin{equation} \label{eq:app_stationarity_intermediate_final}
		K(\rho^*, \nu^*) \mathbf{L}_u^* + \frac{\xi^*(u)}{2 P_U^*(u)} \mathbf{\sqrt{P_X}} + \frac{1}{2} \boldsymbol{\mu}^* = \mathbf{0}.
	\end{equation}
	Summing \eqref{eq:app_full_stationarity_final} over all $ u \in \mathcal{U}^* $ and using the constraints $ \sum_{u \in \mathcal{U}^*} P_U^*(u) \mathbf{L}_u^* = \mathbf{0} $ and $ \sum_{u \in \mathcal{U}^*} P_U^*(u) = 1 $, we find the relationship between the multipliers:
	\begin{equation*}
		- \left( \sum_{u \in \mathcal{U}^*} \xi^*(u) \right) \mathbf{\sqrt{P_X}} - \boldsymbol{\mu}^* = \mathbf{0} \quad \Rightarrow \quad \boldsymbol{\mu}^* = -\left( \sum_{u \in \mathcal{U}^*} \xi^*(u) \right) \mathbf{\sqrt{P_X}}.
	\end{equation*}
	This shows that $ \boldsymbol{\mu}^* $ is collinear with $ \mathbf{\sqrt{P_X}} $. Substituting this back into \eqref{eq:app_stationarity_intermediate_final} yields:
	\begin{equation} \label{eq:app_KLu_gamma_sqrtP}
		K \mathbf{L}_u^* = \gamma_u^* \mathbf{\sqrt{P_X}},
	\end{equation}
	where $ \gamma_u^* $ is a scalar constant defined as $ \gamma_u^* \triangleq \frac{1}{2} \left( \sum_{k \in \mathcal{U}^*} \xi^*(k) - \frac{\xi^*(u)}{P_U^*(u)} \right) $.
	
	We now prove by contradiction that $ \gamma_u^* $ must be zero for any meaningful optimal solution $ \mathbf{L}_u^* \neq \mathbf{0} $. Assume $ \gamma_u^* \neq 0 $.
	The matrix $ K $ is symmetric, thus it is orthogonally diagonalizable as $ K = Q \Delta Q^T $, where $ Q $ is an orthogonal matrix and $ \Delta $ is a diagonal matrix of the eigenvalues of $ K $.
	Let $ \tilde{\mathbf{L}}_u^* \triangleq Q^T \mathbf{L}_u^* $ and $ \tilde{\mathbf{p}} \triangleq Q^T \mathbf{\sqrt{P_X}} $.
	The equation \eqref{eq:app_KLu_gamma_sqrtP} becomes $ Q \Delta Q^T (Q \tilde{\mathbf{L}}_u^*) = \gamma_u^* (Q \tilde{\mathbf{p}}) $, which simplifies to:
	\begin{equation} \label{eq:app_diagonalized_stationarity}
		\Delta \tilde{\mathbf{L}}_u^* = \gamma_u^* \tilde{\mathbf{p}}.
	\end{equation}
	The orthogonality constraint $ (\mathbf{L}_u^*)^T \mathbf{\sqrt{P_X}} = 0 $ transforms to:
	\begin{equation} \label{eq:app_diagonalized_orthogonality}
		(\tilde{\mathbf{L}}_u^*)^T \tilde{\mathbf{p}} = \sum_j (\tilde{\mathbf{L}}_u^*)_{j} \tilde{p}_j = 0.
	\end{equation}
	From \eqref{eq:app_diagonalized_stationarity}, for each component $j = 1, \dots, |\mathcal{X}|$:
	\begin{itemize}
		\item If $ \Delta_{jj} \neq 0 $, then $ (\tilde{\mathbf{L}}_u^*)_{j} = \frac{\gamma_u^* \tilde{p}_j}{\Delta_{jj}} $.
		\item If $ \Delta_{jj} = 0 $, then for $ \Delta_{jj} (\tilde{\mathbf{L}}_u^*)_{j} = \gamma_u^* \tilde{p}_j $ to hold, we must have $ \gamma_u^* \tilde{p}_j = 0 $. Since we assumed $ \gamma_u^* \neq 0 $, this implies $ \tilde{p}_j = 0 $.
	\end{itemize}
	Substituting these into the orthogonality condition \eqref{eq:app_diagonalized_orthogonality}:
	\begin{equation*}
		\sum_{j: \Delta_{jj} \neq 0} \left(\frac{\gamma_u^* \tilde{p}_j}{\Delta_{jj}}\right) \tilde{p}_j + \sum_{j: \Delta_{jj} = 0} (\tilde{\mathbf{L}}_u^*)_{j} \tilde{p}_j = 0.
	\end{equation*}
	As established above, for every term in the second sum, $ \tilde{p}_j = 0 $ if $ \gamma_u^* \neq 0 $. Therefore, the second sum is zero. The equation reduces to:
	\begin{equation*}
		\gamma_u^* \sum_{j: \Delta_{jj} \neq 0} \frac{\tilde{p}_j^2}{\Delta_{jj}} = 0.
	\end{equation*}
	Since we assumed $ \gamma_u^* \neq 0 $, this forces the condition:
	\begin{equation} \label{eq:app_sum_p_sq_over_delta_zero}
		\sum_{j: \Delta_{jj} \neq 0} \frac{\tilde{p}_j^2}{\Delta_{jj}} = 0.
	\end{equation}
	However, by pre-multiplying \eqref{eq:app_KLu_gamma_sqrtP} by $ (\mathbf{L}_u^*)^T $, we obtain $ (\mathbf{L}_u^*)^T K \mathbf{L}_u^* = \gamma_u^* (\mathbf{L}_u^*)^T \mathbf{\sqrt{P_X}} $. Due to the orthogonality constraint \eqref{eq:eit_slic_opt_ortho_problem_env}, the right-hand side is zero, so $ (\mathbf{L}_u^*)^T K \mathbf{L}_u^* = 0 $. In the transformed coordinates, this is $ (\tilde{\mathbf{L}}_u^*)^T \Delta \tilde{\mathbf{L}}_u^* = \sum_j \Delta_{jj} ((\tilde{\mathbf{L}}_u^*)_{j})^2 = 0 $.
	Substituting $ (\tilde{\mathbf{L}}_u^*)_{j} = \frac{\gamma_u^* \tilde{p}_j}{\Delta_{jj}} $ for the non-singular part:
	\begin{equation*}
		\sum_{j: \Delta_{jj} \neq 0} \Delta_{jj} \left(\frac{\gamma_u^* \tilde{p}_j}{\Delta_{jj}}\right)^2 + \sum_{j: \Delta_{jj} = 0} 0 \cdot ((\tilde{\mathbf{L}}_u^*)_{j})^2 = 0,
	\end{equation*}
	which implies $ (\gamma_u^*)^2 \sum_{j: \Delta_{jj} \neq 0} \frac{\tilde{p}_j^2}{\Delta_{jj}} = 0 $. With $ \gamma_u^* \neq 0 $, this again leads to condition \eqref{eq:app_sum_p_sq_over_delta_zero}.
	
	This condition, $ \sum_{j: \Delta_{jj} \neq 0} \tilde{p}_j^2/\Delta_{jj} = 0 $, leads to a contradiction. If the matrix $K$ restricted to the subspace where $ \tilde{\mathbf{p}} $ has support were definite (i.e., all relevant $ \Delta_{jj} $ have the same sign), this would imply that all relevant $ \tilde{p}_j $ must be zero. Combined with the finding that $ \tilde{p}_j=0 $ for components where $ \Delta_{jj}=0 $, this would force $ \tilde{\mathbf{p}} = \mathbf{0} $. This implies $ Q^T \mathbf{\sqrt{P_X}} = \mathbf{0} $, and since $ Q $ is invertible, it would require $ \mathbf{\sqrt{P_X}} = \mathbf{0} $, which contradicts the assumption that $ P_X(x) > 0 $. Even if $K$ is indefinite, this condition imposes a non-generic constraint on the vector $ \mathbf{\sqrt{P_X}} $ relative to the eigensystem of $K$. For a robust solution, this path is not viable. Therefore, the initial assumption $ \gamma_u^* \neq 0 $ must be false for any optimal $ \mathbf{L}_u^* \neq \mathbf{0} $.
	
	Thus, with $ \gamma_u^* = 0 $, the stationarity condition with respect to $ \mathbf{L}_u $ simplifies to $ K(\rho^*, \nu^*) \mathbf{L}_u^* = \mathbf{0} $, which is the pivotal result \eqref{eq:k_Lu_equals_0_final} in Section \ref{sec:approx_capacity_solution_structure}. This is equivalent to $ V \mathbf{L}_u^* = (\rho^*I + \nu^*\Lambda) \mathbf{L}_u^* $.
	
	The full set of KKT conditions for Problem \ref{prob:eit_slic_approximated} requires that this stationarity condition holds in conjunction with primal and dual feasibility, and complementary slackness. While the detailed derivation is standard in constrained optimization theory \cite{Boyd}, we briefly outline these conditions for the optimal primal variables $ (\{P_U^*(u)\}, \{\mathbf{L}_u^*\}) $ and dual variables $ (\rho^*, \nu^*, \{\xi^*(u)\}, \boldsymbol{\mu}^*, \kappa^*, \{\zeta^*(u)\}) $:

	Primal feasibility implies that the optimal solution must satisfy all constraints of Problem \ref{prob:eit_slic_approximated}, namely constraints \eqref{eq:eit_slic_opt_rate_problem_env} through \eqref{eq:eit_slic_opt_pu_nonneg_problem_env}. For dual feasibility, the Lagrange multipliers associated with inequality constraints must be non-negative:
	\begin{align*}
		\rho^* &\ge 0, \\
		\nu^* &\ge 0, \\
		\zeta^*(u) &\ge 0, \quad \forall u \in \mathcal{U},
	\end{align*}
	where $ \zeta^*(u) $ are the multipliers for the constraints $ P_U^*(u) \ge 0 $. And the complementary slackness condition implies that the product of a Lagrange multiplier and its corresponding inequality constraint must be zero at the optimum, i.e.,
	\begin{align}
			\rho^* \left( \sum_{u \in \mathcal{U}} P_U^*(u) (\mathbf{L}_u^*)^T I \mathbf{L}_u^* - R' \right) &= 0, \label{eq:app_cs_rate_final} \\
			\nu^* \left( \sum_{u \in \mathcal{U}} P_U^*(u) (\mathbf{L}_u^*)^T \Lambda \mathbf{L}_u^* - \Theta' \right) &= 0, \label{eq:app_cs_leakage_final} \\
			\zeta^*(u) P_U^*(u) &= 0, \quad \forall u \in \mathcal{U}. \label{eq:app_cs_pu_nonneg_final}
	\end{align}
	Conditions \eqref{eq:app_cs_rate_final} and \eqref{eq:app_cs_leakage_final} are particularly important, as they mean that if a multiplier is strictly positive, its corresponding constraint must be active, i.e., hold with equality. Together, these conditions fully characterize the properties of an optimal solution to the EIT-approximated problem.
	\end{proof}

\section*{Appendix C: Proof of Proposition \ref{prop:pu_invariance}}
\label{app:proof_pu_invariance}

\begin{proof}
	
	 Let us rearrange the Lagrangian $ \mathcal{L} $ for Problem \ref{prob:eit_slic_approximated} given in \eqref{eq:lagrangian_slic_final} by grouping terms involving the primal variables:
	\begin{equation} \label{eq:app_rearranged_lagrangian}
		\mathcal{L} = \sum_{u \in \mathcal{U}} P_U(u) \left[ \mathbf{L}_u^T (V - \rho I - \nu \Lambda) \mathbf{L}_u - \boldsymbol{\mu}^T \mathbf{L}_u - \kappa + \zeta(u) \right] - \sum_{u \in \mathcal{U}} \xi(u) (\mathbf{L}_u^T \mathbf{\sqrt{P_X}}) + \rho R' + \nu \Theta' + \kappa.
	\end{equation}
	The dual function is $g(\rho, \nu, \{\xi(u)\}, \boldsymbol{\mu}, \kappa, \{\zeta(u)\}) = \sup_{\{\mathbf{L}_u\}, \{P_U(u) \ge 0\}} \mathcal{L}$. For the supremum to be finite, the KKT stationarity conditions must hold at the optimum.	

	Let $(\{\mathbf{L}_u^*\}, \{P_U^*(u)\})$ be an optimal primal solution and $(\rho^*, \nu^*, \dots)$ be an optimal dual solution.
	The stationarity condition with respect to $\mathbf{L}_u$ for a given $u$ where $P_U^*(u) > 0$ is $ \nabla_{\mathbf{L}_u} \mathcal{L} = \mathbf{0} $. This leads to the condition $K(\rho^*, \nu^*) \mathbf{L}_u^* = \mathbf{0}$.
	Crucially, we also consider the stationarity condition with respect to $P_U(u)$. For any $u$  with $P_U^*(u) > 0$, the KKT conditions require that the derivative of the Lagrangian with respect to $P_U(u)$ is zero. From \eqref{eq:app_rearranged_lagrangian}, this condition is:
	\begin{equation} \label{eq:app_kkt_wrt_pu}
		\frac{\partial \mathcal{L}}{\partial P_U(u)} \bigg|_* = (\mathbf{L}_u^*)^T (V - \rho^* I - \nu^* \Lambda) \mathbf{L}_u^* - (\boldsymbol{\mu}^*)^T \mathbf{L}_u^* - \kappa^* + \zeta^*(u) = 0.
	\end{equation}
	By complementary slackness for the constraint $P_U(u) \ge 0$, we have $\zeta^*(u)P_U^*(u)=0$. Since we are considering $u$ for which $P_U^*(u) > 0$, it must be that $\zeta^*(u)=0$.
	Therefore, for any $u$ in the support of $P_U^*$, the term in the square brackets of the Lagrangian \eqref{eq:app_rearranged_lagrangian} must satisfy the following:
	\begin{equation*}
		(\mathbf{L}_u^*)^T K(\rho^*,\nu^*)^T \mathbf{L}_u^* - (\boldsymbol{\mu}^*)^T \mathbf{L}_u^* - \kappa^* = 0.
	\end{equation*}
	
	The dual problem is to minimize the dual function $g$ over all dual variables. The optimal value of the primal problem is equal to the optimal value of the dual problem, $g^* = \min g$.
	At the optimal point $(\{\mathbf{L}_u^*\}, \{P_U^*(u)\}, \rho^*, \nu^*, \dots)$, the value of the Lagrangian \eqref{eq:app_rearranged_lagrangian} is:
	\begin{equation*}
		\mathcal{L}^* = \sum_{u \in \mathrm{supp}(P_U^*)} P_U^*(u) \underbrace{\left[ \dots \right]}_{=0 \text{ by \eqref{eq:app_kkt_wrt_pu}}} - \sum_{u \in \mathcal{U}} \xi^*(u) \underbrace{(\mathbf{L}_u^*)^T \mathbf{\sqrt{P_X}}}_{=0 \text{ by primal feasibility}} + \rho^* R' + \nu^* \Theta' + \kappa^*.
	\end{equation*}
	Due to the KKT condition \eqref{eq:app_kkt_wrt_pu}, the entire first sum becomes zero, regardless of the specific values of $P_U^*(u) > 0$. The second sum is zero because the optimal $\mathbf{L}_u^*$ must be feasible and thus satisfy the orthogonality constraint. The term $\kappa^*$ can be shown to be part of the objective from the dual perspective. The optimal value of the dual function is thus found to be:
	\begin{equation*}
		g^* = \rho^* R' + \nu^* \Theta'.
	\end{equation*}

	This final expression for the optimal value depends only on the fixed budgets $R, \Theta$ and the optimal multipliers $(\rho^*, \nu^*)$. Therefore, the optimal value of the objective function is independent of the specific source distribution.
\end{proof}

\section*{Appendix D: Proof of Theorem \ref{thm:c_lic_final}}\label{app:proof_of_theorem_7}
\begin{proof}
	The approximate local secrecy capacity, $ C_{\mathrm{SIC}} $, is defined as the optimal value of the objective function of the original SIC Problem (Problem \ref{prob:slic_original}) under the EIT approximation. The objective is to maximize $I(U;Y)$. Using the EIT approximation from Lemma \ref{lem:approx_IUY_formal}, this is equivalent to maximizing the objective of the EIT-Approximated SIC Problem (Problem \ref{prob:eit_slic_approximated}) scaled by the factor $ \epsilon^2/2 $:
	\begin{equation} \label{eq:app_clic_def}
		C_{\mathrm{SIC}} = \frac{\epsilon^2}{2} \max \left( \sum_{u \in \mathcal{U}} P_U(u) \mathbf{L}_u^T V \mathbf{L}_u \right),
	\end{equation}
	where the maximization is over the feasible set defined in Problem \ref{prob:eit_slic_approximated}.
	
	Let $ (\{P_U^*(u)\}, \{\mathbf{L}_u^*\}) $ be an optimal primal solution, and $ (\rho^*, \nu^*) $ be the optimal dual variables for the rate and leakage constraints. From the simplified KKT stationarity condition derived in Theorem \ref{thm:kkt_conditions_Lu_final}, we have for any $u$ in the support of $P_U^*$ and for any non-trivial optimal perturbation $ \mathbf{L}_u^* \neq \mathbf{0} $:
	\begin{equation} \label{eq:app_kkt_for_clic_proof}
		(-V + \rho^*I + \nu^*\Lambda) \mathbf{L}_u^* = \mathbf{0} \quad \Rightarrow \quad V \mathbf{L}_u^* = \rho^* \mathbf{L}_u^* + \nu^* \Lambda \mathbf{L}_u^*.
	\end{equation}
	We substitute this expression for $ V \mathbf{L}_u^* $ into the term inside the maximization in \eqref{eq:app_clic_def}. For the optimal solution, the objective value is:
	\begin{align}
		\sum_{u \in \mathcal{U}} P_U^*(u) (\mathbf{L}_u^*)^T V \mathbf{L}_u^* &= \sum_{u \in \mathcal{U}} P_U^*(u) (\mathbf{L}_u^*)^T (\rho^* \mathbf{L}_u^* + \nu^* \Lambda \mathbf{L}_u^*) \nonumber \\
		&= \rho^* \sum_{u \in \mathcal{U}} P_U^*(u) (\mathbf{L}_u^*)^T I \mathbf{L}_u^* + \nu^* \sum_{u \in \mathcal{U}} P_U^*(u) (\mathbf{L}_u^*)^T \Lambda \mathbf{L}_u^*, \label{eq:app_clic_intermediate_sum}
	\end{align}
	where we have used the linearity of sums and expectations.
	
	We now invoke the KKT complementary slackness conditions for the inequality constraints of Problem \ref{prob:eit_slic_approximated}:
	\begin{align}
		\rho^* \left( \sum_{u \in \mathcal{U}} P_U^*(u) ||\mathbf{L}_u^*||^2 - R' \right) &= 0 \label{eq:app_cs_rate_clic_proof} \\
		\nu^* \left( \sum_{u \in \mathcal{U}} P_U^*(u) (\mathbf{L}_u^*)^T \Lambda \mathbf{L}_u^* - \Theta' \right) &= 0 \label{eq:app_cs_leakage_clic_proof}
	\end{align}
	where $ R' = 2R/\epsilon^2 $ and $ \Theta' = 2\Theta/\epsilon^2 $.
	
	There are three possibilities for the optimal multipliers $ \rho^* $ and $ \nu^* $.  First, if $ \rho^* > 0 $, then by \eqref{eq:app_cs_rate_clic_proof}, the rate constraint must be active: $ \sum_{u} P_U^*(u) ||\mathbf{L}_u^*||^2 = R' $. Second, if $ \nu^* > 0 $, then by \eqref{eq:app_cs_leakage_clic_proof}, the leakage constraint must be active: $ \sum_{u} P_U^*(u) (\mathbf{L}_u^*)^T \Lambda \mathbf{L}_u^* = \Theta' $. Finally, if a multiplier is zero (e.g., $ \rho^*=0 $), its corresponding sum term in \eqref{eq:app_clic_intermediate_sum} vanishes.

	In all cases, we can substitute the implications of complementary slackness into \eqref{eq:app_clic_intermediate_sum}. For instance, if both $ \rho^* > 0 $ and $ \nu^* > 0 $, both constraints are active. If $ \rho^* > 0 $ and $ \nu^* = 0 $, the rate constraint is active and the second term in \eqref{eq:app_clic_intermediate_sum} is zero. In every case, the expression for the sum becomes:
	\begin{equation} \label{eq:app_clic_sum_simplified}
		\sum_{u \in \mathcal{U}} P_U^*(u) (\mathbf{L}_u^*)^T V \mathbf{L}_u^* = \rho^* R' + \nu^* \Theta'.
	\end{equation}
	This is because if, for example, $ \rho^*=0 $, then $ \sum P_U^* ||\mathbf{L}_u^*||^2 \le R' $, but the term $ \rho^* \sum P_U^* ||\mathbf{L}_u^*||^2 $ in the sum is $ 0 \cdot \sum P_U^* ||\mathbf{L}_u^*||^2 = 0 $, which equals $ \rho^* R' = 0 \cdot R' = 0 $. The equality \eqref{eq:app_clic_sum_simplified} thus holds regardless of whether the constraints are active or not, due to the complementary slackness conditions.
	
	Finally, we substitute \eqref{eq:app_clic_sum_simplified} back into the definition of $ C_{\mathrm{SIC}} $ from \eqref{eq:app_clic_def}:
	\begin{align*}
		C_{\mathrm{SIC}} &= \frac{\epsilon^2}{2} (\rho^* R' + \nu^* \Theta') \\
		&= \frac{\epsilon^2}{2} \left[ \rho^* \left(\frac{2R}{\epsilon^2}\right) + \nu^* \left(\frac{2\Theta}{\epsilon^2}\right) \right] \\
		&= \rho^* R + \nu^* \Theta.
	\end{align*}
	This establishes the formula \eqref{eq:c_lic_formula_final}.
	
	The interpretation of $ \rho^* $ as a generalized eigenvalue follows directly from rearranging the KKT stationarity condition \eqref{eq:app_kkt_for_clic_proof} to $ (V - \nu^*\Lambda)\mathbf{L}_u^* = \rho^* \mathbf{L}_u^* $. This is the generalized eigenvalue equation for the pencil $ (V - \nu^*\Lambda, I) $, where $ \rho^* $ is the specific eigenvalue that, in conjunction with $ \nu^* $, satisfies all KKT conditions for optimality.
\end{proof}
				
\section*{Appendix E: Proof of Theorem \ref{thm:general_lp_for_multipliers}}
\label{app:proof_general_lp}

\begin{proof}
	The Lagrangian for the EIT-Approximated SIC problem is given in \eqref{eq:lagrangian_slic_final}. A necessary KKT condition for optimality is that the dual function must be bounded. This requires that the quadratic part of the Lagrangian in $\mathbf{L}_u$, remains bounded from above when maximized over $\mathbf{L}_u$. This is only possible if the matrix $K(\rho, \nu) \triangleq -V + \rho I + \nu \Lambda$ is positive semidefinite (PSD) on the subspace of valid perturbations, $\mathcal{S}^{\perp} = \{\mathbf{L} | \mathbf{L}^T \mathbf{\sqrt{P_X}} = 0\}$. This fundamental condition is expressed as:
	\begin{equation} \label{eq:app_k_psd_main_lmi}
		K(\rho, \nu) \succeq_{\mathcal{S}^{\perp}} \mathbf{0}.
	\end{equation}
	Since the entries of the matrix $K(\rho, \nu)$ are affine functions of the variables $\rho$ and $\nu$, the condition \eqref{eq:app_k_psd_main_lmi} is a Linear Matrix Inequality (LMI). The general problem of finding the optimal multipliers is to maximize the linear objective $Z_D = \rho R + \nu \Theta$ subject to the LMI \eqref{eq:app_k_psd_main_lmi} and the linear constraints \eqref{eq:lp_general_constr_obj_upper_final_v3} and \eqref{eq:lp_general_constr_nonneg_final_v3}. Such a problem is a Semidefinite Program (SDP).
	
	We now show that this SDP is equivalent to a standard LP. The PSD condition \eqref{eq:app_k_psd_main_lmi} is equivalent to:
	\begin{equation} \label{eq:app_k_psd_rearranged_proof}
		\rho (\mathbf{L}^T I \mathbf{L}) + \nu (\mathbf{L}^T \Lambda \mathbf{L}) \ge \mathbf{L}^T V \mathbf{L}, \quad \forall \mathbf{L} \in \mathcal{S}^{\perp}.
	\end{equation}
	We assume $\Lambda$ is positive definite when restricted to the subspace $\mathcal{S}^{\perp}$, which is a reasonable condition implying that any non-trivial perturbation incurs some leakage to Eve. Let $\Lambda|_{\mathcal{S}^{\perp}}$ and $V|_{\mathcal{S}^{\perp}}$ be the matrices restricted to operate on vectors in $\mathcal{S}^{\perp}$. Since $\Lambda|_{\mathcal{S}^{\perp}}$ is symmetric and positive definite, its symmetric square root $\Lambda^{1/2}|_{\mathcal{S}^{\perp}}$ and its inverse $\Lambda^{-1/2}|_{\mathcal{S}^{\perp}}$ exist and are also positive definite.
	
	The condition \eqref{eq:app_k_psd_rearranged_proof} can be analyzed using a congruence transformation that simplifies the pencil $(V, \Lambda)$, \cite{horn2013matrix, Golub}. Let $\mathbf{L} \in \mathcal{S}^{\perp}$ and define a new vector $\mathbf{z}$ via the invertible mapping $\mathbf{L} = \Lambda^{-1/2}|_{\mathcal{S}^{\perp}} \mathbf{z}$. Substituting this into \eqref{eq:app_k_psd_rearranged_proof}:
	\begin{equation*}
		\rho (\mathbf{z}^T \Lambda^{-1/2} I \Lambda^{-1/2} \mathbf{z}) + \nu (\mathbf{z}^T \Lambda^{-1/2} \Lambda \Lambda^{-1/2} \mathbf{z}) \ge \mathbf{z}^T \Lambda^{-1/2} V \Lambda^{-1/2} \mathbf{z},
	\end{equation*}
	where for brevity we omit the explicit restriction to $\mathcal{S}^{\perp}$. This simplifies to:
	\begin{equation} \label{eq:app_transformed_psd_condition_proof}
		\rho (\mathbf{z}^T \Lambda^{-1} \mathbf{z}) + \nu ||\mathbf{z}||^2 \ge \mathbf{z}^T (\Lambda^{-1/2} V \Lambda^{-1/2}) \mathbf{z}, \quad \forall \mathbf{z} \in \text{range}(\Lambda^{1/2}|_{\mathcal{S}^{\perp}}).
	\end{equation}
	Let $\tilde{V} \triangleq \Lambda^{-1/2} V \Lambda^{-1/2}$. The matrix $\tilde{V}$ is symmetric, and its eigenvalues are precisely the generalized eigenvalues of the pencil $(V, \Lambda)$ \cite{Golub}. Let these generalized eigenvalues be $\{d_j\}_{j \in \mathcal{J}_{\perp}}$, and let the corresponding orthonormal eigenvectors of $\tilde{V}$ be $\{\mathbf{q}_j\}_{j \in \mathcal{J}_{\perp}}$. These $\{\mathbf{q}_j\}$ form an orthonormal basis for the transformed space.
	
	Since the inequality \eqref{eq:app_transformed_psd_condition_proof} must hold for any vector $\mathbf{z}$ in this space, it is sufficient to ensure it holds for the basis vectors $\mathbf{q}_j$. For each $j \in \mathcal{J}_{\perp}$, we substitute $\mathbf{z} = \mathbf{q}_j$:
	\begin{equation*}
		\rho (\mathbf{q}_j^T \Lambda^{-1} \mathbf{q}_j) + \nu (\mathbf{q}_j^T I \mathbf{q}_j) \ge \mathbf{q}_j^T \tilde{V} \mathbf{q}_j.
	\end{equation*}
	By the properties of orthonormal eigenvectors, $\mathbf{q}_j^T I \mathbf{q}_j = ||\mathbf{q}_j||^2 = 1$, and $\mathbf{q}_j^T \tilde{V} \mathbf{q}_j = d_j$. The term $\mathbf{q}_j^T \Lambda^{-1} \mathbf{q}_j$ is the Rayleigh quotient\cite{Croot} of $\Lambda^{-1}$ with respect to $\mathbf{q}_j$. Let us denote $\lambda_j^{-1} \triangleq \mathbf{q}_j^T \Lambda^{-1} \mathbf{q}_j$. The value $\lambda_j$ can be interpreted as the eigenvalue of $\Lambda$ with respect to the corresponding generalized eigenvector of $(V, \Lambda)$ in the original basis. The inequality for each basis vector becomes:
	\begin{equation*}
		\rho \lambda_j^{-1} + \nu \ge d_j, \quad \forall j \in \mathcal{J}_{\perp}.
	\end{equation*}
	Multiplying by $\lambda_j$ (which is positive since $\Lambda$ is positive definite on $\mathcal{S}^{\perp}$) yields the final set of linear inequalities:
	\begin{equation} \label{eq:app_final_lp_constraints_deriv_proof}
		\rho + \nu \lambda_j \ge d_j \lambda_j, \quad \forall j \in \mathcal{J}_{\perp}.
	\end{equation}
	This establishes that the infinite set of constraints implicitly defined by the LMI in \eqref{eq:app_k_psd_main_lmi} is equivalent to the finite set of $M=|\mathcal{X}|-1$ linear inequalities in \eqref{eq:app_final_lp_constraints_deriv_proof}.
	
	With the objective function $\rho R + \nu \Theta$ being linear and all constraints now shown to be equivalent to a finite set of linear inequalities, the optimization problem for the multipliers is an LP. The bounding constraint \eqref{eq:lp_general_constr_obj_upper_final_v3} ensures a finite solution exists and that completes the proof.
\end{proof}
		
\section*{Appendix F: Proof of Lemma \ref{thm:lp_feasibility_condition_final_v2}}	\label{app:proof_thm_feasibility}		
	\begin{proof}
	A solution $(\rho, \nu)$ is feasible if it satisfies the constraints:
	\begin{align}
		\rho + \nu \lambda_j &\ge d_j \lambda_j, \quad \forall j \in \mathcal{J}_{\perp} \label{eq:app_feas_constr1_v2} \\
		\rho R + \nu \Theta &\le C_{\mathrm{max}} \label{eq:app_feas_constr2_v2} \\
		\rho \ge 0, \quad \nu &\ge 0 \label{eq:app_feas_constr3_v2}
	\end{align}
	Since the LP objective is bounded by \eqref{eq:app_feas_constr2_v2}, a solution fails to exist if and only if the feasible set is empty. We use Farkas' Lemma \cite{Boyd} to establish a sufficient condition for this set to be non-empty.
	
	Let us rewrite the system of inequalities in the standard form $A\mathbf{x} \le \mathbf{b}$, where $\mathbf{x} = [\nu, \rho]^T \ge \mathbf{0}$. The constraints \eqref{eq:app_feas_constr1_v2} are equivalent to $d_j \lambda_j - \lambda_j \nu - \rho \le 0$ for each $j \in \mathcal{J}_{\perp}$. Let $M = |\mathcal{J}_{\perp}|$. The system is:
	\begin{equation*}
		\begin{pmatrix}
			-\lambda_1 & -1 \\
			\vdots & \vdots \\
			-\lambda_M & -1 \\
			\Theta & R
		\end{pmatrix}
		\begin{pmatrix} \nu \\ \rho \end{pmatrix}
		\le
		\begin{pmatrix}
			-d_1 \lambda_1 \\
			-d_2 \lambda_2 \\
			\vdots \\
			-d_M \lambda_M \\
			C_{\mathrm{max}}
		\end{pmatrix}
		, \quad \text{and} \quad
		\begin{pmatrix} \nu \\ \rho \end{pmatrix} \ge \mathbf{0}.
	\end{equation*}
	
	Farkas' Lemma states that a solution $\mathbf{x} \ge \mathbf{0}$ to $A\mathbf{x} \le \mathbf{b}$ exists if and only if the alternative system, which seeks a vector $\mathbf{y} \ge \mathbf{0}$ such that $A^T\mathbf{y} \ge \mathbf{0}$ and $\mathbf{b}^T\mathbf{y} < 0$, has no solution. We will show that under the theorem's condition, this alternative system indeed has no solution.
	
	Let $\mathbf{y} = [y_1, \dots, y_M, y_{M+1}]^T \ge \mathbf{0}$. The condition $A^T\mathbf{y} \ge \mathbf{0}$ translates to:
	\begin{align}
		-\sum_{j=1}^{M} y_j + R y_{M+1} &\ge 0 \label{eq:app_farkas_concise_1} \\
		-\sum_{j=1}^{M} \lambda_j y_j + \Theta y_{M+1} &\ge 0 \label{eq:app_farkas_concise_2}
	\end{align}
	And the condition $\mathbf{b}^T\mathbf{y} < 0$ is:
	\begin{equation} \label{eq:app_farkas_concise_3}
		-\sum_{j=1}^{M} d_j \lambda_j y_j + C_{\mathrm{max}} y_{M+1} < 0
	\end{equation}
	
	Assume that a non-zero solution $\mathbf{y} \ge \mathbf{0}$ to this alternative system exists. If $y_{M+1}=0$, then from \eqref{eq:app_farkas_concise_1}, $-\sum y_j \ge 0$, which for $y_j \ge 0$ implies all $y_j=0$. This is the trivial solution $\mathbf{y}=\mathbf{0}$, which violates $\mathbf{b}^T\mathbf{y} < 0$. Thus, any non-trivial solution must have $y_{M+1} > 0$.
	
	From \eqref{eq:app_farkas_concise_2}, we have $\Theta y_{M+1} \ge \sum_{j=1}^{M} \lambda_j y_j$.
	Setting $C_{\mathrm{max}} = R \cdot \lambda_{\max}^{\perp}(V)=R \lambda_V^*$, and use it in \eqref{eq:app_farkas_concise_3} yields:
	\begin{align*}
		\mathbf{b}^T\mathbf{y} &= R \lambda_V^* y_{M+1} - \sum_{j=1}^{M} d_j \lambda_j y_j \\
		&\ge \frac{R}{\Theta} \lambda_V^* (\Theta y_{M+1}) - \sum_{j=1}^{M} d_j \lambda_j y_j \\
		&\ge \frac{R}{\Theta} \lambda_V^* \left( \sum_{j=1}^{M} \lambda_j y_j \right) - \sum_{j=1}^{M} d_j \lambda_j y_j \\
		&= \sum_{j=1}^{M} \lambda_j y_j \left( \frac{R}{\Theta} \lambda_V^* - d_j \right).
	\end{align*}
	The condition given in Theorem \ref{thm:lp_feasibility_condition_final_v2} is $\lambda_V^* > \frac{\Theta}{R} d_{\max}^{\perp}(V, \Lambda)$, which is equivalent to $\frac{R}{\Theta} \lambda_V^* > \max_{j}\{d_j\}$. This implies that for every $j \in \mathcal{J}_{\perp}$, the term $( \frac{R}{\Theta} \lambda_V^* - d_j )$ is strictly positive.
	Since $\lambda_j > 0$ and at least one $y_j$ must be positive for a non-trivial solution, from \eqref{eq:app_farkas_concise_1} or \eqref{eq:app_farkas_concise_2} with $y_{M+1}>0$, every term in the sum is non-negative, and at least one is strictly positive. Therefore:
	\begin{equation*}
		\sum_{j=1}^{M} \lambda_j y_j \left( \frac{R}{\Theta} \lambda_V^* - d_j \right) > 0.
	\end{equation*}
	This implies $\mathbf{b}^T\mathbf{y} > 0$. This contradicts the Farkas condition \eqref{eq:app_farkas_concise_3}, which requires $\mathbf{b}^T\mathbf{y} < 0$. Since the assumption of a non-zero solution $\mathbf{y} \ge \mathbf{0}$ for the alternative system leads to a contradiction, no such solution exists. By Farkas' Lemma, this guarantees that the original system for $(\rho, \nu)$ has a solution, thus the feasible set of the LP is non-empty and that completes the proof.
	\end{proof}

\section*{Appendix G: Proof of Theorem \ref{thm_c_lic_regimes}}	\label{app:proof_thm_regimes}

\begin{proof}
The fundamental theorem of Linear Programming \cite{BertsimasTsitsiklisLP} states that if an optimal solution exists, it must be achieved at one or more vertices of the feasible region. We analyze the objective function $Z_D(\rho, \nu) = \rho R + \nu \Theta$ at each type of vertex to derive the overall solution $C_{\mathrm{SIC}} = \max \{C_{R}, C_{\Theta}, C_{\mathrm{inter}}\}$.

First we examine the rate-dominant regime. This case corresponds to optimal vertices on the $\rho$-axis, where $\nu^* = 0$ and $\rho^* \ge 0$. With $\nu=0$, the LP constraints \eqref{eq:lp_general_constr_eig_final_v3}-\eqref{eq:lp_general_constr_nonneg_final_v3} reduce to finding $\rho$ such that:
\begin{align*}
	\text{(a)} \quad & \rho \ge (d_V)_j, \quad \forall j \in \mathcal{J}_{\perp} \quad (\text{assuming } (d_I)_j=1) \\
	\text{(b)} \quad & \rho R \le C_{\mathrm{max}} \\
	\text{(c)} \quad & \rho \ge 0.
\end{align*}
Constraint (a) implies that any feasible $\rho$ must satisfy $\rho \ge \max_{j \in \mathcal{J}_{\perp}} (d_V)_j = \lambda_{\max}^{\perp}(V)$.
Constraint (b) implies $\rho \le C_{\mathrm{max}}/R$.
Thus, the feasible segment for $\rho$ on this axis is $[\lambda_{\max}^{\perp}(V), C_{\mathrm{max}}/R]$. For this segment to be non-empty, we require $\lambda_{\max}^{\perp}(V) \le C_{\mathrm{max}}/R$. The objective is to maximize $\rho R$. This is achieved at the largest feasible $\rho$, i.e., $\rho^* = C_{\mathrm{max}}/R$.
However, a vertex solution on this axis must be defined by at least two active constraints. The active constraints are $\nu=0$ and one of the eigenmode constraints $\rho = (d_V)_k$ or the bounding constraint $\rho R = C_{\mathrm{max}}$. To maximize $\rho R$, the optimal vertex on this axis will be at $\rho^* = \min(\lambda_{\max}^{\perp}(V), C_{\mathrm{max}}/R)$, assuming $\lambda_{\max}^{\perp}(V)$ represents the most binding of the $(d_V)_j$ constraints that a feasible solution must satisfy while aiming to maximize $\rho$.
A more direct interpretation is that the best possible candidate capacity from this regime is when $\rho$ takes its maximum feasible value on the axis, which leads to $C = \rho^* R$. The LP will choose this type of vertex if the objective slope $R/\Theta$ is sufficiently high. The maximum value is thus:
\begin{equation} \label{eq:app_c_r_dom_final}
	C_{R} = \min(\lambda_{\max}^{\perp}(V)R, C_{\mathrm{max}}).
\end{equation}

The case of leakage-dominant regime corresponds to optimal vertices on the $\nu$-axis, where $\rho^* = 0$ and $\nu^* \ge 0$. With $\rho=0$, the LP constraints become:
\begin{align*}
	\text{(a)} \quad & \nu (d_{\Lambda})_j \ge (d_V)_j, \quad \forall j \in \mathcal{J}_{\perp} \\
	\text{(b)} \quad & \nu \Theta \le C_{\mathrm{max}} \\
	\text{(c)} \quad & \nu \ge 0.
\end{align*}
Constraint (a) requires $\nu \ge (d_V)_j / (d_{\Lambda})_j$ for all $j$ where $(d_{\Lambda})_j > 0$. Thus, any feasible $\nu$ must satisfy $\nu \ge \max_{j \in \mathcal{J}_{\perp}, (d_{\Lambda})_j > 0} \{ \frac{(d_V)_j}{(d_{\Lambda})_j} \} = d_{\max}^{\perp}(V, \Lambda)$.
Constraint (b) implies $\nu \le C_{\mathrm{max}}/\Theta$.
The feasible segment for $\nu$ on this axis is $[d_{\max}^{\perp}(V, \Lambda), C_{\mathrm{max}}/\Theta]$. The objective is to maximize $\nu \Theta$.
Following the same logic as the rate-dominant regime, the maximum value achievable on this axis is:
\begin{equation} \label{eq:app_c_theta_dom_final}
	C_{\Theta} = \min(d_{\max}^{\perp}(V, \Lambda)\Theta, C_{\mathrm{max}}).
\end{equation}

Finally, the intermediate regime case corresponds to optimal vertices in the interior of the first quadrant, where $\rho^* > 0$ and $\nu^* > 0$. Such a vertex must be formed by the intersection of at least two linearly independent, active constraints.

	Consider a vertex $(\rho_{jk}^*, \nu_{jk}^*)$ formed by the intersection of two distinct eigenmode constraint boundaries for modes $j,k \in \mathcal{J}_{\perp}$ (assuming $(d_I)_j=1$):
	\begin{align*}
		\rho + \nu (d_{\Lambda})_j &= (d_V)_j \\
		\rho + \nu (d_{\Lambda})_k &= (d_V)_k.
	\end{align*}
	Assuming $(d_{\Lambda})_j \neq (d_{\Lambda})_k$, solving this $2 \times 2$ system yields the unique intersection point:
	\begin{align}
		\nu_{jk}^* &= \frac{(d_V)_j - (d_V)_k}{(d_{\Lambda})_j - (d_{\Lambda})_k}, \label{eq:app_nu_jk_formula} \\
		\rho_{jk}^* &= (d_V)_j - \nu_{jk}^* (d_{\Lambda})_j. \label{eq:app_rho_jk_formula}
	\end{align}
	
	 Consider a vertex $(\rho_{j,b}^*, \nu_{j,b}^*)$ formed by the intersection of an eigenmode constraint for mode $j$ and the overall bounding constraint:
	\begin{align*}
		\rho + \nu (d_{\Lambda})_j &= (d_V)_j \\
		\rho R + \nu \Theta &= C_{\mathrm{max}}.
	\end{align*}
	Solving this system yields another set of candidate multipliers.

The set of all feasible interior vertices, $\mathcal{V}_{\mathrm{inter}}$, is the collection of all such intersection points that satisfy the feasibility conditions outlined in \eqref{eq:valid_pair_conditions}.
The candidate capacity from this regime, $C_{\mathrm{inter}}$, is the maximum objective value $\rho R + \nu \Theta$ evaluated over all vertices in this set:
\begin{equation} \label{eq:app_c_inter_final}
	C_{\mathrm{inter}} = \max_{(\rho, \nu) \in \mathcal{V}_{\mathrm{inter}}} (\rho R + \nu \Theta).
\end{equation}
If no such feasible interior vertex exists, $\mathcal{V}_{\mathrm{inter}}$ is empty and $C_{\mathrm{inter}}$ is taken as $-\infty$.

The overall optimal value of the LP, $C_{\mathrm{SIC}}$, must be the maximum value achieved at any feasible vertex. The set of all feasible vertices is the union of the feasible axial vertices and the feasible interior vertices. Therefore,
\begin{equation*}
	C_{\mathrm{SIC}} = \max \{C_{R}, C_{\Theta}, C_{\mathrm{inter}}\}.
\end{equation*}
This completes the proof.
\end{proof}

\section*{Appendix H: Proof of Theorem \ref{thm:commuting_matrices_unified}}	\label{app:proof_thm_commuting}
	
	\begin{proof}	
		The first part of the theorem states that for commuting matrices, the general linear constraints of the LP in Theorem \ref{thm:general_lp_for_multipliers}, given by $\rho + \nu \lambda_j \ge d_j \lambda_j$ for each mode $j \in \mathcal{J}_{\perp}$, simplify to $\rho + \nu (d_{\Lambda})_j \ge (d_V)_j$.
		
		If $V$ and $\Lambda$ commute and are symmetric, they are simultaneously diagonalizable by a single orthogonal matrix $Q$ whose columns $\{\mathbf{e}_j\}$ are common orthonormal eigenvectors \cite{horn2013matrix}. Let $\{\mathbf{e}_j\}_{j \in \mathcal{J}_{\perp}}$ be the subset of these eigenvectors that form an orthonormal basis for the perturbation subspace $\mathcal{S}^{\perp}$.
		For any such common eigenvector $\mathbf{e}_j$, we have:
		\begin{align*}
			V \mathbf{e}_j &= (d_V)_j \mathbf{e}_j \\
			\Lambda \mathbf{e}_j &= (d_{\Lambda})_j \mathbf{e}_j,
		\end{align*}
		where $(d_V)_j$ and $(d_{\Lambda})_j$ are the standard eigenvalues of $V$ and $\Lambda$, respectively.
		
		By definition, the generalized eigenvalues $d_j$ of the pencil $(V, \Lambda)$ corresponding to a generalized eigenvector $\mathbf{u}_j$ satisfy $V\mathbf{u}_j = d_j \Lambda \mathbf{u}_j$.
		If we set the generalized eigenvector $\mathbf{u}_j$ to be the common eigenvector $\mathbf{e}_j$, we can write:
		\begin{equation*}
			V \mathbf{e}_j = d_j \Lambda \mathbf{e}_j.
		\end{equation*}
		Substituting the standard eigenvalue relations:
		\begin{equation*}
			(d_V)_j \mathbf{e}_j = d_j (d_{\Lambda})_j \mathbf{e}_j.
		\end{equation*}
		Since $\mathbf{e}_j \neq \mathbf{0}$, we can equate the scalar coefficients:
		\begin{equation*}
			(d_V)_j = d_j (d_{\Lambda})_j.
		\end{equation*}
		If $(d_{\Lambda})_j > 0$, the generalized eigenvalue is simply the ratio of the standard eigenvalues: $d_j = (d_V)_j / (d_{\Lambda})_j$.
		The parameter $\lambda_j$ in the general LP constraint \eqref{eq:lp_general_constr_eig_final_v3} is defined as the eigenvalue of $\Lambda$ corresponding to the $j$-th mode, which in the commuting case is simply $(d_{\Lambda})_j$.
		
		Substituting $d_j = (d_V)_j / (d_{\Lambda})_j$ and $\lambda_j = (d_{\Lambda})_j$ into the general LP constraint $\rho + \nu \lambda_j \ge d_j \lambda_j$:
		\begin{equation*}
			\rho + \nu (d_{\Lambda})_j \ge \left( \frac{(d_V)_j}{(d_{\Lambda})_j} \right) (d_{\Lambda})_j.
		\end{equation*}
		This directly simplifies to:
		\begin{equation*}
			\rho + \nu (d_{\Lambda})_j \ge (d_V)_j,
		\end{equation*}
		which is the simplified linear constraint \eqref{eq:lp_commuting_constr}. This holds for all $j \in \mathcal{J}_{\perp}$ where $(d_{\Lambda})_j > 0$. If $(d_{\Lambda})_j = 0$, the general PSD condition $\rho + \nu(0) \ge (d_V)_j$ must still hold. The simplified form is thus valid. This completes the proof of the first part.
		
		The second part of the theorem states that an optimal perturbation $\mathbf{L}_u^*$ can only have components along common eigenmodes $\mathbf{e}_j$ for which the condition $(d_V)_j = \rho^* + \nu^*(d_{\Lambda})_j$ is met.
		
		This follows directly from the KKT stationarity condition from Theorem \ref{thm:kkt_conditions_Lu_final}:
		\begin{equation} \label{eq:app_kkt_cond_commuting}
			(-V + \rho^*I + \nu^*\Lambda) \mathbf{L}_u^* = \mathbf{0}.
		\end{equation}
		Any optimal perturbation vector $\mathbf{L}_u^*$ must lie in the subspace $\mathcal{S}^{\perp}$, and can thus be expressed in the common eigenbasis as:
		\begin{equation*}
			\mathbf{L}_u^* = \sum_{j \in \mathcal{J}_{\perp}} (\tilde{L}_u^*)_j \mathbf{e}_j,
		\end{equation*}
		where $(\tilde{L}_u^*)_j = (\mathbf{L}_u^*)^T \mathbf{e}_j$ are the coefficients.
		Substituting this expansion into \eqref{eq:app_kkt_cond_commuting}:
		\begin{equation*}
			(-V + \rho^*I + \nu^*\Lambda) \left( \sum_{j \in \mathcal{J}_{\perp}} (\tilde{L}_u^*)_j \mathbf{e}_j \right) = \mathbf{0}.
		\end{equation*}
		By linearity, and since $\mathbf{e}_j$ are eigenvectors of $V, I,$ and $\Lambda$:
		\begin{align*}
			\sum_{j \in \mathcal{J}_{\perp}} (\tilde{L}_u^*)_j (-V\mathbf{e}_j + \rho^*I\mathbf{e}_j + \nu^*\Lambda\mathbf{e}_j) &= \mathbf{0} \\
			\sum_{j \in \mathcal{J}_{\perp}} (\tilde{L}_u^*)_j (-(d_V)_j\mathbf{e}_j + \rho^*\mathbf{e}_j + \nu^*(d_{\Lambda})_j\mathbf{e}_j) &= \mathbf{0} \\
			\sum_{j \in \mathcal{J}_{\perp}} (\tilde{L}_u^*)_j (-(d_V)_j + \rho^* + \nu^*(d_{\Lambda})_j) \mathbf{e}_j &= \mathbf{0}.
		\end{align*}
		Since the eigenvectors $\{\mathbf{e}_j\}_{j \in \mathcal{J}_{\perp}}$ form an orthogonal and thus linearly independent set, for the above sum to be the zero vector, the scalar coefficient of each eigenvector $\mathbf{e}_j$ must be zero. Therefore, for every $j \in \mathcal{J}_{\perp}$:
		\begin{equation*}
			(\tilde{L}_u^*)_j \cdot (-(d_V)_j + \rho^* + \nu^*(d_{\Lambda})_j) = 0.
		\end{equation*}
		This implies that if a particular eigenmode $j$ is active in the optimal perturbation $\mathbf{L}_u^*$, i.e.,  $(\tilde{L}_u^*)_j \neq 0$, then its corresponding eigenvalues must satisfy:
		\begin{equation*}
			-(d_V)_j + \rho^* + \nu^*(d_{\Lambda})_j = 0,
		\end{equation*}
		which rearranges to the condition \eqref{eq:commuting_eigenvalue_condition}. Conversely, if an eigenmode's eigenvalues do not satisfy this equality, its corresponding coefficient $(\tilde{L}_u^*)_j$ must be zero. This completes the proof.
		\begin{figure}[H]
			\centering
			\includegraphics[width=0.65\textwidth]{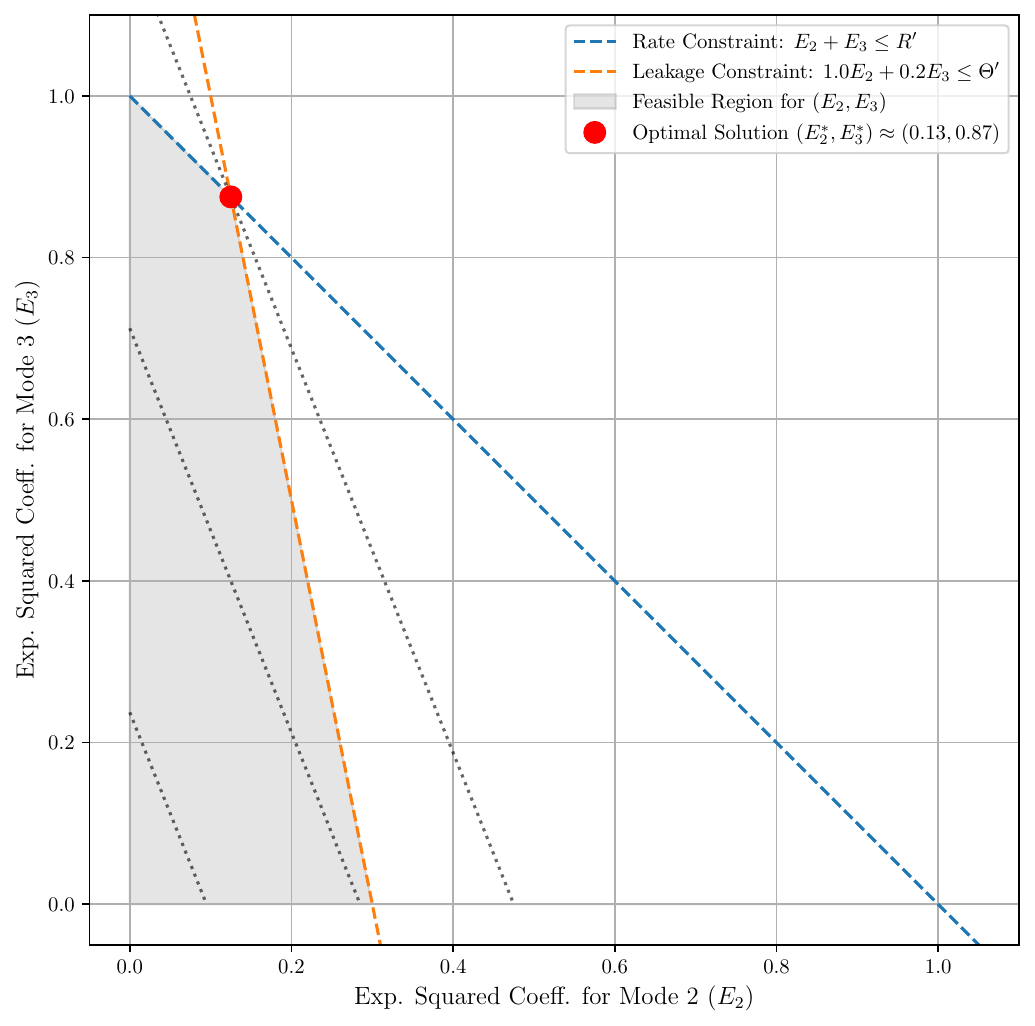}
			\caption{The axes represent the total expected squared magnitudes ($E_2, E_3$) of the perturbation coefficients along each of the two common eigenmodes. The optimal solution (red circle) is found at a vertex of the feasible region defined by $R'$ and $\Theta'$.}
			\label{fig:Energy_Allocation_LP_final_sec6_full}
		\end{figure}
		
		Figure \ref{fig:Energy_Allocation_LP_final_sec6_full} conceptually illustrates the solution to this LP for a system with a 2-dimensional perturbation subspace, $|\mathcal{J}_{\perp}|=2$. For the example shown, with parameters $R'=1.0, \Theta'=0.3$ and eigenvalues $(d_V)_2=5.0, (d_{\Lambda})_2=1.0$ for Mode 2, and $(d_V)_3=2.0, (d_{\Lambda})_3=0.2$ for Mode 3, the optimal allocation of squared coefficients occurs at the vertex $(E_2^* \approx 0.13, E_3^* \approx 0.87)$. This demonstrates the principle of selective activation: the system allocates coefficients to the most profitable eigenmodes to maximize utility while satisfying the overall rate and leakage budgets.
	\end{proof}	
	
		\section*{Appendix I: BSWC Setup and EIT Parameter Derivation}	\label{app:bswc_details}	
		Let Alice's input $ X $, Bob's output $ Y $, and Eve's output $ Z $ be binary, $ \mathcal{X}=\mathcal{Y}=\mathcal{Z}=\{0,1\} $. Bob's channel is a BSC with crossover probability $ p_{\mathrm{bob}} =P(Y \neq X | X)$. Eve's channel is a BSC with crossover probability $ q_{\mathrm{eve}}=P(Z \neq X | X) $. The channel transition probability matrices are:
		\begin{equation*}
			P_{Y|X} =\begin{pmatrix} 1-p_{\mathrm{bob}} & p_{\mathrm{bob}} \\ p_{\mathrm{bob}} & 1-p_{\mathrm{bob}} \end{pmatrix}, \quad
			P_{Z|X} = \begin{pmatrix} 1-q_{\mathrm{eve}} & q_{\mathrm{eve}} \\ q_{\mathrm{eve}} & 1-q_{\mathrm{eve}} \end{pmatrix}.
		\end{equation*}
		We assume a uniform input reference distribution $ P_X(x) = 0.5 $ for $ x \in \{0,1\} $. Thus, $ P_X = [0.5, 0.5]^T $.
		This implies the reference output distributions are also uniform: $ P_Y = P_{Y|X} P_X = [0.5, 0.5]^T $ and $ P_Z = P_{Z|X} P_X = [0.5, 0.5]^T $.
		The EIT perturbation subspace $\mathcal{S}^{\perp}$, orthogonal to $\mathbf{\sqrt{P_X}} = [1/\sqrt{2}, 1/\sqrt{2}]^T$, is one-dimensional and spanned by the normalized vector $\boldsymbol{\tau} = [1/\sqrt{2}, -1/\sqrt{2}]^T$. Thus, any valid EIT perturbation vector must be of the form $\mathbf{L}_u = s_u \boldsymbol{\tau}$ for some scalar $s_u$. 
		
		For this uniform case, DTMs simplify to $B_{Y|X} = P_{Y|X}$ and $B_{Z|X} = P_{Z|X}$. The EIT matrices $V = (P_{Y|X})^T P_{Y|X}$ and $\Lambda = (P_{Z|X})^T P_{Z|X}$ are symmetric $2 \times 2$ Toeplitz matrices, which share the eigenvectors $\mathbf{\sqrt{P_X}}$ and $\boldsymbol{\tau}$. The eigenvalue corresponding to $\mathbf{\sqrt{P_X}}$ is $1$ for both $V$ and $\Lambda$, while the eigenvalues corresponding to the perturbation direction $\boldsymbol{\tau}$ are found by calculating the Rayleigh quotient:
		\begin{align}
			\lambda_V \triangleq \boldsymbol{\tau}^T V \boldsymbol{\tau} &= ((1-p_{\mathrm{bob}}) - p_{\mathrm{bob}})^2 = (1-2p_{\mathrm{bob}})^2 \label{eq:app_lambda_v_bswc_final} \\
			\lambda_{\Lambda} \triangleq \boldsymbol{\tau}^T \Lambda \boldsymbol{\tau} &= ((1-q_{\mathrm{eve}}) - q_{\mathrm{eve}})^2 = (1-2q_{\mathrm{eve}})^2 \label{eq:app_lambda_lambda_bswc_final}
		\end{align}
		These are the only eigenvalues relevant for the EIT perturbation subspace $\mathcal{S}^{\perp}$ (i.e., $(d_V)_2 = \lambda_V, (d_{\Lambda})_2 = \lambda_{\Lambda}, (d_I)_2=1$).
		
		We derive the BSWC's piecewise $C_{\mathrm{SIC}}$ formula by applying Theorem \ref{thm_c_lic_regimes} to this single-mode system. The LP for finding $(\rho^*, \nu^*)$ (from Theorem \ref{thm:general_lp_for_multipliers}) has only one eigenmode constraint from \eqref{eq:lp_commuting_constr}:
		\begin{equation*}
			\rho + \nu \lambda_{\Lambda} \ge \lambda_V.
		\end{equation*}
		The objective is to maximize $\rho R + \nu \Theta$ subject to this and $\rho \ge 0, \nu \ge 0$ and a non-binding $C_{\mathrm{max}}$.
		
		For the rate-dominant solution, the constraints become $\rho \ge \lambda_V$ and $\rho \ge 0$. To maximize $\rho R$, we need the smallest feasible $\rho$, so we set the constraint active: $\rho^* = \lambda_V$, which yields $C_{\mathrm{SIC}} = \lambda_V R$. This vertex $(\rho^*=\lambda_V, \nu^*=0)$ is the optimum for the LP if the objective slope is sufficiently high, which occurs when $R/\Theta$ is small enough that $\lambda_V R \ge (\lambda_V/\lambda_{\Lambda})\Theta$. This simplifies to $\lambda_{\Lambda} \le \Theta/R$.
		
		For the leakage-dominant solution, the constraints become $\nu \lambda_{\Lambda} \ge \lambda_V$ and $\nu \ge 0$. To maximize $\nu \Theta$, we need the smallest feasible $\nu$, so we set the constraint active: $\nu^* = \lambda_V/\lambda_{\Lambda}$, assuming $\lambda_{\Lambda} > 0$, which yields $C_{\mathrm{SIC}} = (\lambda_V/\lambda_{\Lambda})\Theta$. This vertex is optimal if $\lambda_{\Lambda} > \Theta/R$.
		
		For the special case when $q_{\mathrm{eve}}=0.5$, the constraint becomes $\rho \ge \lambda_V$ and the leakage constraint is always satisfied for $\Theta \ge 0$. Thus, the system is always rate-dominant, and $C_{\mathrm{SIC}} = \lambda_V R$.
		
		Combining these cases yields the piecewise formula for $C_{\mathrm{SIC}}$ for the BSWC:
		\begin{equation} \label{eq:clic_bswc_formula_appendix_final}
			C_{\mathrm{SIC}}(R, \Theta) =
			\begin{cases}
				(1-2p_{\mathrm{bob}})^2 R & \text{if } (1-2q_{\mathrm{eve}})^2 \le \Theta/R \text{ or } q_{\mathrm{eve}} = 0.5 \\
				\frac{(1-2p_{\mathrm{bob}})^2}{(1-2q_{\mathrm{eve}})^2} \Theta & \text{if } (1-2q_{\mathrm{eve}})^2 > \Theta/R \text{ and } q_{\mathrm{\mathrm{eve}}} \neq 0.5
			\end{cases}
		\end{equation}

		\section*{Appendix J: Proof of Theorem \ref{thm:char_eta_loc_sec_final} }
		\label{app:proofs_contraction_coeffs}
		
		
%
		
		\begin{proof}
			The expression $ R(\mathbf{L}) = (\mathbf{L}^T V \mathbf{L}) / (\mathbf{L}^T \Lambda \mathbf{L}) $ is the generalized Rayleigh quotient for the pair of symmetric matrices $(V, \Lambda)$. The matrices $ V = B_{Y|X}^T B_{Y|X} $ and $ \Lambda = B_{Z|X}^T B_{Z|X} $ are positive semidefinite (PSD). The constraint $ \mathbf{L}^T \Lambda \mathbf{L} > 0 $ ensures a well-defined, positive denominator.
			
			According to the Courant-Fischer theorem for generalized eigenvalue problems \cite{Golub}, the supremum of the generalized Rayleigh quotient $R(\mathbf{L})$ over a subspace is equal to the largest generalized eigenvalue of the matrix pencil restricted to that subspace. The generalized eigenvalues, $d_j$, are the scalars that satisfy the generalized eigenvalue problem $V\mathbf{u}_j = d_j \Lambda \mathbf{u}_j$ for the corresponding generalized eigenvectors $\mathbf{u}_j$.
			
			Our optimization is restricted to the subspace $\mathcal{S}^{\perp}$. Therefore, the supremum of $R(\mathbf{L})$ for $\mathbf{L} \in \mathcal{S}^{\perp}$ is precisely the largest generalized eigenvalue of the pencil $(V, \Lambda)$ restricted to $\mathcal{S}^{\perp}$, which we denote by $ d_{\max}^{\perp}(V, \Lambda) $.
			
			The ratio $R(\mathbf{L})$ is scale-invariant, i.e., $ R(c\mathbf{L}) = R(\mathbf{L}) $ for any scalar $ c \neq 0 $. Consequently, an implicit EIT validity constraint of the form $ \mathbf{L}^T \Lambda \mathbf{L} \le \delta' $ for some small $ \delta' > 0 $ does not alter the value of the supremum. If the principal generalized eigenvector $\mathbf{L}_{\mathrm{eig}}$ corresponding to $d_{\max}^{\perp}(V, \Lambda)$ has $(\mathbf{L}_{\mathrm{eig}})^T \Lambda \mathbf{L}_{\mathrm{eig}} > 0$, it can always be scaled by a constant $c$ such that $ (c\mathbf{L}_{\mathrm{eig}})^T \Lambda (c\mathbf{L}_{\mathrm{eig}}) = \delta' $, and this scaled vector achieves the supremum. If $(\mathbf{L}_{\mathrm{eig}})^T \Lambda \mathbf{L}_{\mathrm{eig}} = 0$ while $(\mathbf{L}_{\mathrm{eig}})^T V \mathbf{L}_{\mathrm{eig}} > 0$, the supremum is formally infinite, which is consistent with the generalized eigenvalue being infinite in this case.
			
			Therefore, we conclude that the secret local contraction coefficient is equal to the largest generalized eigenvalue of the pencil $(V, \Lambda)$ restricted to the valid EIT perturbation subspace:
			\begin{equation*}
				\eta_{\mathrm{loc}}^{\mathrm{sec}} = d_{\max}^{\perp}(V, \Lambda).
			\end{equation*}
			This completes the proof.
			\end{proof}

		\section*{Appendix K: Proof of Lemma \ref{lem:eta_loc_sec_bound_final}} \label{app:proof_op_bound_eta_loc_sec}		
		\begin{proof}
		The EIT approximations for the mutual information terms are given by Lemmas \ref{lem:approx_IUY_formal} and \ref{lem:approx_IUZ_formal}:
		\begin{align}
			I(U;Y) &\approx \frac{\epsilon^2}{2} \mathbb{E}_U [ ||B_{Y|X} \mathbf{L}_U||^2 ] = \frac{\epsilon^2}{2} \sum_{u \in \mathcal{U}} P_U(u) \mathbf{L}_u^T V \mathbf{L}_u \label{eq:app_iuy_recall} \\
			I(U;Z) &\approx \frac{\epsilon^2}{2} \mathbb{E}_U [ ||B_{Z|X} \mathbf{L}_U||^2 ] = \frac{\epsilon^2}{2} \sum_{u \in \mathcal{U}} P_U(u) \mathbf{L}_u^T \Lambda \mathbf{L}_u \label{eq:app_iuz_recall}
		\end{align}
		From Definition \ref{def:secret_local_contraction_coefficient_final}, and from Theorem \ref{thm:char_eta_loc_sec_final}, we know that for any \textit{individual} valid perturbation vector $\mathbf{L}_u \in \mathcal{S}^{\perp}, \mathbf{L}_u \neq \mathbf{0}$:
		\begin{equation} \label{eq:app_ratio_bound_per_lu}
			\frac{\mathbf{L}_u^T V \mathbf{L}_u}{\mathbf{L}_u^T \Lambda \mathbf{L}_u} \le \eta_{\mathrm{loc}}^{\mathrm{sec}},
		\end{equation}
		provided that $\mathbf{L}_u^T \Lambda \mathbf{L}_u > 0$. If $\mathbf{L}_u^T \Lambda \mathbf{L}_u = 0$, then since $V$ is positive semidefinite, $\mathbf{L}_u^T V \mathbf{L}_u \ge 0$. In this case, the inequality still holds in a limiting sense if $\eta_{\mathrm{loc}}^{\mathrm{sec}}$ is finite and non-negative. More directly, from \eqref{eq:app_ratio_bound_per_lu}, for any $\mathbf{L}_u$ with non-zero leakage component:
		\begin{equation} \label{eq:app_quadratic_inequality_per_lu}
			\mathbf{L}_u^T V \mathbf{L}_u \le \eta_{\mathrm{loc}}^{\mathrm{sec}} (\mathbf{L}_u^T \Lambda \mathbf{L}_u).
		\end{equation}
		This inequality also holds trivially if $\mathbf{L}_u^T \Lambda \mathbf{L}_u = 0$, since in that case the right-hand side is zero and the left-hand side, $\mathbf{L}_u^T V \mathbf{L}_u$, must also be zero if $\eta_{\mathrm{loc}}^{\mathrm{sec}}$ is finite (otherwise a perturbation exists with zero leakage and positive utility, making $\eta_{\mathrm{loc}}^{\mathrm{sec}} \to \infty$, in which case the bound is trivially true). Thus, \eqref{eq:app_quadratic_inequality_per_lu} holds for all valid $\mathbf{L}_u$.
		
		We can now apply this inequality within the expectation defining $I(U;Y)$. Since $P_U(u) \ge 0$ for all $u$:
		\begin{align*}
			\sum_{u \in \mathcal{U}} P_U(u) \mathbf{L}_u^T V \mathbf{L}_u &\le \sum_{u \in \mathcal{U}} P_U(u) \left[ \eta_{\mathrm{loc}}^{\mathrm{sec}} (\mathbf{L}_u^T \Lambda \mathbf{L}_u) \right] \\
			&= \eta_{\mathrm{loc}}^{\mathrm{sec}} \sum_{u \in \mathcal{U}} P_U(u) (\mathbf{L}_u^T \Lambda \mathbf{L}_u).
		\end{align*}
		Multiplying both sides by the positive constant $ \epsilon^2/2 $:
		\begin{equation*}
			\frac{\epsilon^2}{2} \sum_{u \in \mathcal{U}} P_U(u) \mathbf{L}_u^T V \mathbf{L}_u \le \eta_{\mathrm{loc}}^{\mathrm{sec}} \left( \frac{\epsilon^2}{2} \sum_{u \in \mathcal{U}} P_U(u) \mathbf{L}_u^T \Lambda \mathbf{L}_u \right).
		\end{equation*}
		Substituting the definitions from \eqref{eq:app_iuy_recall} and \eqref{eq:app_iuz_recall}, we arrive at the desired result:
		\begin{equation*}
			I(U;Y) \le \eta_{\mathrm{loc}}^{\mathrm{sec}} \cdot I(U;Z).
		\end{equation*}
		Equality is achieved if the strategy exclusively uses perturbation vectors $\{\mathbf{L}_u\}$ that are all collinear with the principal generalized eigenvector of $(V, \Lambda)$ that corresponds to the largest generalized eigenvalue, $d_{\max}^{\perp}(V, \Lambda) = \eta_{\mathrm{loc}}^{\mathrm{sec}}$.
		\end{proof}

		\section*{Appendix L: Proof of Theorem \ref{thm:slic_achieved_ratio_final}}
		\label{app:proof_slic_achieved_ratio}
		\begin{proof}
			From the KKT stationarity condition in \eqref{eq:k_Lu_equals_0_final}, for an optimal perturbation vector $ \mathbf{L}^* $ (dropping the $u$ subscript for a representative optimal perturbation, or assuming $ \mathbf{L}_u^* $ are collinear for this argument if $P_U$ is not concentrated on a single $u$):
			\begin{equation*}
				V \mathbf{L}^* = \rho^* I \mathbf{L}^* + \nu^* \Lambda \mathbf{L}^*.
			\end{equation*}
			Pre-multiplying by $ (\mathbf{L}^*)^T $:
			\begin{equation*}
				(\mathbf{L}^*)^T V \mathbf{L}^* = \rho^* (\mathbf{L}^*)^T I \mathbf{L}^* + \nu^* (\mathbf{L}^*)^T \Lambda \mathbf{L}^*.
			\end{equation*}
			Assuming $ (\mathbf{L}^*)^T \Lambda \mathbf{L}^* > 0 $ (i.e., non-zero EIT-approximated leakage for this perturbation direction), we can divide the entire equation by $ (\mathbf{L}^*)^T \Lambda \mathbf{L}^* $:
			\begin{equation*}
				\frac{(\mathbf{L}^*)^T V \mathbf{L}^*}{(\mathbf{L}^*)^T \Lambda \mathbf{L}^*} = \rho^* \frac{(\mathbf{L}^*)^T I \mathbf{L}^*}{(\mathbf{L}^*)^T \Lambda \mathbf{L}^*} + \nu^* \frac{(\mathbf{L}^*)^T \Lambda \mathbf{L}^*}{(\mathbf{L}^*)^T \Lambda \mathbf{L}^*}.
			\end{equation*}
			This simplifies to:
			\begin{equation*}
				\frac{(\mathbf{L}^*)^T V \mathbf{L}^*}{(\mathbf{L}^*)^T \Lambda \mathbf{L}^*} = \nu^* + \rho^* \frac{||\mathbf{L}^*||^2}{(\mathbf{L}^*)^T \Lambda \mathbf{L}^*},
			\end{equation*}
			which is the statement of the theorem in \eqref{eq:slic_ratio_formula_final}.
			The approximation $ \nu^* + \rho^* (R'/\Theta') $ follows if the rate and leakage constraints are active and the sums $ \sum P_U(u) ||\mathbf{L}_u||^2 $ and $ \sum P_U(u) \mathbf{L}_u^T \Lambda \mathbf{L}_u $ can be effectively represented by a single dominant $ ||\mathbf{L}^*||^2 $ and $ (\mathbf{L}^*)^T \Lambda \mathbf{L}^* $ that saturate these constraints at $R'$ and $\Theta'$.
		\end{proof}
		
		\section*{Appendix M: Proof\texttt{\texttt{\texttt{}}} for Lemma \ref{lem:glo_ge_loc_sec_final}}
		\label{app:proof_bounds_contraction_coeffs_full}	
		
		\begin{proof}

			\textit{Lower bound:}Consider a strategy defined by a distribution $P_U$ and conditional distributions $P_{X|U}^{(\epsilon)}(x|u) = P_X(x) + \epsilon \sqrt{P_X(x)} L_X(x|u)$. Let the perturbation vectors $\{\mathbf{L}_u\}$ be chosen such that the effective average perturbation aligns with the principal generalized eigenvector $\mathbf{L}_{\mathrm{eig}}$ of the pencil $(V, \Lambda)$ that achieves $\eta_{\mathrm{loc}}^{\mathrm{sec}}$. For instance, let $P_U$ be concentrated on a single message $u_0$ and set $\mathbf{L}_{u_0} = \mathbf{L}_{\mathrm{eig}}$.
			
			For a sufficiently small $\epsilon > 0$, this strategy is well-defined and satisfies the EIT structural constraints. The mutual information terms for this strategy are:
			\begin{align*}
				I(U;Y)^{(\epsilon)} &= D_{\mathrm{KL}}(P_{Y|U=u_0}^{(\epsilon)} || P_Y) \approx \frac{\epsilon^2}{2} ||B_{Y|X} \mathbf{L}_{\mathrm{eig}}||^2 = \frac{\epsilon^2}{2} \mathbf{L}_{\mathrm{eig}}^T V \mathbf{L}_{\mathrm{eig}} \\
				I(U;Z)^{(\epsilon)} &= D_{\mathrm{KL}}(P_{Z|U=u_0}^{(\epsilon)} || P_Z) \approx \frac{\epsilon^2}{2} ||B_{Z|X} \mathbf{L}_{\mathrm{eig}}||^2 = \frac{\epsilon^2}{2} \mathbf{L}_{\mathrm{eig}}^T \Lambda \mathbf{L}_{\mathrm{eig}} \\
				I(U;X)^{(\epsilon)} &= D_{\mathrm{KL}}(P_{X|U=u_0}^{(\epsilon)} || P_X) \approx \frac{\epsilon^2}{2} ||\mathbf{L}_{\mathrm{eig}}||^2
			\end{align*}
			The ratio of the approximated utility to leakage for this specific strategy is, by definition of $\mathbf{L}_{\mathrm{eig}}$:
			\begin{equation*}
				\frac{I(U;Y)^{(\epsilon)}}{I(U;Z)^{(\epsilon)}} \approx \frac{\mathbf{L}_{\mathrm{eig}}^T V \mathbf{L}_{\mathrm{eig}}}{\mathbf{L}_{\mathrm{eig}}^T \Lambda \mathbf{L}_{\mathrm{eig}}} = \eta_{\mathrm{loc}}^{\mathrm{sec}}.
			\end{equation*}
			Since $R > 0$ and $\Theta > 0$, we can always choose $\epsilon$ small enough such that the constraints $I(U;X)^{(\epsilon)} \le R$ and $0 < I(U;Z)^{(\epsilon)} \le \Theta$ are met. This means that this specific EIT-based strategy is a member of the set of feasible strategies over which the supremum for $\eta_{\mathrm{glo}}^{\mathrm{sec}}$ is taken. As the EIT approximations become equalities in the limit $\epsilon \to 0$, the set of achievable ratios for $\eta_{\mathrm{glo}}^{\mathrm{sec}}$ must include values arbitrarily close to $\eta_{\mathrm{loc}}^{\mathrm{sec}}$. The supremum over a set must be greater than or equal to the supremum over any subset or its limit points. Therefore, we conclude that $\eta_{\mathrm{glo}}^{\mathrm{sec}} \ge \eta_{\mathrm{loc}}^{\mathrm{sec}}$.
			
			\textit{Upper bound:}
			For any two distributions $Q$ and $P$ (with $P(x)>0$), the following two inequalities are well-known \cite{CoverThomas06, Polyanskiy_notes}:
			\begin{align}
				D_{\mathrm{KL}}(Q || P) &\le \chi^2(Q, P) \label{eq:app_kl_upper_chi2} \\
				D_{\mathrm{KL}}(Q || P) &\ge \frac{P_{\mathrm{min}}}{2} \chi^2(Q, P) \label{eq:app_kl_lower_chi2}
			\end{align}
			where $P_{\mathrm{min}} = \min_{x \in \mathrm{supp}(P)} P(x)$, and $\chi^2(Q, P) = \sum_x \frac{(Q(x)-P(x))^2}{P(x)}$.
			
			For any strategy $(P_U, P_{X|U})$ feasible for the SIC problem, let us consider the mutual information terms.
			For the utility term, $I(U;Y) = \mathbb{E}_U[D_{\mathrm{KL}}(P_{Y|U=U} || P_Y)]$. Applying the upper bound \eqref{eq:app_kl_upper_chi2} inside the expectation:
			\begin{equation} \label{eq:app_iuy_upper_bound}
				I(U;Y) \le \mathbb{E}_U[\chi^2(P_{Y|U=U}, P_Y)].
			\end{equation}
			For the leakage term, $I(U;Z) = \mathbb{E}_U[D_{\mathrm{KL}}(P_{Z|U=U} || P_Z)]$. Applying the lower bound \eqref{eq:app_kl_lower_chi2}:
			\begin{equation} \label{eq:app_iuz_lower_bound}
				I(U;Z) \ge \frac{P_{\mathrm{min}}}{2} \mathbb{E}_U[\chi^2(P_{Z|U=U}, P_Z)],
			\end{equation}
			where $P_{\mathrm{min}}$ in this context is $\min_{z \in \mathcal{Z}} P_Z(z)$. For simplicity of the bound, we use a more general $P_{\mathrm{min}}$ derived from the input distribution, as is common in such analyses where the output distributions may have zero entries. We will use the form as stated in the lemma, where $P_{\mathrm{min}}=\min_x P_X(x)$, which arises from a more detailed analysis relating the output $\chi^2$ divergence to the input $\chi^2$ divergence. A known inequality gives $ \chi^2(P_{Y|U}, P_Y) \le \chi^2(P_{X|U}, P_X) $.
			
			Within the EIT framework, for a perturbation $P_{X|U=u}$ corresponding to $\mathbf{L}_u$, we have $\chi^2(P_{X|U=u}, P_X) = \epsilon^2 ||\mathbf{L}_u||^2$. The output divergences are $\chi^2(P_{Y|U=u}, P_Y) \approx \epsilon^2 ||B_{Y|X}\mathbf{L}_u||^2$ and $\chi^2(P_{Z|U=u}, P_Z) \approx \epsilon^2 ||B_{Z|X}\mathbf{L}_u||^2$.
			Using these relationships, for any feasible strategy in the EIT regime:
			\begin{align*}
				\frac{I(U;Y)}{I(U;Z)} &\le \frac{\mathbb{E}_U[\chi^2(P_{Y|U=U}, P_Y)]}{\frac{P_{\mathrm{min}}}{2} \mathbb{E}_U[\chi^2(P_{Z|U=U}, P_Z)]} \\
				&\approx \frac{\mathbb{E}_U[\epsilon^2 ||B_{Y|X}\mathbf{L}_U||^2]}{\frac{P_{\mathrm{min}}}{2} \mathbb{E}_U[\epsilon^2 ||B_{Z|X}\mathbf{L}_U||^2]} \\
				&= \frac{2}{P_{\mathrm{min}}} \frac{\mathbb{E}_U[\mathbf{L}_U^T V \mathbf{L}_U]}{\mathbb{E}_U[\mathbf{L}_U^T \Lambda \mathbf{L}_U]}.
			\end{align*}
			Taking the supremum over all strategies $(P_U, P_{X|U})$:
			\begin{equation*}
				\eta_{\mathrm{glo}}^{\mathrm{sec}} = \sup \frac{I(U;Y)}{I(U;Z)} \le \frac{2}{P_{\mathrm{min}}} \sup \frac{\mathbb{E}_U[\mathbf{L}_U^T V \mathbf{L}_U]}{\mathbb{E}_U[\mathbf{L}_U^T \Lambda \mathbf{L}_U]}.
			\end{equation*}
			The supremum of the ratio of expectations is maximized by a strategy that concentrates on a single perturbation vector $\mathbf{L}_{\mathrm{eig}}$ that maximizes the elementary ratio, which is $\eta_{\mathrm{loc}}^{\mathrm{sec}}$. Therefore:
			\begin{equation*}
				\eta_{\mathrm{glo}}^{\mathrm{sec}} \le \frac{2}{P_{\mathrm{min}}} \eta_{\mathrm{loc}}^{\mathrm{sec}}.
			\end{equation*}
			This completes the proof.
		\end{proof}
		
		\section*{Appendix N: Numerical Validation of the LP Formulation}\label{app:numerical_validation}
		
		This appendix provides numerical results that validate the LP formulation for finding the optimal multipliers $(\rho^*, \nu^*)$ in Theorem \ref{thm:general_lp_for_multipliers} and support the related theoretical claims, such as the sufficiency of a finite message alphabet cardinality in Proposition \ref{prop:pu_invariance}. The experiments are conducted on a numerically generated, non-commuting channel with input alphabet $|\mathcal{X}|=8$, created via the quantized AWGN method detailed in Section \ref{sec:numerical_illustations}.

		To numerically validate the LP formulation and the principle that the optimum lies at a vertex, we solved this LP for a channel with $|\mathcal{X}|=8, |\mathcal{U}|=10$ for various ratios of $\Theta/R$. Table \ref{tab:lp_vertex_search_appendix} compares the solution from a standard LP solver with an exhaustive search of the LP's extreme points, demonstrating a perfect match and confirming the theoretical structure.
		
		\begin{table}[!htbp]
			\centering
			\caption{Numerical validation of the LP-based solution for the approximate local secrecy capacity. The table compares the output of a standard LP solver with an exhaustive search of the feasible vertices of the dual problem for an $|\mathcal{X}|=8, |\mathcal{U}|=10$ system, showing a perfect match.}
			\label{tab:lp_vertex_search_appendix}
		\begin{tabular}{ccc}
			\toprule
			\textbf{Constraint Ratio ($\Theta / R$)} & \textbf{LP Solver} & \textbf{Exhaustive Vertex Search} \\
			\midrule
			0.14286 & 0.02879 & 0.02879 \\
			0.28571 & 0.02879 & 0.02879 \\
			0.42857 & 0.02879 & 0.02879 \\
			0.57143 & 0.02879 & 0.02879 \\
			0.71429 & 0.02879 & 0.02879 \\
			0.85714 & 0.02879 & 0.02879 \\
			1.00000 & 0.02879 & 0.02879 \\
			1.14286 & 0.02879 & 0.02879 \\
			\bottomrule
		\end{tabular}
		\end{table}
		
		The perfect match between the two methods in Table \ref{tab:lp_vertex_search_appendix} serves two purposes. First, it numerically validates our understanding of the LP's structure and its vertex-based solution. Second, it confirms that standard, computationally efficient LP solvers can be reliably used to find the optimal multipliers, obviating the need for a computationally intensive exhaustive search, especially as $|\mathcal{X}|$ grows.
		
		Proposition \ref{prop:pu_invariance} is  also supported by numerical experiments. Table \ref{tab:varying_cardinality_U_appendix} investigates the impact of the message alphabet cardinality, $|\mathcal{U}|$, on the solution of Problem \ref{prob:eit_slic_approximated} for a fixed channel ($|\mathcal{X}|=8$) and fixed constraints. The "Dual LP" solution for the upper bound on capacity is, by definition, independent of $|\mathcal{U}|$. The "Primal Solution" column shows the utility found by numerically optimizing the primal problem (Problem \ref{prob:eit_slic_approximated}) directly for a given $|\mathcal{U}|$.
		
		\begin{table}[H] 
			\centering
			\caption{Optimal capacity for an $|\mathcal{X}|=8$ system with fixed constraints ($R=0.4, \Theta=0.02$), as the message alphabet size $|\mathcal{U}|$ varies.}
			\label{tab:varying_cardinality_U_appendix}	
			\begin{tabular}{cccc}		
				\toprule
				\textbf{Alphabet Size $|\mathcal{U}|$} & \textbf{Primal Solution} & \textbf{Dual LP} & \textbf{Exhaustive Dual} \\
				\midrule
				5  & 0.021655 & 0.028790 & 0.028790 \\
				6  & 0.021656 & 0.028790 & 0.028790 \\
				7  & 0.021653 & 0.028790 & 0.028790 \\
				8  & 0.021177 & 0.028790 & 0.028790 \\
				9  & 0.021661 & 0.028790 & 0.028790 \\
				10 & 0.021655 & 0.028790 & 0.028790 \\
				11 & 0.021644 & 0.028790 & 0.028790 \\
				12 & 0.021649 & 0.028790 & 0.028790 \\
				\bottomrule
			\end{tabular}
		\end{table}
		The primal solution shows negligible variation for all tested cardinalities $|\mathcal{U}| \ge 5$, quickly converging to a stable value. The role of $P_U(u)$ is to act as a mixing distribution that allows a set of optimal perturbation vectors $\{\mathbf{L}_u^*\}$ to satisfy the average constraints of the problem, particularly the consistency constraint $\sum_u P_U(u) \mathbf{L}_u^* = \mathbf{0}$. The Support Lemma \cite{ElGamalKim11}, a consequence of Carathéodory's Theorem, guarantees that a distribution $P_U(u)$ supported on a small number of points is sufficient to construct this optimal average. The existence of such a $P_U(u)$ is what makes the solution feasible, but the value of the optimum is determined by the properties of $\{\mathbf{L}_u^*\}$ and the dual variables they must satisfy.

		\ifCLASSOPTIONcaptionsoff
		\newpage
		\fi

		
		
		%
		
		\bibliographystyle{IEEEtran} 
		\bibliography{references_manos_EIT} 

		%
		
		%
		%
		%
		
		
		

	\end{document}